\documentclass[12pt]{article}
\usepackage[utf8]{inputenc}
\usepackage[T1]{fontenc}
\usepackage{amsmath, amsfonts, bm, bbm, amssymb, amsthm}
\usepackage{mathtools, mathabx}
\usepackage{circuitikz}
\usetikzlibrary{decorations.pathreplacing}
\usepackage{fullpage}
\usepackage{enumerate}
\usepackage[open, openlevel = 4]{bookmark}
\usepackage{hyperref,url,doi}
\usepackage[all]{nowidow}
\usepackage{mdframed, float, multirow, adjustbox}
\usepackage{listings}
\usepackage{xcolor}
\usepackage{diagbox}
\usepackage{algorithm}
\usepackage{algpseudocode}
\usepackage{diagbox, adjustbox}
\usepackage{booktabs}
\usepackage{siunitx}
\usepackage{subcaption}
\usepackage{appendix}
\usepackage[authoryear, sort&compress]{natbib}
\usepackage{enumitem}
\usepackage{pgfplots}
\pgfplotsset{compat=1.15}
\usetikzlibrary{3d,arrows.meta}

\numberwithin{equation}{section}
\numberwithin{figure}{section}
\numberwithin{table}{section}

\theoremstyle{plain}
\newtheorem{theorem}{Theorem}[section]
\newtheorem{lemma}{Lemma}[section]
\newtheorem{corollary}{Corollary}[section]
\newtheorem{proposition}{Proposition}[section]

\theoremstyle{definition}
\newtheorem{definition}{Definition}[section]

\newtheorem{assumption}{}[section]

\newtheorem{remark}{Remark}[section]

\hypersetup{
	colorlinks = true,
	citecolor = blue,
	linkcolor = red,
}

\newcommand{\pr}[1]{\mathbb{P} \big( #1 \big)}

\newcommand{\spt}[1]{\mathrm{supp}(#1)}

\newcommand{\Real}{\mathbb{R}}
\newcommand{\expect}[1]{\mathbb{E} \left[ #1 \right]}

\newcommand{\leb}{\text{leb}_{\vert \lbrack 0, 1 \rbrack}}
\newcommand{\C}{\mathcal{C}}
\newcommand{\supquant}[1]{\sup_{#1} \lvert F_n^{-1}(u) - \hat{F}_n^{-1}(u) \rvert}

\title{An Optimal Transportation Approach for Improved Confidence Intervals}

\author{Christophe Quentin Valvason$^\dagger$ 
	{\small and  }
	Eustasio del Barrio$^\#$
	{\small and  }
	Stefan Sperlich$^\dagger$ \\[2mm]
	{\small $\dagger$ Geneva School of Economics and Management, University of Geneva,} \\
	{ \small	Bd du Pont d'Arve 40, 1211 Geneva 4, Switzerland}	 
	\footnote{The first and third author acknowledge financial support from
		the project "Uniform- and Post-selection inference for Mixed Parameters", 200021-192345 of the Swiss National Science Foundation.}
	\\[2mm]
	{\small $\#$ IMUVA, Universidad de Valladolid, Valladolid, Spain}	
}

\begin{document}
	\maketitle
	\begin{abstract}
		\noindent
		Optimal transport methods have recently attracted a lot of attention in statistics. Their appeal lies in providing a geometric framework for comparing probability measures, leading to new perspectives on classical problems.	
		A central problem in statistics is the construction of valid confidence sets as fundamental inferential tools in practice. A well-known problem is that for complex problems or relatively small samples, their asymptotic approximations often show poor performance. This suggests to apply optimal transport methods when constructing confidence sets for hard problems to improve their coverage properties. We introduce such a procedure, derive the theoretical framework studying consistency and error bounds for the coverage probability of the resulting intervals. To guarantee feasibility in practice, we propose data-driven choices for our hyper parameters.
		This approach extends classical quantile-based confidence intervals by leveraging optimal couplings to minimize coverage deviations. Simulations demonstrate striking performance in different estimation problems, outperforming standard methods in accuracy and robustness. 
		\\[2mm]	
		Keywords: {optimal transport; confidence intervals for complex problems; improved coverage probability}
	\end{abstract}
	
	\section{Introduction}\label{sec:intro}
	
	
	A central problem in statistical inference is the construction of reliable confidence intervals for a parameter of interest. The classical approach on the real line proceeds through quantile inversion of the estimators' distribution. In practice, the exact finite-sample distribution is unknown such that one has to resort to asymptotic or Monte Carlo approximations. For complex problems or small samples this may give intervals that neither have coverage close to the prescribed nominal level nor the optimal length.
	
	Recently, Optimal Transport (henceforth OT) -based methods have attracted a lot of attention in statistics, e.g.\ for the definition of multivariate quantiles \citep{hallin2021distributions,delbarrio2024nonparametric} 
	and robust estimation \citep{ma2025inference}. 
	The appeal lies in its geometric framework it provides, which offers new perspectives on classical problems. In this spirit, we construct confidence intervals from a known source distribution pushed to the actual inference problem that is too hard for a classical approach.  
	
	Let $\leb$ denote the Lebesgue measure restricted to the unit interval, and let $P$ be the probability measure of interest. On the real line and for any non-negative strictly convex cost function $h(x - y)$, the optimal transport map pushing $\leb$ to $P$ is given by the monotone rearrangement \citep[Theorem~2.9]{santambrogio2015optimal}
	\begin{equation}\label{eq:1-optimal_map_uniform}
		T(u) \coloneqq F_P^{-1}(u), \quad \text{for } \leb \text{-almost all } 0 \varleq u \varleq 1,
	\end{equation}
	where $F_P(t) \coloneqq P\big((-\infty, t]\big)$ denotes the cumulative distribution function (c.d.f.) of $P$.
	
	Let $X_1,\ldots, X_n$ be i.i.d.\ random variables, and $\hat{\theta}_n(X_1, \ldots, X_n) = \hat{\theta}_n$ an unbiased estimator of a parameter $\theta_0 \in \Real$. Define $P_n$ as the probability distribution of 
	\[	   \vartheta_n \coloneqq r_n(\hat{\theta}_n - \theta_0) ,   \]
	where $r_n$ is a deterministic sequence which is allowed to go to infinity.
	If $P_n$ were known exactly and absolutely continuous with respect to the Lebesgue measure, a two-sided confidence interval could be obtained by quantile inversion,
	\begin{equation} \label{eq-CI-quantinv}
		\mathbb{P}\Big(
		\hat{\theta}_n - r_n^{-1} F_n^{-1}(1 - \alpha/2) \varleq 
		\theta_0 \varleq \hat{\theta}_n - r_n^{-1} F_n^{-1}(\alpha/2)
		\Big) = 1 - \alpha ,
	\end{equation}
	where $\alpha \in (0,1)$ denotes the prescribed nominal level, and $F_n$ the c.d.f.\ of $P_n$.
	Thus, the construction of valid confidence intervals reduces to the knowledge of the quantile function or, equivalently, to the  OT map pushing $\leb$ to $P_n$.
	
	A natural extension of the above construction is obtained by allowing for other source distributions than $\leb$. Let $Q$ be any probability measure absolutely continuous with respect to Lebesgue measure and with c.d.f.\ $F_Q$.
	Then there exists a unique  OT map pushing $Q$ to $P_n$ given by
	$	T_{Q \to P_n}(x) \coloneqq F_{n}^{-1}\big( F_Q(x) \big) , \ $ 
		\text{for $Q$-almost all }$x \in \spt{Q}$,  
   where $F_{n}^{-1}$ is the generalized inverse of $F_{n}$ if $P_n$ gives mass to atoms. 
	For brevity, since the target distribution $P_n$ will remain fixed throughout, we write $T_Q = T_{Q \to P_n}$. 
	
	Define the central quantile set under $Q$ as 
	\[
	\C = [ F_Q^{-1}(\alpha / 2), F_Q^{-1}(1 - \alpha/2) ].
	\]
	A $(1 - \alpha)$ confidence interval is obtained as
	\begin{equation}\label{eq:1-confidence_interval_population_ac}
		\mathcal{I}_{Q} \coloneqq 
		\Big[  \hat{\theta}_n - r_n^{-1}T_Q\big(F_Q^{-1}(1 - \alpha/2)\big),
		\hat{\theta}_n - r_n^{-1}T_Q\big(F_Q^{-1}(\alpha / 2)\big)		\Big]
	\end{equation}
with
	$	\pr{\mathcal{I}_Q \ni \theta_0} =
		P_n\big(T_Q(\C)\big) \vargeq
		Q\big(\C\big) = 1 - \alpha ,\
	$
 	 following from the monotonicity of $T_Q$.
	\begin{remark}
		In principle, any admissible set $\C$ satisfying $Q(\C) \vargeq 1 - \alpha$ can be used to construct a confidence interval. 
		In practice, the symmetric choice 
		\[
		\C = [F_Q^{-1}(\alpha/2),\, F_Q^{-1}(1 - \alpha/2)],
		\]
		is adopted mainly for convenience and interpretability.
	\end{remark}
	
	On the real line, the set of absolutely continuous measures is compatible with the Lebesgue measure~\citep[Definition~2.3.1]{panaretos2020invitation}. For $Q = \leb$ the  OT map is the quantile function of $P_n$. 
	Then the 'original' confidence interval is recovered, and its property depends only on the estimated or assumed $P_n$.
	
	While this classical quantile-based interval is exact if $P_n$ is known, it is typically not the case
	in finite samples, and approximations typically used may lead to coverage deviations. 
	This motivates considering alternative $Q$ as
	under mild assumptions on $P_n$, a source measure supported on a finite set of atoms allows the construction of confidence intervals with coverage probabilities closer to the nominal level, especially for small $n$.

The use of discrete source distributions is motivated by statistical and computational considerations.
From a statistical viewpoint, discretization acts as a regularization mechanism for an intrinsically ill-posed problem.
In small sample settings, classical confidence intervals from the empirical distributions can easily exhibit incorrect coverage.
Constraining the source measure to have finitely many atoms stabilizes the construction and yields confidence intervals with controlled properties.
	
	\begin{figure}[htb]
		\centering \vspace{-0.25cm}
		\includegraphics[width=0.75\linewidth,height=4.5cm]{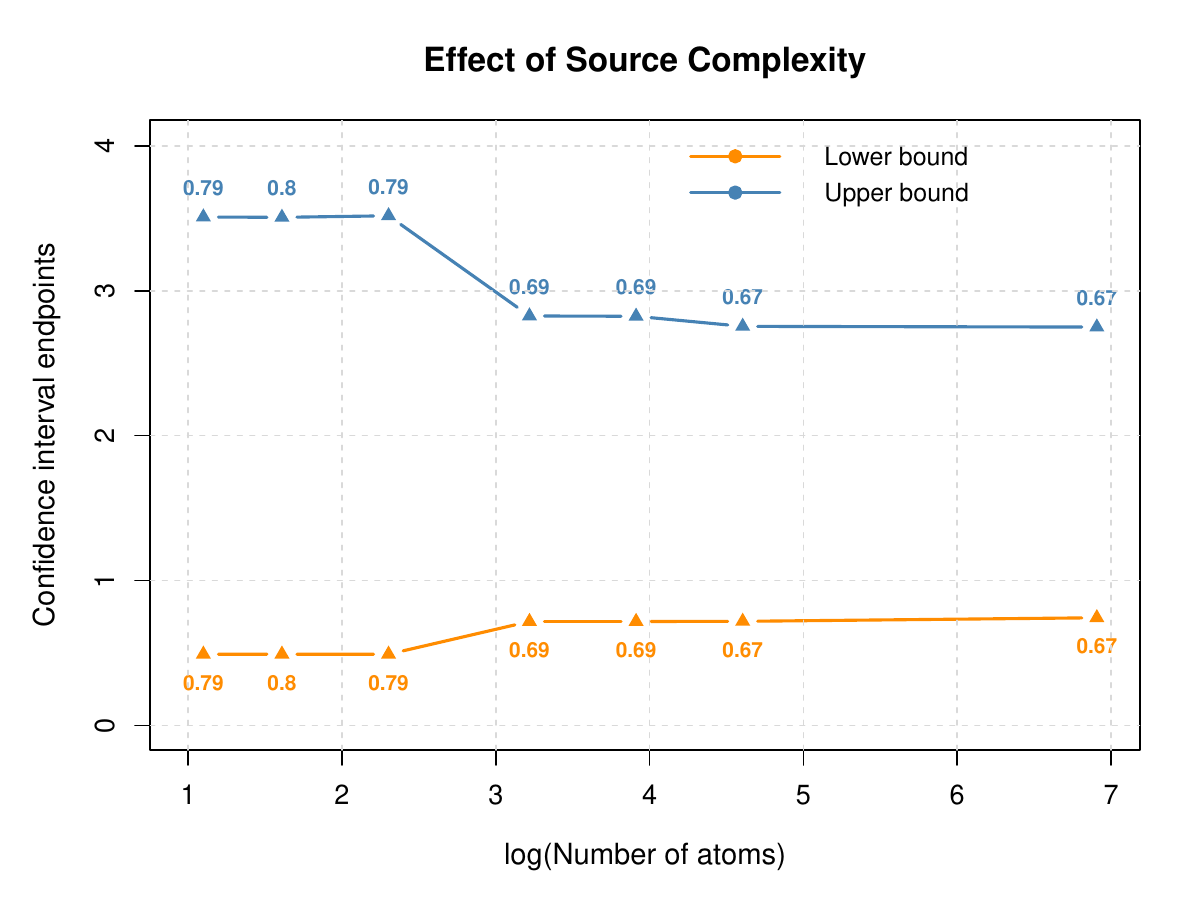}
		\vspace{-0.5cm}
		\caption{Average  OT-based confidence interval endpoints (for the mean of log-normal distributed $X_i$ with $n = 5$) as a function of $M$ of support points of $Q$. Increasing $M$ reduces the regularization induced by $Q$.}
		\label{fig:M_complexity}
	\end{figure}

	Figure~\ref{fig:M_complexity} illustrates the effect of the number of support points $M$. 
	For small $M$, the  OT-based interval restricts the rearrangements of probability mass, which induces a strong regularization effect. 
	As $M$ increases, $Q$ becomes more flexible and the interval endpoints progressively stabilize. 
	This transition is particularly visible when the horizontal axis is represented on a logarithmic scale: beyond a moderate value of $\log(M)$, further increases in the number of atoms have little impact on the interval bounds. 

\begin{figure}[htb]
	\centering \vspace{-0.25cm}
	\includegraphics[width=0.85\linewidth,height=7.75cm]{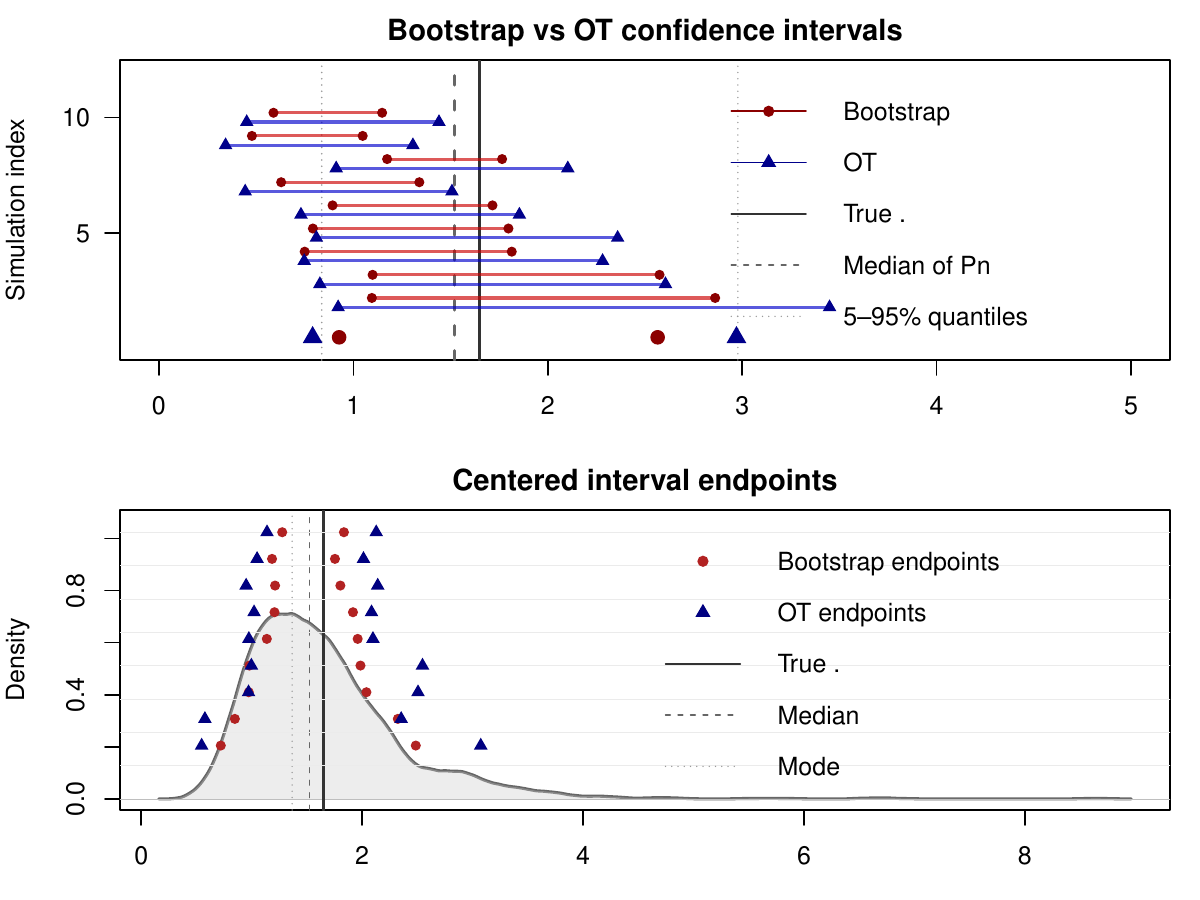}
	\vspace{-0.5cm}
	\caption{Bootstrap and OT based confidence intervals for the mean of a log-normal distribution from independent samples of $n=5$; (top) the triangles (OT) and circles (bootstrap) without segment represent the mean endpoints over 500 confidence intervals, (bottom) intervals centered at the median. 
		}\label{fig:OT_vs_Boot_regularization}
		\vspace{-0.15cm}
\end{figure}

	OT plays a central role here as
	it provides the mathematical tools that connect discrete regularization with sampling distribution while preserving exact mass constraints being crucial for obtaining finite-sample coverage guarantees and non-asymptotic bounds.
	Figure~\ref{fig:OT_vs_Boot_regularization} illustrates this by comparing bootstrap and  OT-based intervals across repeated samples.
	We simulated 500 small samples ($n=5$) from a log-normal distribution and estimated the mean. The number of atoms $M=10$ of $Q$ is fixed, but their position may change.
	The top panel shows nine of these intervals plus the average interval bounds of all 500, while the bottom panel displays their bounds (centered around the median) relative to the density of the sampling distribution. 
	The average of the interval bounds of the  OT based intervals meet almost exactly the desired quantiles. Moreover, these intervals adapt to the asymmetric geometry of the sampling distribution. 
	This reflects the geometric nature of the  OT framework: the intervals are obtained through global mass rearrangements between distributions rather than through local approximations.  
	Therefore they respect the geometry of the underlying distribution adapting to skewness and tail behavior, while stabilizing the resulting interval endpoints.
	This is closely related to the theory of statistical inverse problems, where discretization constraints are commonly used to regularize ill-posed estimation tasks
	\citep{kaipio2005statistical}.
	Optimal transport provides the missing link that enables a mathematically precise treatment of these ideas in a finite-sample setting.

	
	When $Q$ is purely atomic and the target distribution $P_n$ is absolutely continuous, the corresponding transport problem falls into the class of \emph{semi-discrete optimal transport} \citep[Ch.~5]{peyre2020computational}. 
	In this work we concentrate on the problem where $P_n$ is replaced by an empirical or bootstrap distribution $\hat P_n$, so that both measures are discrete. 
	Consequently, the relevant object is a discrete--discrete optimal coupling, and 
	the problem reduces to a linear program whose complexity depends on the support of $Q$.
	
		
\paragraph{Related Works}
Recent advances in  OT have inspired profound developments in multivariate quantile theory. In particular, \citet{chernozhukov2017monge} introduced a center-outward quantile function derived via  OT, enabling multivariate generalizations of ranks and signs. The work of \citet{hallin2021distributions} extended this approach by formalizing multivariate distribution and quantile functions based on  OT, establishing Glivenko–Cantelli–type uniform consistency of empirical rank maps. Further, \citet{ghosal2022multivariate} continued in this line by analyzing the rates of convergence of empirical  OT-based quantiles and ranks, and by leveraging them for some nonparametric testing problems. Similarly, \citet{delbarrio2024nonparametric} developed nonparametric framework for quantile regression that accommodates multivariate response variables by leveraging center-outward quantile functions.

More broadly,  OT has been used in statistical inference through Wasserstein distances by various authors; see \citep{delbarrio1999central,delbarrio1999tests,fournier2015rate,sommerfeld2017inference,weed2019sharp, zhang2025distributionally}, and \citep{panaretos2019statistical} for a recent survey. Other recent contributions to the theory of empirical  OT include minimax rates of estimation and plug-in estimators of smooth  OT maps \citep{hutter2021minimax,manole2024plugin,ponnoprat2025minimax} as well as uniform confidence bands in dimension one \citep{ponnoprat2024uniform}.  
	
Our method extends classical confidence interval estimation by leveraging an  OT framework to optimize the procedure coverage. 
We provide rigorous finite-sample guarantees, showing that the resulting intervals achieve coverage probability close to the nominal level. 
In this sense, our work is not a contribution to  OT per se 
but an application of this methodology to a specific but highly relevant problem in statistical practices.
Therefore it differs substantially in aim and methodology from the existing literature on  OT. 
Our objective is to obtain significant improvements in coverage and efficiency for hard problems, especially when data a sparse.
	
	
	Although our primary focus is on finite-sample guarantees, we first establish large-sample results. In particular, Theorem~\ref{th:3-hausdorff_consistency} shows that, under mild conditions, the empirical confidence interval constructed via  OT converges in probability to its population counterpart.
%
	Then we derive explicit finite-sample lower bounds for the coverage probability, highlighting the dependence on the choice of $Q$. 
Under mild assumptions, appropriate selection of $Q$ yields coverage errors that decay exponentially with the sample size. Additionally, we provide a finite-sample analysis of the confidence interval length, relating the empirical length to its population analogue.
	
	Based on the theoretical results, we propose practical procedures to select the source distribution from observed data. We focus on discretized Beta distributions and introduce a data-driven method to select both the parameters and the discretization, optimizing the finite-sample performance of the resulting confidence intervals.
%
We illustrate the finite-sample performance of the proposed method through extensive Monte Carlo simulations and an application. Coverage probabilities, including averages, variations, and mean squared errors, are evaluated for two representative estimation problems across varying sample sizes. Our method consistently demonstrates improved coverage accuracy and stability, providing insight into the behavior predicted by the theory.
	
	The remainder of the paper is organized as follows. In the next section, we introduce the problem, notation, and key facts from  OT theory. Section~\ref{sec:ECIbOT} presents our construction of confidence intervals.
Section~\ref{sec:miscel} provides analytical results on interval length and proposes data-driven selection methods for all hyperparameters. Section~\ref{sec:montecarlo} presents simulation studies and an application for illustrating the practical benefits of our approach. Section~\ref{sec:concl} concludes with a discussion.

	\section{The Statistical Problem and Optimal Transport} \label{sec:optimaltransport}
	
		\subsection{Optimal Transport Revisited}
	
		Before presenting the main results, we review several fundamental facts from  OT theory.
		For a comprehensive treatment of the theory, we refer to the seminal monographs~\citep{villani2003topics,villani2009old,panaretos2020invitation} 
		and \citep{santambrogio2015optimal}.
		
		We consider an OT problem on $\Real$ with quadratic cost functions $c(x,y) = \lvert x - y \rvert^2 / 2$.
		Let $\mathcal{P}(\Real)$ denote the set of Borel probability measures on $\Real$, and $\mathcal{B}(\Real)$ the Borel $\sigma$-algebra. 
		Denote by $\mathcal{P}_2(\Real) \subset \mathcal{P(\Real)}$ the set of probability measures with finite second moment
		\[
		\mathcal{P}_2(\Real) \coloneqq \{\mu \in \mathcal{P}(\Real) : \int_\Real \lvert x \rvert^2 d\mu(x) < \infty \}
		\]
		and consider two such measures $P,Q \in \mathcal{P}_2(\Real)$.
		The so-called Monge's problem 
		seeks a Borel measurable map $T: \mathcal{\Real} \rightarrow \mathcal{\Real}$ minimizing the transportation problem	
		\begin{equation}\label{eq:02-monge_total_cost}
				\inf_{T: T_\#P = Q} M [ T ] \coloneqq \frac{1}{2}\int_{\Real} \lvert x - T(x) \rvert^2 dP(x),
			\end{equation}
		where $T_{\#}P$ denotes the push-forward of the probability measure $P$ under the mapping $T$,
		\[ T_{\#}P(A) = P\left(T^{-1}(A)\right) , \qquad	\text{for all } A \in \mathcal{B}(\Real).\]
		In general, minimizing \eqref{eq:02-monge_total_cost} is difficult or even impossible, i.e., the admissible set of maps may be empty.
		This can be solved by allowing the transported mass to be split among different locations; this is the approach of \citep{kantorovitch1942transportation}. He sought a \emph{transference plan} $\gamma$, i.e.\ a joint probability measure on $\big(\Real, \mathcal{B}(\Real)\big) \times \big(\Real, \mathcal{B}(\Real)\big)$ such that its marginals are $P$ and $Q$.
		Then the OT problem becomes
		\begin{equation}\label{eq:03-kantorovich_total_cost}
				\inf_{\gamma \in \Gamma(P,Q)} K [ \gamma ] \coloneqq  \frac{1}{2}\int_{\Real \times \Real} \lvert x - y \rvert^2 d\gamma(x,y)  .	
			\end{equation}
		where $\Gamma(P,Q)$ is the set of joint probability measures with marginals $P, Q \in \mathcal{P}_2(\Real)$. This set is never empty as it always contains the product probability measure $P \otimes Q$. Unlike to Monge's problem, this admits a general existence result \citep[Theorem~4.1]{villani2009old}.
			
		The functional $K [ \gamma ]$ is linear in $\gamma$ and the marginal constraints are convex. Consequently, the Kantorovich's problem is an infinite-dimensional convex linear programming problem that admits a dual formulation,
		\begin{eqnarray}\label{eq:03-dual_functional}
			& \max_{(\varphi, \psi) \in \Phi_c} J [ \varphi, \psi ] \coloneqq
				\int_{\Real} \varphi(x) dP(x) + \int_{\Real} \psi(y) dQ(y) 
				\ , \ \mbox{for} &  
		\\ &
		\Phi_c := \Big\{(\varphi, \psi) \in C_b(\Real) \times C_b(\Real) : \varphi(x) + \psi(y) \varleq \frac{1}{2} \lvert x - y \rvert^2\Big\}  . & \nonumber
		\end{eqnarray}
		It follows from results in convex analysis that if $\varphi$ is optimal for the dual problem, it is lower-semicontinuous and convex.
		Moreover, $\psi$ can be fixed as the Legendre transform $\varphi^{\ast}$ of $\varphi$ \citep[Chapter~2.]{villani2003topics} giving the optimal pair $(\varphi, \varphi^{\ast})$.
		
		\begin{definition}[Cyclical Monotonicity]
				A subset $S \subset \Real \times \Real$ is said to be \emph{cyclically monotone}, if for all $k \in \mathbb{N}$ and for any finite family $(x_1, y_1), \ldots, (x_k, y_k)$ of points in $S$ 
				\begin{equation}\label{eq:03-cyclical_monotonicity}
						\sum_{i = 1}^{k}\frac{\lvert x_i - y_i \rvert^2}{2} \varleq  
						\sum_{i = 1}^{k} \frac{\lvert x_i - y_{i + 1} \rvert^2}{2},
					\end{equation}
				holds with the convention $y_{k+1} = y_1$.
				In dimension one this reduces to plain monotonicity.
			\end{definition}
		By \citep[Theorem~24.8]{rockafellar1970convex}, a set $S$ is cyclically monotone set iff it is contained in the graph of the subdifferential of a lower semicontinuous convex function $f \not \equiv + \infty$, i.e.\
		$
		S \subset \mathrm{Graph}(\partial f) = \bigcup_{x \in \Real} \{x\} \times \partial f(x),
		$
		where the subdifferential at $x \in \Real$ is defined as
		\begin{equation} \label{eq:partialf}
				\partial f(x) \coloneqq
				\Big\{y \in \Real : f(z) \vargeq f(x) + y(z - x), \quad \forall z \in \Real\Big\} .
			\end{equation}
		If $f$ is differentiable at $x_0 \in \Real$, the subdifferential reduces to the singleton $\partial f(x_0) = \{f'(x_0)\}$.
		It follows that if $\gamma \in \Gamma(P,Q)$ is an optimal coupling, then its support is contained in a cyclically monotone set, which is contained in the subdifferential of a lower semicontinuous convex potential $\varphi$,	i.e.
		$ 
		y \in \partial \varphi(x) $ \text{for $\gamma$-almost all } $(x,y) \in \Real\times\Real .$ 
	
	\subsection{Optimal Transport for Confidence Sets and Intervals}
	
	For the sake of presentation, we concentrate on two-sided confidence intervals with wanted coverage probability of $(1-\alpha)$, $\alpha \in (0,1)$.
	Procedure and theoretical results remain valid for one-sided confidence 
	intervals under some straightforward modifications.
	
	Let $Q \in \mathcal{P}_2(\Real)$ be a purely atomic probability measure with finite support $\spt{Q} = \{x_1, \ldots, x_M\}$, such that for the Dirac measure $\delta_{x_i}$
	\[	
	Q = \sum_{i = 1}^{M} q_i \delta_{x_i} ,  
	\quad 
	q_i > 0 \text{ and } \sum_{i = 1}^{M} q_i = 1     
	\]
	and $x_1 < x_2 < \ldots < x_M$ for simplicity but without loss of generality.
	For constructing a confidence interval, fix a finite admissible set $\C \subset \spt{Q}$ such that $Q(\C) \vargeq 1 - \alpha$.
	
	For cost function $c(x,y) = \lvert x - y \rvert^2 / 2$ consider the optimal coupling $\gamma \in \Gamma(Q,P_n)$ with a convex potential $\varphi_n$ such that the support of $\gamma$ is contained in the graph of the subdifferential $\partial \varphi_n$.
	To obtain a confidence interval in this setting, define the closed interval
	\begin{equation}\label{eq:1-interval_population_geometric}
		\mathcal{J}_{n,Q} \coloneqq 
		\operatorname{conv} \Big\{y \in \Real : \exists x \in \C \text{ with } y \in \partial\varphi_n(x)\Big\},
	\end{equation}
	where $\operatorname{conv}(A)$ denotes the convex hull of $A$. 
	Equivalently, the interval reduces to
	\begin{equation}\label{eq:1-interval_population_analytic}
		\mathcal{J}_{n,Q} = [ D^{-}\varphi_n(\min \C), D^{+}\varphi_n(\max \C) ]  ,
	\end{equation}
	since for convex potential $\varphi_n$, the left and right derivative $D^{-}\varphi_n, D^{+}\varphi_n$ exists everywhere on the interior of its domain, and the subdifferentials satisfy \citep[p.~216]{rockafellar1970convex} 
	\[
	\partial \varphi_n(x) = [ D^{-} \varphi_n(x), D^{+}\varphi_n(x) ].
	\]
	Then, a confidence interval for parameter $\theta_0$ given an unbiased estimator $\hat\theta_n$ (with notation and definitions of Section \ref{sec:intro}) is obtained by inversion, i.e.\
	\begin{equation}\label{eq:1-confidence_interval_population}
		\mathcal{I}_{n,Q} \coloneqq [
		\hat{\theta}_n - r_n^{-1} D^{+}\varphi_n(\max \C),
		\hat{\theta}_n - r_n^{-1} D^{-}\varphi_n(\min \C)	] ,
	\end{equation}
	satisfying as above 
	$
	\pr{\mathcal{I}_{n,Q} \ni \theta_0} =
	P_n\big(\mathcal{J}_{n,Q}\big) \vargeq
	Q\big(\C\big) \vargeq 1 - \alpha .	
	$
	
	\begin{remark}
		In the discrete setting, the convex potential $\varphi_n$ is not unique. For definiteness, you can fix one such convex potential. Then the resulting confidence interval is well defined and the choice of potential does not affect the validity of the coverage guarantees, since all optimal potentials produce intervals contained within the same convex hull of the transported atoms of $Q$.
	\end{remark}
	
	
	\begin{figure}[htb]
		\centering  \vspace{-0.25cm}
		\begin{tikzpicture}[>=latex, scale=0.75]
			
			\tikzset{
				bar/.style={line width=1.2pt, draw=black, fill=gray!30},           
				barC/.style={line width=1.5pt, draw=blue!70!black, fill=blue!50},
				dens/.style={smooth,  thick, black},
				arrowC/.style={blue!70!black, dashed, ->, thick},
				label/.style={font=\small},
				brace/.style={decorate,decoration={brace,amplitude=4pt}}
			}
			
			\draw[->, thick] (0,0) -- (16.0,0);
			
			
			\coordinate (a1) at (0.5,0);
			\coordinate (a2) at (1.6,0);
			\coordinate (a3) at (2.8,0); 
			\coordinate (a4) at (4.0,0); 
			\coordinate (a5) at (5.1,0); 
			\coordinate (a6) at (6.4,0); 
			\coordinate (a7) at (7.6,0);
			
			\def\h{2.0}
			\draw[bar] (a1) rectangle ++(0.15,\h*0.6);
			\draw[bar] (a2) rectangle ++(0.15,\h*0.3);
			\draw[barC] (a3) rectangle ++(0.15,\h*0.9);
			\draw[barC] (a4) rectangle ++(0.15,\h*1.3);
			\draw[barC] (a5) rectangle ++(0.15,\h*0.8);
			\draw[barC] (a6) rectangle ++(0.15,\h*1.1);
			\draw[bar] (a7) rectangle ++(0.15,\h*0.3);
			
			\node[label] at (4,3.2) {$Q$};
			
			\node[below] at (a1) {$x_1$};
			\node[below] at (a2) {$x_2$};
			\node[below] at (a3) {$x_3$};
			\node[below] at (a4) {$x_4$};
			\node[below] at (a5) {$x_5$};
			\node[below] at (a6) {$x_6$};
			\node[below] at (a7) {$x_7$};
			
				\draw[decorate, decoration={brace,amplitude=4pt,mirror}]  
			(a3.south |- 0,-0.6) -- (a6.south |- 0,-0.6) 
			node[midway,below=6pt] {$\mathcal C$};
			
			\draw[->, thick] (8.0,0.0) -- (8.0,3.8);
			
			
			\def\IL{10.2}
			\def\IR{13.6}
			
			\draw[dens, samples=200, domain=9:15.5] 
			plot (\x, {5*(0.5*exp(-((\x-12)^2)/2) + 0.2*exp(-((\x-11.0)^2)/0.5))});
			
			\begin{scope}
				\clip (\IL,0) rectangle (\IR,5);
				\draw[fill=orange!70, opacity=0.5, draw=none, samples=200, domain=9:15.5] 
				plot (\x, {5*(0.5*exp(-((\x-12)^2)/2) + 0.2*exp(-((\x-11.0)^2)/0.5))});
			\end{scope}
			
			\draw[decorate, decoration={brace,amplitude=4pt,mirror}] 
			(\IL, -0.6) -- (\IR, -0.6) 
			node[midway, below=6pt] {$\mathcal I_Q$};
			
			\node[label] at (11.4,3.2) {$P_n$};
			
			\draw[arrowC, bend left=35] (a3.north) to ({\IL},0);
			\draw[arrowC, bend left=50] (a4.north) to ({(\IL+\IR)/2},0);
			\draw[arrowC, bend left=45] (a5.north) to ({(\IL+\IR)/1.9},0);
			\draw[arrowC, bend left=35] (a6.north) to ({\IR},0);
			
		\end{tikzpicture}
		\vspace{-0.5cm}
		\caption{
			\textbf{Left:} Barplot of $Q$ with the admissible set $\mathcal C$ in blue. 
			Dashed arrows indicate the contribution of atoms in $\mathcal C$ to the confidence interval $\mathcal I_{n,Q}$. 
			\textbf{Right:} Density of $P_n$ with the orange-shaded region indicating the confidence interval $\mathcal I_{n,Q}$.
		}
	\end{figure}
	
	Let $\pi_n(Q) \coloneqq \pr{\mathcal{I}_{n,Q} \ni \theta_0}$ be the coverage probability.
	Our formulation naturally leads to an optimization problem for minimizing the error of coverage from the nominal level, 
	\[
	Q^{\star} \in \arg\inf_{Q \in \mathcal{Q}} \phi\big(\pi_n(Q) - (1 - \alpha)\big),
	\]
	with $\phi$ a suitably chosen function to guide the optimization, and $\mathcal{Q} \subset \mathcal{P}_2(\Real)$ a class of purely atomic probability measures; see Section~\ref{sec:ECIbOT}.
	
	\begin{remark}
		Our methodology can be extended to an estimator $\hat{\theta}_n$ with bias $B_n$ by 
		defining a modified error 
		\[	\vartheta_n^{\text{(bias)}} \coloneqq 	r_n \big(\hat{\theta}_n - \theta_0 - B_n \big) ,  \]
		and repeat the construction with $\vartheta_n^{\text{(bias)}}$ replacing $\vartheta_n$.
		Note that we are not assuming $\hat\theta_n$ to be a consistent estimator, as we are interested in its finite-sample distribution $P_n$.
	\end{remark}
	
	Recall that for improving the coverage probability, we consider discrete source distributions with a finite number of atoms $M$. The OT problem may admit multiple solutions giving confidence intervals being defined through subdifferentials of a convex potential for the dual problem.
	In contrast to an absolutely continuous source distribution, the analysis of the coverage probability depends now on the choice of $Q$.
	To the best of our knowledge, we are the first who consider a confidence interval constructions under an  OT approach from a purely atomic source.

	\section{Empirical Confidence Intervals by Optimal Transport}\label{sec:ECIbOT}
	
	In practice, the probability measure $P_n$ is unknown.	
	Therefore, one can either estimates it nonparametrically or adopt a parametric model. 
	For the ease of notation, we henceforth write $\hat{P}_n$ to indicate its estimate, with the precise meaning becoming clear from context.
	Naturally, if one selects a parametric model for $P_n$, one must accept the model as valid, and any inference is performed then within this framework.
	
	We are provided by a sample of random variables $	X_i: \Omega \to \mathcal{X}_i,\ i = 1, \ldots, n, $
	from a complete probability space  $\big(\Omega, \mathcal{F}, \mathbb{P}\big)$ 
	taking value in measurable spaces $\big(\mathcal{X}_i, \mathcal{A}_i\big)$.
	For the population parameter $\theta_0 \in \Real$ of interest we have a Borel measurable estimator $\hat{\theta}_n: \mathcal{X}_1 \times \ldots \times \mathcal{X}_n \rightarrow \Real$,	and denote as before our weighted error as
	$ 
	\vartheta_n \coloneqq r_n(\hat{\theta}_n - \theta_0) $ with distribution $ P_n \in \mathcal{P}_2(\Real) .
	$ 
	%
	%
	Construct a confidence interval for $\theta_0$ from $\hat{P}_n$ via the OT based inversion as described in Section \ref{sec:optimaltransport}. 
	Having used the quadratic cost function, the Kantorovich potentials 
	$ (\hat{\varphi}_n, \hat{\varphi}_n^{\ast})$ of any optimal coupling $\hat{\gamma}_n \in \Gamma(Q, \hat{P}_n)$ 
	are convex and lower semicontinuous.
	The \emph{approximate} confidence interval for $\theta_0$ is given by 
	\begin{equation}\label{eq:3-confidence_interval_sample}
		\hat{\mathcal{I}}_{n,Q} \coloneqq 
		\Big[ 
		\hat{\theta}_n - r_n^{-1} D^{+}\hat{\varphi}_n(c_{b}) ,\
		\hat{\theta}_n - r_n^{-1} D^{-}\hat{\varphi}_n(c_{a})
		\Big] ,
	\end{equation}
	where $D^{\pm}\hat{\varphi}_n$ denote the left and right derivative of $\hat{\varphi}_n$, and $c_a = \min \C, \ c_b = \max \C \ .$
	
	\begin{remark}
		The proposed methodology can be extended to more general lower semicontinuous cost function $c(x,y)$.
		In the generalized setting, convex potential are replaced by $c$-concave potentials, cyclically monotone sets by $c$-cyclically monotone sets, and the subdifferential by the $c$-superdifferential, with analogous modifications; see Appendix~\ref{app:convex_cost_confidence_interval}.
	\end{remark}
	
	For a nonparametric estimator of $\hat P_n$ it may be necessary to generate replicates
	\[
	\hat{\vartheta}_n^{(1) \ast}, \ldots, \hat{\vartheta}_n^{(m) \ast},
	\quad 
	m \vargeq n.
	\]
	Formally, let $\big(\Xi, \mathcal{A}, \mathbb{P}' \big)$ be a probability space assumed rich enough to carry resampling data and such replicates 
	\[
	\hat\vartheta_n^{(1) \ast}(\omega, \xi), \ldots, \hat\vartheta_n^{(m) \ast}(\omega, \xi).
	\]
	Although they may not be unconditionally independent, e.g.\ when obtained from bootstrap resampling, they are conditionally i.i.d.\ given the observed data \citep{efron1994introduction}.
	Following the standard convention in the literature,, we suppress $\omega$ and $\xi$
 with the understanding that all conditional probabilities integrate over any internal randomization used in constructing $\hat{\mathcal{I}}_{n,Q}$;
	 see also \citep[Chapter~23]{vandervaart1998asympotic}.
	Since the conditional distribution depends only on a sample of size $n$, note that large $m$ may produce many resamples that are highly redundant, and hence cannot yield additional information about the distribution of $\hat{\vartheta}_n$. Gains we will see thanks to OT can't be achieved by increasing $m$ when constructing confidence intervals only based on bootstrap.
	
	
	\paragraph{Assumptions}
	Throughout this section, we maintain the following assumptions.
	
	\begin{assumption}[Atomic Source Distribution]\label{ass:A1}
		The source $Q \in \mathcal{P}_2(\Real)$ is supported on $M$ atoms,
		\[  Q := Q_M = \sum_{i = 1}^{M} q_i \delta_{x_i} , \quad q_i>0, \quad \text{and } \sum_{i = 1}^{M} q_i = 1 \]
		with $\{Q_M\}_M$ converging weakly to an absolutely continuous probability measure in $\mathcal{P}_2(\Real)$.
	\end{assumption}
	
	\begin{assumption}\label{ass:A2}
		The admissible set $\C = \{x_{i_1}, \ldots, x_{i_k}\} \subset \spt{Q}$ is such that $Q(\C) \vargeq 1 - \alpha$.
	\end{assumption}
	
	\begin{assumption}[Weak convergence and regularity]\label{ass:A3}
		Both sequences 
		$\{P_n\}_{n\in\mathbb{N}}$, $\{\hat P_n\}_{n\in\mathbb{N}} \subset \mathcal{P}_2(\mathbb{R})$ converge weakly to the same limiting distribution $P_\infty \in \mathcal{P}_2(\mathbb{R})$.
		Moreover, $\operatorname{supp}(P_\infty)$ is a closed interval, and
		for any metric $\rho$ that metrizes weak convergence
		\[
		\rho(\hat P_n, P_\infty) \xrightarrow{\mathbb{P}} 0,
		\qquad
		\rho(P_n, P_\infty) \to 0 \ .
		\]
		By the triangle inequality, it follows that 
		$
		\rho(\hat{P}_n, P_n) \xrightarrow{\mathbb{P}} 0  .
		$
	\end{assumption}
	
	\begin{assumption}[Size and Conditional Independence]\label{ass:A4}
		For every $m \vargeq 1$, let $\hat{\vartheta}_n^{(1) \ast}, \ldots, \hat{\vartheta}_n^{(m) \ast}$ denote replicate estimators from resampling.
		We assume that conditional on the data they are i.i.d., i.e.\ for all Borel sets $A_1, \ldots, A_m \in \mathcal{B}(\Real)$
		\[
		\pr{
			\hat{\vartheta}_n^{(1) \ast} \in A_1, \ldots, \hat{\vartheta}_n^{(m) \ast} \in A_m | X_1(\omega), \ldots, X_n(\omega)
		} =
		\prod_{i = 1}^{m} \pr{\hat{\vartheta}_n^{(i) \ast} \in A_i | X_1(\omega), \ldots, X_n(\omega)}.
		\]
		Moreover, the number of replicates $m(n) = m$ is allowed to depend on $n$ and satisfies $m \to \infty$ as $n \to \infty$, possibly at a faster rate.
	\end{assumption}
	
	\begin{remark}
		In practice, the conditional distribution of the replicates 
		need not to coincide exactly with the true sampling distribution $P_n$. 
		Resampling procedures generate realizations from an approximate conditional distribution $P_n^{\ast}$, 
		and under standard regularity conditions, for a broad class of smooth statistics 
		$P_n^{\ast}$ is consistent for $P_n$ 
		\cite[Ch.~23]{vandervaart1998asympotic}. 
		Common bootstrap schemes
		satisfy these conditions. 
		Combined with Assumption~\ref{ass:A4}, this ensures that Assumption~\ref{ass:A3} is practically justified in appropriate resampling settings. 
		A more refined analysis could explicitly account for the deviation between $P_n^{\ast}$ and $P_n$, as well as the rate of convergence of $P_n^{\ast}$ to $P_n$.
	\end{remark}
	
	\begin{remark}
		Assumptions~\ref{ass:A1}--\ref{ass:A4} are sufficient for the asymptotic and finite-sample results of this paper.
		They do not require smoothness of the  OT potentials nor uniqueness of the  OT plan.
	%
		Although many of our results are formulated in terms of the convex potentials $\varphi_n$ and $\hat{\varphi}_n$, to practice one requires only knowledge of the  OT plan; see Section~\ref{sec:montecarlo}.
	\end{remark}

	\begin{remark}
		One may also want to ensure measurable selection of optimal couplings.
		In practice, however, the confidence interval is computed from a specific coupling, and its distribution or expectation is unaffected. Standard measurable selection theorems cover the general case \citep[Corollary 5.22]{villani2009old}. 
	\end{remark}
	
	\subsection{Asymptotic Properties} \label{seubsec:GAP}
	
	Now we can study the asymptotic behavior of the confidence intervals defined in \eqref{eq:3-confidence_interval_sample},
	establishing asymptotic equivalence in the Hausdorff distance which ensures validity of our method for large samples.
	For closed intervals $A = [a,b]$ and $B = [c,d]$ with $a < b$, $c < d$ the \emph{Hausdorff distance} is $	d_H(A, B) = \max \big\{ \lvert a - c \rvert, \, \lvert b - d \rvert \big\} $. We can state:
	
	
	\begin{theorem}[Asymptotic Equivalence]\label{th:3-hausdorff_consistency}
		Let $Q$ be a given atomic probability measure with $M < \infty$ atoms, and $\C$ an admissible finite set with $Q(\C) \vargeq 1 - \alpha$. Let $\hat{\mathcal{I}}_{n,Q}$ be an approximate confidence interval as in \eqref{eq:3-confidence_interval_sample}, and $\mathcal{I}_{n,Q}$ its population counterpart constructed from $P_n \in \mathcal{P}_2(\Real)$. Then, under Assumptions~\ref{ass:A1}--\ref{ass:A4}, and for all $\varepsilon > 0$, we have
		\begin{equation}\label{eq:3-hausdorff_consistency}
			\pr{d_H(\hat{\mathcal{I}}_{n,Q}, \mathcal{I}_{n,Q})>\varepsilon } \rightarrow 0 ,
			\quad	
			\text{as } n \rightarrow \infty .
		\end{equation}
\end{theorem}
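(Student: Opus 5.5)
The plan is to reduce the statement to convergence of finitely many quantiles. In dimension one, the optimal coupling $\gamma_n\in\Gamma(Q,P_n)$ for quadratic cost is the monotone (comonotone) coupling. So the mass $q_i$ of atom $x_i$ is sent to the slice of $P_n$ between the quantile levels $s_{i-1}=\sum_{j<i}q_j$ and $s_i=\sum_{j\varleq i}q_j$. For any convex potential $\varphi_n$ whose subdifferential graph contains $\spt{\gamma_n}$, I will first show the sandwich bounds
\[
F_n^{-1}(s_{a-1}) \varleq D^{-}\varphi_n(c_a)\varleq F_n^{-1}(s_{a-1}+) ,\qquad
F_n^{-1}(s_b)\varleq D^{+}\varphi_n(c_b)\varleq F_n^{-1}(s_b+) ,
\]
where $c_a=x_a$, $c_b=x_b$, and $F_n^{-1}(s+)$ denotes the right limit of the generalized inverse. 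The argument uses monotonicity of $\partial\varphi_n$. Every $y\in\spt{P_n}$ transported from $x_{a-1}$ lies below every element of $\partial\varphi_n(x_a)$, and every $y$ transported from $x_a$ lies in $\partial\varphi_n(x_a)$. Hence $D^-\varphi_n(c_a)$ lies in the closed gap between the mass of consecutive atoms, and that gap is exactly $[F_n^{-1}(s_{a-1}),F_n^{-1}(s_{a-1}+)]$. The same sandwich holds for $\hat\varphi_n$ with $\hat F_n$ in place of $F_n$. This also disposes of the non-uniqueness of potentials noted in the Remark after \eqref{eq:1-confidence_interval_population}, since every admissible choice obeys the sandwich.

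Second, I use Assumption~\ref{ass:A3}. Since $\spt{P_\infty}$ is a closed interval, $F_\infty$ is strictly increasing on it, so $F_\infty^{-1}$ is continuous on $(0,1)$ and $F_\infty^{-1}(s)=F_\infty^{-1}(s+)$ there. Weak convergence $P_n\to P_\infty$ then gives $F_n^{-1}(s)\to F_\infty^{-1}(s)$ and $F_n^{-1}(s+)\to F_\infty^{-1}(s)$ for every fixed $s\in(0,1)$. The latter follows by bracketing with $F_n^{-1}(s\pm\eta)$ and letting $\eta\downarrow0$. Only the two levels $s_{a-1}$ and $s_b$ matter, and they do not depend on $n$ because $Q$ is fixed. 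Consequently both endpoints of $\mathcal{J}_{n,Q}$ converge deterministically to $F_\infty^{-1}(s_{a-1})$ and $F_\infty^{-1}(s_b)$.

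For the estimated side, $\rho(\hat P_n,P_\infty)\to0$ only in probability. I will use the subsequence characterization: every subsequence has a further subsequence along which $\hat P_n\to P_\infty$ weakly almost surely. Along it, the deterministic argument above gives almost sure convergence of the endpoints of $\hat{\mathcal{J}}_{n,Q}$ to the same limits. This yields convergence in probability along the full sequence. Assumption~\ref{ass:A4} enters only to guarantee that the replicate-based $\hat P_n$ satisfies Assumption~\ref{ass:A3}.

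Finally,
\[
d_H(\hat{\mathcal{I}}_{n,Q},\mathcal{I}_{n,Q})=r_n^{-1}\max\big\{|D^+\hat\varphi_n(c_b)-D^+\varphi_n(c_b)|,\ |D^-\hat\varphi_n(c_a)-D^-\varphi_n(c_a)|\big\},
\]
because the common $\hat\theta_n$ cancels. I will assume, as implicit in the setup, that $r_n^{-1}$ is bounded, for instance $r_n\vargeq r_1>0$ nondecreasing. The triangle inequality through the common limits then gives the claim.

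I expect the main obstacle to be the boundary and gap issues in the first and second steps. If $c_a=x_1$, so that $s_{a-1}=0$, or if $c_b=x_M$, so that $s_b=1$, the relevant quantiles sit at $0$ or $1$. Then $D^-\varphi_n(c_a)$ can be $-\infty$, or the endpoint of $\spt{P_n}$, which need not converge under weak convergence alone. I would handle this by requiring, or arguing from $Q(\C)\vargeq1-\alpha$ with $\alpha>0$ and the symmetric choice, that $0<s_{a-1}<s_b<1$. Otherwise a separate argument using the closed-interval support of $P_\infty$ together with convergence of second moments in $\mathcal{P}_2$ would be needed.
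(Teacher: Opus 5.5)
Your proposal is correct once the two side conditions you flag are imposed, and it takes a genuinely different route from the paper.

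\emph{The paper's route.} The paper works with the values of the potentials. It invokes a generic stability property of discrete OT to get $\varphi_n(x)\to\varphi_\infty(x)$ and $\hat\varphi_n(x)\to\varphi_\infty(x)$ in probability for $x\in\C$. It then asserts that, because $\C$ is finite, $D^-\hat\varphi_n(c_a)$, $D^+\hat\varphi_n(c_b)$ (and the same for $\varphi_n$) converge to the corresponding one-sided derivatives of $\varphi_\infty$. That step does not follow from what is stated. Even pointwise convergence of convex functions everywhere only gives $D^-\varphi_\infty(x)\le\liminf_n D^-\varphi_n(x)\le\limsup_n D^+\varphi_n(x)\le D^+\varphi_\infty(x)$, and $\varphi_\infty$ has genuine kinks at the atoms. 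Convergence at finitely many points says nothing about slopes there, and the potentials are not unique.

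\emph{Your route.} You never use values of the potentials. The comonotone structure of the one-dimensional plan pins each endpoint between $F_n^{-1}(s)$ and $F_n^{-1}(s+)$ at a fixed level $s\in\{s_{a-1},s_b\}$. The closed-interval support in Assumption~\ref{ass:A3} makes $F_\infty^{-1}$ continuous on $(0,1)$, so weak convergence transfers to both endpoints, and subsequences give the in-probability version. This sandwich supplies exactly the information the paper's step is missing. It also makes the conclusion independent of which potential is selected. It uses the closed-interval hypothesis, which the paper's proof never invokes, exactly where it is needed. Finally, it is the careful version of the identification $u_i^-=F^{-1}(V_{i-1})$ used in the proof of Lemma~\ref{lem:empirical_extreme_deviation}.

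\emph{Your anticipated obstacle.} The conditions $x_1\notin\C$ and $x_M\notin\C$ (that is, $0<s_{a-1}$ and $s_b<1$) must be added as hypotheses, as must $\sup_n r_n^{-1}<\infty$; the paper's proof needs them implicitly too.
\begin{itemize}
\item They cannot be derived from $Q(\C)\ge 1-\alpha$ or from the symmetric choice. That $\C$ contains $x_1$ as soon as $q_1\ge\alpha/2$, e.g.\ for uniform $Q$ with $M\le 2/\alpha$.
\item No second-moment argument will rescue the boundary case, because the statement fails there. An admissible potential may have $D^-\varphi_n(x_1)=-\infty$. Even the tightest choice gives $D^-\varphi_n(x_1)=\min\spt{P_n}$ and $D^-\hat\varphi_n(x_1)=\min_j\hat\vartheta_n^{(j)\ast}$.
\item Concretely, take $P_n=(1-\epsilon_n)\,U[0,1]+\epsilon_n\delta_{-r_n}$ with $\epsilon_n\to0$, $m\to\infty$ and $m\epsilon_n\to0$; add $\epsilon_n r_n^2\to0$ if you also want $W_2$ convergence. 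All assumptions hold, yet with probability tending to one no replicate equals $-r_n$. Then $d_H(\hat{\mathcal{I}}_{n,Q},\mathcal{I}_{n,Q})\ge 1$.
\end{itemize}
So state the interior-level condition as a hypothesis rather than trying to prove it.
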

	
	\begin{proof}
		Let $\varphi_{\infty}$ denote an optimal Kantorovich convex potential transporting $Q$ to $P_{\infty}$.
		Under Assumption~\ref{ass:A3} and by the stability of the discrete  OT problem, we have
		\[
		\varphi_n(x) \to \varphi_{\infty}(x),
		\quad
		\hat{\varphi}_n(x) \xrightarrow{\mathbb{P}} \varphi_{\infty}(x),
		\quad
		\forall x \in \C
		\]
		Since the admissible set $\C$ is finite, we get convergence in probability of the form
		\[   
		D^{-}\hat{\varphi}_n(c_a) \xrightarrow{\mathbb{P}} D^{-}\varphi_{\infty}(c_a) , 
		\quad
		D^{+}\hat{\varphi}_n(c_b) \xrightarrow{\mathbb{P}}D^{+}\varphi_{\infty}(c_b),	 
		\]
		and similarly for $\varphi_n$.
		It follows that
		\[
		D^{-}\hat{\varphi}_n(c_a) - D^{-}\varphi_n(c_a) \xrightarrow{\mathbb{P}} 0,
		\quad
		D^{+}\hat{\varphi}_n(c_b) - D^{+}\varphi_n(c_b) \xrightarrow{\mathbb{P}} 0,
		\]
		which yields the Hausdorff asymptotic equivalence of the intervals.		
	\end{proof}
	
	We obtain immediately the following corollaries.
	\begin{corollary}[Asymptotic Equivalence of Interval Length]
		Under the assumptions of Theorem~\ref{th:3-hausdorff_consistency}, the lengths
		\[
		\hat{\ell}_n(Q) \coloneqq r_n^{-1} \big\lvert D^{+} \hat{\varphi}_n(c_b) - D^{-} \hat{\varphi}_n(c_a)\big \rvert,
		\quad
		\ell_n(Q) \coloneqq r_n^{-1} \big\lvert D^{+} \varphi_n(c_b) - D^{-} \varphi_n(c_a)\big\rvert,
		\]
		are asymptotically equivalent, i.e.\ for all $\varepsilon > 0$,
		\begin{equation}
			\mathbb{P} \big( \lvert \hat{\ell}_n(Q) -\ell_n(Q)
			\rvert > \varepsilon \big) \longrightarrow 0 , \quad \text{as } n \to \infty .
		\end{equation}
	\end{corollary}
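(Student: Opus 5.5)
The corollary follows from Theorem~\ref{th:3-hausdorff_consistency} by an elementary Lipschitz argument relating interval lengths to endpoints. Write the two intervals as $\hat{\mathcal{I}}_{n,Q} = [\hat a_n, \hat b_n]$ and $\mathcal{I}_{n,Q} = [a_n, b_n]$, with
\[
\hat a_n = \hat\theta_n - r_n^{-1} D^{+}\hat\varphi_n(c_b), \qquad \hat b_n = \hat\theta_n - r_n^{-1} D^{-}\hat\varphi_n(c_a),
\]
and analogously for $a_n, b_n$ with $\varphi_n$ in place of $\hat\varphi_n$. The common term $\hat\theta_n$ cancels in differences of endpoints. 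Therefore
\[
\hat\ell_n(Q) = \lvert \hat b_n - \hat a_n\rvert, \qquad \ell_n(Q) = \lvert b_n - a_n\rvert .
\]

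The key step is the deterministic inequality
\[
\bigl\lvert \hat\ell_n(Q) - \ell_n(Q)\bigr\rvert \varleq \bigl\lvert (\hat b_n - \hat a_n) - (b_n - a_n)\bigr\rvert \varleq \lvert \hat b_n - b_n\rvert + \lvert \hat a_n - a_n\rvert \varleq 2\, d_H(\hat{\mathcal{I}}_{n,Q}, \mathcal{I}_{n,Q}).
\]
The first inequality is the reverse triangle inequality for $\lvert\cdot\rvert$, and the last uses the endpoint formula for the Hausdorff distance between closed intervals stated before the theorem. With this, for every $\varepsilon>0$,
\[
\mathbb{P}\bigl(\lvert \hat\ell_n(Q) - \ell_n(Q)\rvert > \varepsilon\bigr) \varleq \mathbb{P}\bigl(d_H(\hat{\mathcal{I}}_{n,Q}, \mathcal{I}_{n,Q}) > \varepsilon/2\bigr).
\]
The right-hand side tends to zero by Theorem~\ref{th:3-hausdorff_consistency} applied with $\varepsilon/2$.

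The only point needing care is that the endpoint formula for $d_H$ was stated for nondegenerate intervals, $a<b$ and $c<d$. The cleanest way around this is to skip $d_H$ and use the endpoint convergences established directly in the proof of the theorem:
\[
D^{-}\hat\varphi_n(c_a) - D^{-}\varphi_n(c_a) \xrightarrow{\mathbb{P}} 0, \qquad D^{+}\hat\varphi_n(c_b) - D^{+}\varphi_n(c_b) \xrightarrow{\mathbb{P}} 0 .
\]
Write $\Delta_a$ and $\Delta_b$ for these two differences. Then
\[
\lvert \hat\ell_n(Q) - \ell_n(Q)\rvert \varleq r_n^{-1}\bigl(\lvert \Delta_a\rvert + \lvert \Delta_b\rvert\bigr).
\]
This bound needs $r_n^{-1}$ to be bounded, which holds since $r_n$ is either fixed or tends to infinity. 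The degenerate-interval case is then covered as well, and the result follows from a union bound on the two events $\{r_n^{-1}\lvert\Delta_a\rvert>\varepsilon/2\}$ and $\{r_n^{-1}\lvert\Delta_b\rvert>\varepsilon/2\}$.
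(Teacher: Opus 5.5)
Your proposal is correct and follows the paper's argument. The length is the difference of the endpoints, so the reverse triangle inequality bounds $\lvert \hat\ell_n(Q)-\ell_n(Q)\rvert$ by the sum of the endpoint deviations, which is at most $2\,d_H(\hat{\mathcal{I}}_{n,Q},\mathcal{I}_{n,Q})$, and Theorem~\ref{th:3-hausdorff_consistency} then gives the claim; your detour for degenerate intervals (and with it the need for $r_n^{-1}$ to be bounded) is unnecessary, since $d_H([a,b],[c,d])=\max\{\lvert a-c\rvert,\lvert b-d\rvert\}$ also holds when $a=b$ or $c=d$.
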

	
	\begin{proof}
		By Theorem~\ref{th:3-hausdorff_consistency}, the differences of both endpoints of $\hat{\mathcal{I}}_{n,Q}$ and $\mathcal{I}_{n,Q}$ converge in probability to zero.
		Since the length of an interval in one dimension is the difference of its endpoints, the claim follows.
	\end{proof}
	
	\begin{corollary}[Asymptotic Equivalence of Coverage Probability]
		Let $\hat{\pi}_n(Q) \coloneqq \mathbb{P}\big(\hat{\mathcal{I}}_{n,Q} \ni \theta_0\big)$ be the empirical coverage probability, and denote the population coverage by 
		\[
		\pi_n(Q) \coloneqq \mathbb{P}\big(\mathcal{I}_{n,Q} \ni \theta_0\big) \ .
		\]
		Then, under the assumptions of Theorem~\ref{th:3-hausdorff_consistency}, for all $\varepsilon > 0$,
		\begin{equation}
			\mathbb{P}\big(\lvert \hat{\pi}_n(Q) - \pi_n(Q) \rvert > \varepsilon \big) \longrightarrow 0, 
			\quad \text{as } n \to \infty.
		\end{equation}
	\end{corollary}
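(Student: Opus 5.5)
The plan is to reduce the statement to Theorem~\ref{th:3-hausdorff_consistency} by comparing the two coverage events directly. Write $\hat{\mathcal{I}}_{n,Q} = [\hat a_n, \hat b_n]$ and $\mathcal{I}_{n,Q} = [a_n, b_n]$, and set $\Delta_n \coloneqq d_H(\hat{\mathcal{I}}_{n,Q}, \mathcal{I}_{n,Q})$. Since $\hat\pi_n(Q)$ and $\pi_n(Q)$ are deterministic numbers once all randomness has been integrated out, as the paper's convention prescribes, it suffices to show $\lvert \hat\pi_n(Q) - \pi_n(Q)\rvert \to 0$. The probability statement then holds trivially: it equals $0$ for $n$ large.

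First I will bound the symmetric difference of the two events. For any $\delta>0$, on $\{\Delta_n \varleq \delta\}$ the events $\{\theta_0 \in \hat{\mathcal{I}}_{n,Q}\}$ and $\{\theta_0 \in \mathcal{I}_{n,Q}\}$ can differ only if $\theta_0$ lies within $\delta$ of one of the population endpoints. This gives
\[
\lvert \hat\pi_n(Q) - \pi_n(Q)\rvert \varleq \mathbb{P}(\Delta_n > \delta) + \mathbb{P}(\lvert \theta_0 - a_n\rvert \varleq \delta) + \mathbb{P}(\lvert \theta_0 - b_n\rvert \varleq \delta).
\]
Theorem~\ref{th:3-hausdorff_consistency} makes the first term vanish as $n\to\infty$ for each fixed $\delta$. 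For the boundary terms, note that $\theta_0 - b_n = -r_n^{-1}(\vartheta_n - D^{-}\varphi_n(c_a))$ and similarly for $a_n$ with $D^{+}\varphi_n(c_b)$. Each boundary term is therefore $P_n$ of an interval of length $2 r_n\delta$ centred at a deterministic point $t_n$ (a derivative of $\varphi_n$).

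The main obstacle is controlling these boundary masses, $\limsup_n P_n([t_n - 2r_n\delta,\, t_n + 2r_n\delta])$. Two issues arise: $r_n$ may diverge, and $P_n$ may have atoms. My first approach will be to work on the $\vartheta$ scale rather than the $\theta$ scale. I will restate Theorem~\ref{th:3-hausdorff_consistency} through its proof, which shows that the unscaled endpoints satisfy $D^{\pm}\hat\varphi_n - D^{\pm}\varphi_n \xrightarrow{\mathbb{P}} 0$. Applying the same decomposition with $\vartheta_n$ in place of $\theta_0$ gives windows of width $2\delta$ around $t_n \to t_\infty = D^{\mp}\varphi_\infty(c)$. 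Weak convergence $P_n \to P_\infty$ (Assumption~\ref{ass:A3}) together with the portmanteau theorem yields
\[
\limsup_n P_n([t_n-\delta, t_n+\delta]) \varleq P_\infty([t_\infty - 2\delta, t_\infty + 2\delta]).
\]
This tends to $P_\infty(\{t_\infty\})$ as $\delta \downarrow 0$.

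It then remains to argue that $P_\infty$ has no atom at $t_\infty$. I will assume this as a continuity condition on $P_\infty$ at the two transported endpoints; it is implicit in the use of \eqref{eq-CI-quantinv} and holds whenever $P_\infty$ is absolutely continuous, which the regularity in Assumption~\ref{ass:A3} is meant to provide. If this fails, the conclusion can only be obtained up to the atom mass, and I would flag that. Letting $n\to\infty$ and then $\delta\downarrow0$ finishes the argument.
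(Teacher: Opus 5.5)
Your argument follows the same idea as the paper's proof: Hausdorff closeness of the two intervals makes the two coverage events agree outside a small set. The paper states this in one sentence, and your version supplies two steps that sentence silently needs.

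First, Theorem~\ref{th:3-hausdorff_consistency} is phrased on the $\theta$ scale, where endpoint errors carry the factor $r_n^{-1}$. If $r_n\to\infty$, a fixed discrepancy between $\hat{\mathcal{J}}_{n,Q}$ and $\mathcal{J}_{n,Q}$ is invisible on that scale but still changes the coverage. Going back to the unscaled convergence $D^{\pm}\hat\varphi_n - D^{\pm}\varphi_n \xrightarrow{\mathbb{P}} 0$ from that proof, as you do, is therefore necessary.

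Second, the paper's assertion that $\theta_0$ lies in one interval but not the other with vanishing probability is exactly your boundary term. Your portmanteau step shows what it actually yields:
$\limsup_n \lvert \hat\pi_n(Q) - \pi_n(Q)\rvert \le P_\infty(\{t_\infty^{a}\}) + P_\infty(\{t_\infty^{b}\})$,
where $t_\infty^{a}, t_\infty^{b}$ are the two limiting endpoints. Because your symmetric-difference bound is on the joint law, it also needs no independence between $\vartheta_n$ and the replicates.

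The no-atom condition you add is genuinely needed, but you should not attribute it to Assumption~\ref{ass:A3}. A closed-interval support makes $F_\infty^{-1}$ continuous on $(0,1)$, which is what drives endpoint convergence, but it still allows atoms. Without the condition the corollary fails. Here is a counterexample.
\begin{itemize}
\item Let $c_a = x_k$ with $0 < V_{k-1} < 1$, and pick $p \in (0, V_{k-1})$.
\item Take $P_n \equiv P_\infty = (V_{k-1}-p)\,\mathrm{Unif}[-1,0] + p\,\delta_0 + (1-V_{k-1})\,\mathrm{Unif}[0,1]$, with replicates i.i.d.\ from $P_n$. All of Assumptions~\ref{ass:A1}--\ref{ass:A4} hold.
\item Then $D^{-}\varphi_n(c_a) = F_\infty^{-1}(V_{k-1}) = 0$, so $\mathcal{J}_{n,Q}$ contains the atom.
\item Meanwhile $D^{-}\hat\varphi_n(c_a) = \hat F_n^{-1}(V_{k-1}) > 0$ on the event $\{\hat F_n(0) < V_{k-1}\}$, whose probability tends to $1/2$. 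On that event the atom is excluded.
\item Hence $\pi_n(Q) - \hat\pi_n(Q) \to p/2$, even though the conclusion of Theorem~\ref{th:3-hausdorff_consistency} holds in this example.
\end{itemize}

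With continuity of $P_\infty$ at $t_\infty^{a}$ and $t_\infty^{b}$ stated as an explicit hypothesis, your argument is a complete proof of what the paper only sketches.
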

	
	\begin{proof}
		By Theorem~\ref{th:3-hausdorff_consistency}, $\hat{\mathcal{I}}_{n,Q}$ and $\mathcal{I}_{n,Q}$ are asymptotically equivalent. 	
		Since the intervals are one-dimensional, this implies that eventually the endpoints of $\hat{\mathcal{I}}_{n,Q}$ coincide with the endpoints of $\mathcal{I}_{n,Q}$ up to an arbitrarily small error. Therefore, the event that $\theta_0$ lies in one interval but not in the other occurs with probability tending to zero. 
		Hence, the difference between $\hat{\pi}_n(Q)$ and  $\pi_n(Q)$ converges to zero in probability.
	\end{proof}
	
	These results establish an asymptotic picture of the proposed confidence procedure. 
	Asymptotic Hausdorff equivalence guarantees that the estimated interval converges in shape and location to its population analogue. 
	The convergence of the interval length ensures that the scale of the uncertainty is accurately captured, ruling out asymptotic under- or over-coverage. 
	Finally, the coverage probability asymptotic equivalence shows that the procedure achieves the nominal level in large samples. All together, this demonstrates that the  OT construction yields confidence intervals that are statistically valid and geometrically stable, and thus very suitable for statistical inference.
	
	\subsection{Finite-Sample Bounds}\label{subsec:FSB}
	
	Our next objective is to select a source distribution $Q \in \mathcal{Q}$ in a data-driven way that yields confidence intervals with improved coverage probability. Our strategy is to derive finite-sample error bounds for the coverage, and to express those in terms of $Q$.
	
	We define the \emph{conditional coverage probability} with respect to a candidate $Q \in \mathcal{Q}$ as
	\begin{eqnarray}
		\label{eq:3-conditional_coverage_probability}
		& \hat{\pi}_n(Q \lvert X_1, \ldots, X_n) \coloneqq
		\pr{\hat{\mathcal{I}}_{n,Q} \ni \theta_0 \ \vert \ X_1, \ldots, X_n} = 
		\pr{\hat{\mathcal{J}}_{n,Q} \ni \vartheta_n \ \vert \ X_1, \ldots, X_n} ,
	  &  \\ &
	\text{ with }
	\hat{\mathcal{J}}_{n,Q} \coloneqq \Big[ D^{-}\hat{\varphi}_n(c_a), D^{+}\hat{\varphi}_n(c_b) \Big]  . & \nonumber 
	\end{eqnarray}
	The \emph{unconditional coverage probability} is then determined by
	\begin{equation}\label{eq:3-marginal_coverage_probability}
		\hat{\pi}_n(Q) \coloneqq \expect{\hat{\pi}_n(Q \lvert X_1, \ldots, X_n)} = \pr{\hat{\mathcal{I}}_{n,Q} \ni \theta_0 } = \pr{\hat{\mathcal{J}}_{n,Q} \ni \vartheta_n}  , 
	\end{equation}
	where the expectation is taken with respect to the data-generating process.
	Consider 
	\begin{equation}\label{eq:3-general_optimization}
		Q^{\star} \in \arg\inf_{Q \in \mathcal{Q}} \phi\big(\hat{\pi}_n(Q) -(1-\alpha)\big) ,
	\end{equation}
	where $\phi:\Real \to [ 0, \infty)$ is a symmetric and strictly convex function with $\phi(0) = 0$ and $\lim_{x \to \infty} \phi(x) = \infty$. Typical choices are $\phi(x) = \lvert x \rvert^p$, $p \vargeq 1$.
	Typically, however, exact evaluation of $\hat{\pi}_n(Q \lvert X_1, \ldots, X_n)$ and $\hat{\pi}_n(Q)$  in infeasible without knowledge of the true data generating process. One is therefore forced to use alternative means to solve \eqref{eq:3-general_optimization}, for instance by deriving stochastic finite-sample error bounds $\mathcal{E}_n(Q)$ such that
	\begin{equation}\label{eq:3-upper_bound_optimization}
		\pr{\hat{\pi}_n(Q) \vargeq 1 - \alpha - \mathcal{E}_n(Q)} \vargeq 1 - \delta,
		\quad
		\delta \in (0,1).
	\end{equation}
	Consequently, the optimization problem becomes  
	\begin{equation}\label{eq:3_general_optimization_upper_bound}
		Q^{\star} \in \arg\inf _{Q \in \mathcal{Q}} \expect{\phi\big(\mathcal{E}_n(Q)\big)}.
	\end{equation}
	
	\begin{remark}
		Although, in practice, the source distribution $Q \in \mathcal{Q}$ will be chosen  data-adaptively, for theoretical considerations $Q$ is treated as a deterministic hyper-parameter. This convention is standard in nonparametric statistics, where e.g.\ bandwidths or sieve dimension are selected in a data-driven way but treated as deterministic. 
		While the random formulation would theoretically more faithful, it would turn further analysis quite tedious, blurring the main ideas and concepts.
	\end{remark}
	
	For convenience, we proceed by bounding the \emph{probability of non-coverage},
	\begin{align}
		\hat{\overline{\pi}}_n(Q \mid X_1, \ldots, X_n) &\coloneqq \pr{\hat{\mathcal{I}}_{n,Q} \not\ni \theta_0 \mid X_1, \ldots, X_n} = \pr{\hat{\mathcal{J}}_{n,Q} \not\ni \vartheta_n \ \vert \ X_1, \ldots, X_n} \label{eq:06-conditional_non-coverage} \\
		\hat{\overline{\pi}}_n(Q) &\coloneqq \pr{\hat{\mathcal{I}}_{n,Q} \not\ni \theta_0} = \pr{\hat{\mathcal{J}}_{n,Q} \not\ni \vartheta_n}. \label{eq:06-non-coverage}
	\end{align}
	For $A^{c} = \Real \setminus A$ we can decompose (\ref{eq:06-non-coverage}) into
	\begin{equation}\label{eq:3-none_coverage_01}
		\pr{\hat{\mathcal{J}}_{n,Q} \not\ni \vartheta_n} =
		\hat{P}_n\big(\hat{\mathcal{J}}_{n,Q}^{c}\big) + 
		\Big(P_n\big(\hat{\mathcal{J}}_{n,Q}^{c}\big) - \hat{P}_n\big(\hat{\mathcal{J}}_{n,Q}^{c}\big)\Big) \varleq 
		\alpha + \Big(P_n\big(\hat{\mathcal{J}}_{n,Q}^{c}\big) - \hat{P}_n\big(\hat{\mathcal{J}}_{n,Q}^{c}\big)\Big)  ,
	\end{equation}
	with $\hat{P}_n\big(\hat{\mathcal{J}}_{n,Q}^{c}\big) \varleq \alpha$ following from the definition of $\C$ and properties of the OT problem.
	
	From this decomposition, we can give uniform bounds for the non-coverage probability.
	\begin{proposition}[Uniform Upper Bound]\label{pr:3-uniform_upper_bound}
		Let $\hat{P}_n$ be the empirical measure of $m$ replicates $\hat{\vartheta}_n^{(1) \ast}, \ldots, \hat{\vartheta}_n^{(m) \ast}$ with common probability measure $P_n \in \mathcal{P}_2(\Real)$.
		Assume Assumptions \ref{ass:A1}--\ref{ass:A4} hold.
		Then, for any atomic source probability measure $Q \in \mathcal{Q}$, any admissible set $\C$ satisfying $Q(\C) \vargeq 1 - \alpha$, and all $m \vargeq 1$, it holds that
		\begin{equation}
			\hat{\overline{\pi}}_n(Q) \varleq \alpha + 2 \sqrt{\frac{\ln(2/\delta)}{2m}},
		\end{equation}
		with probability of at least $1 - \delta$ for a fixed  $\delta \in (0,1)$.
	\end{proposition}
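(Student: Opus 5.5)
Starting from the decomposition \eqref{eq:3-none_coverage_01}, the plan is to bound the fluctuation term $P_n(\hat{\mathcal{J}}_{n,Q}^{c}) - \hat{P}_n(\hat{\mathcal{J}}_{n,Q}^{c})$ uniformly over all intervals, using the Dvoretzky--Kiefer--Wolfowitz (DKW) inequality with Massart's constant. Since $\hat{\mathcal{J}}_{n,Q}$ is data dependent, a pointwise Hoeffding bound for a fixed set is not enough. A uniform bound over the class of closed intervals is insensitive to which interval the OT construction selects, and it does not depend on $Q$ or $\C$. This explains why the proposition holds for every atomic $Q$ and every admissible $\C$ simultaneously.

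\textbf{Step 1.} First I would justify $\hat{P}_n(\hat{\mathcal{J}}_{n,Q}^{c}) \varleq \alpha$. The optimal coupling $\hat\gamma_n \in \Gamma(Q,\hat P_n)$ has support in $\mathrm{Graph}(\partial\hat\varphi_n)$. Every atom $x\in\C$ is transported into $\partial\hat\varphi_n(x)\subset \hat{\mathcal{J}}_{n,Q}$, because $\partial\hat\varphi_n$ is monotone and $\C\subset[c_a,c_b]$. It follows that
\[
\hat P_n(\hat{\mathcal{J}}_{n,Q}) \vargeq \hat\gamma_n(\C\times \hat{\mathcal{J}}_{n,Q}) = Q(\C) \vargeq 1-\alpha .
\]

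\textbf{Step 2.} Next I would write the complement as $\hat{\mathcal{J}}_{n,Q}^{c} = (-\infty,a)\cup(b,\infty)$, where $a = D^{-}\hat\varphi_n(c_a)$ and $b = D^{+}\hat\varphi_n(c_b)$. Let $F_n$ denote the c.d.f.\ of $P_n$ and $\hat F_n$ the empirical c.d.f.\ of the replicates. Then
\[
P_n(\hat{\mathcal{J}}^c)-\hat P_n(\hat{\mathcal{J}}^c) = \big(F_n(a^-)-\hat F_n(a^-)\big) + \big(\hat F_n(b)-F_n(b)\big) \varleq 2\sup_t|\hat F_n(t)-F_n(t)| .
\]
Left limits are handled by approximating with $t\uparrow a$, since the supremum also controls left limits. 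This step is where the factor $2$ comes from.

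\textbf{Step 3.} Conditionally on the data, Assumption~\ref{ass:A4} makes the replicates i.i.d.\ with law $P_n$, as the statement posits. DKW--Massart then gives
\[
\mathbb P\big(\sup_t|\hat F_n(t)-F_n(t)|>\varepsilon\big)\varleq 2e^{-2m\varepsilon^2} .
\]
Setting $\varepsilon=\sqrt{\ln(2/\delta)/(2m)}$, with probability at least $1-\delta$ the fluctuation term is at most $2\sqrt{\ln(2/\delta)/(2m)}$. Combining this with Step 1 and \eqref{eq:3-none_coverage_01} yields the claim for all $m\vargeq1$, with no asymptotics needed.

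\textbf{Main obstacle.} The hard part is interpretational rather than technical. The left-hand side $\hat{\overline{\pi}}_n(Q)$ must be read as the probability conditional on the replicates (equivalently, $P_n(\hat{\mathcal J}^c_{n,Q})$ with $\hat{\mathcal J}$ frozen), so that the statement ``with probability at least $1-\delta$'' refers to the resampling randomness. I would make this explicit before applying the decomposition. I would also make sure the uniformity over intervals is invoked before substituting the random endpoints $a$ and $b$. Doing so avoids any measurability issue from the choice of potential, in line with the remark on measurable selection.
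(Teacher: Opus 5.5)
Your proposal is correct and follows essentially the same route as the paper. You start from the decomposition \eqref{eq:3-none_coverage_01}, bound the deviation on the two rays forming $\hat{\mathcal{J}}_{n,Q}^{c}$ by $2\sup_{x}\lvert \hat{F}_n(x)-F_n(x)\rvert$, and apply the DKW--Massart inequality conditionally with $\varepsilon=\sqrt{\ln(2/\delta)/(2m)}$. The paper only asserts or leaves implicit several details you supply, which tighten rather than change the argument: your Step~1 showing $\hat{P}_n(\hat{\mathcal{J}}_{n,Q}^{c})\varleq\alpha$ via $\mathrm{supp}(\hat\gamma_n)\subset\mathrm{Graph}(\partial\hat\varphi_n)$, the left-limit bookkeeping, and your explicit reading of $\hat{\overline{\pi}}_n(Q)$ as $P_n(\hat{\mathcal{J}}_{n,Q}^{c})$ with the interval frozen.
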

	\begin{proof}
		Fix measure $Q \in \mathcal{Q}$ and an admissible set $\C$.
		From Equation \eqref{eq:3-none_coverage_01}, we have 
		\[	\hat{\overline{\pi}}_n(Q) \varleq \alpha + \lvert P_n(\hat{\mathcal{J}}_{n,Q}^{c}) - \hat{P}_n(\hat{\mathcal{J}}_{n,Q}^{c}) \rvert . \]
		Since $\hat{\mathcal{J}}_{n,Q}^{c}$ is the disjoint union of two rays, this difference can be expressed as the sum of two deviations of the empirical c.d.f.\ from the true c.d.f.\ at the interval endpoints. Hence,
		\[	\lvert P_n(\hat{\mathcal{J}}_{n,Q}^{c}) - \hat{P}_n(\hat{\mathcal{J}}_{n,Q}^{c}) \rvert \varleq 2 \sup_{x \in \Real} \lvert \hat{F}_n(x) - F_n(x) \rvert .	\]
		Applying the Dvoretzky-Kiefer-Wolfowitz-Massart (DKW-M) inequality conditionally on the observed data and taking expectation yields, with probability of at least $1 - \delta$,
		\[	\sup_{x\in\Real} \lvert \hat{F}_n(x)-F_n(x)\rvert \varleq \sqrt{\frac{\ln(2/\delta)}{2m}}.\]
		Combining the preceding bounds gives
		\[
		\hat{\overline{\pi}}_n(Q) \varleq \alpha + 2 \sqrt{\frac{\ln(2/\delta)}{2m}},
		\]
		with probability of at least $1-\delta$.
		Since the choice of $Q$ and $\mathcal{C}$ was arbitrary, the bound holds uniformly over all $Q \in \mathcal{Q}$ and admissible sets $\mathcal{C}$.
	\end{proof}
	
	The bound holds simultaneously for all $Q \in \mathcal{P}_2(\Real)$ and is distribution-free. More precisely, since $\mathcal{I}_{n,Q}$ is always a closed interval, 
	its complement consists of at most two rays.
	Uniform control of the empirical measure deviation at the interval endpoints 
	yields a bound that does not depend on the specific form of $Q$.
	
	\begin{remark}
		If the bootstrap replicates are conditionally i.i.d.\ from an
		approximate distribution $P_n^*$ instead of $P_n$, the proof gives
		\[
		\hat{\overline{\pi}}_n(Q)
		\varleq
		\alpha
		+
		2\sqrt{\frac{\ln(2/\delta)}{2m}}
		+
		2\sup_{x\in\mathbb{R}} |F_n(x)-F_n^*(x)|.
		\]
		The last term quantifies the discrepancy between the idealized
		distribution $P_n$ and the bootstrap distribution $P_n^*$.
		Under standard regularity conditions for smooth statistics, classical bootstrap theory ensures that
		\[
		\sup_{x\in\mathbb{R}} |F_n(x)-F_n^*(x)| \to 0,
		\quad
		\text{as } n \to \infty,
		\]
		in probability, and $\sup_{x\in\mathbb{R}} |F_n(x)-F_n^*(x)| = O_{\mathbb{P}}(n^{-1/2})$; see
		\cite{vandervaart1998asympotic}. 
	\end{remark}
	
	\begin{corollary}\label{crll:expectation_noncoverage_uniform}
		Under the same setting and assumptions as Proposition~\ref{pr:3-uniform_upper_bound}, the expected non-coverage probability of the interval satisfies
		\begin{equation}
			\expect{\hat{\overline{\pi}}_n(Q)} \varleq \alpha +  \frac{1}{\sqrt{m}}.
		\end{equation}
	\end{corollary}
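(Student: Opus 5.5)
The plan is to integrate the tail bound underlying Proposition~\ref{pr:3-uniform_upper_bound} rather than the high-probability statement itself. From the decomposition \eqref{eq:3-none_coverage_01} and the argument in the proof of that proposition, the random quantity controlling non-coverage is bounded pathwise by
\[
\alpha + 2 D_m, \qquad D_m \coloneqq \sup_{x \in \Real} \lvert \hat{F}_n(x) - F_n(x) \rvert .
\]
Taking expectations gives $\expect{\hat{\overline{\pi}}_n(Q)} \varleq \alpha + 2\,\expect{D_m}$. It therefore suffices to show $\expect{D_m} \varleq 1/(2\sqrt{m})$.

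To bound $\expect{D_m}$, I will work conditionally on the data, where the replicates are i.i.d.\ by Assumption~\ref{ass:A4}. The DKW--M inequality gives
\[
\mathbb{P}(D_m > t) \varleq 2 e^{-2 m t^2}
\]
for all $t>0$, and this survives integration over the data. The layer-cake formula then yields
\[
\expect{D_m} = \int_0^\infty \mathbb{P}(D_m > t)\,dt \varleq \int_0^\infty 2e^{-2mt^2}\,dt = \sqrt{\frac{\pi}{2m}} .
\]
This is about $1.25/\sqrt{m}$, which is too large by a constant factor. So the main obstacle is sharpening the constant.

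To do that, I would instead bound $\expect{D_m^2}$ and apply Jensen's inequality, $\expect{D_m} \varleq \sqrt{\expect{D_m^2}}$. Integrating the tail bound gives
\[
\expect{D_m^2} = \int_0^\infty \mathbb{P}(D_m^2 > s)\,ds \varleq \int_0^\infty 2e^{-2ms}\,ds = \frac{1}{m},
\]
so $\expect{D_m} \varleq 1/\sqrt{m}$. This still leaves a factor $2$ after multiplying by $2$.

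Two refinements can close that gap. The first is to use the one-sided Massart bound $\mathbb{P}(\sup_x(\hat F_n - F_n) > t) \varleq e^{-2mt^2}$ separately for the two rays of $\hat{\mathcal{J}}_{n,Q}^{c}$. Then each ray's deviation $Y$ satisfies
\[
\expect{Y} \varleq \int_0^\infty e^{-2mt^2}\,dt = \tfrac12\sqrt{\pi/(2m)} \approx 0.63/\sqrt{m} .
\]
The second, which I would try first, exploits the structure of the two-ray complement. The combined deviation $(F_n(a^-) - \hat F_n(a^-)) + (\hat F_n(b) - F_n(b))$ has opposite-sign increments, so it can be bounded by a single supremum of the empirical process over intervals. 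Its second moment is then at most $1/(4m)$ by a variance computation, since the variance of a Bernoulli proportion is at most $1/(4m)$; the two-sided DKW constant then gives $\expect{\cdot} \varleq 1/\sqrt{m}$.

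If the sharp constant cannot be recovered rigorously, I will fall back on the bound $\alpha + \sqrt{2\pi/m}$, which has the same $m^{-1/2}$ rate and is the substantive content of the corollary.
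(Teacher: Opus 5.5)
Your first reduction, $\mathbb{E}[\hat{\overline{\pi}}_n(Q)]\le\alpha+2\,\mathbb{E}[D_m]$ followed by a bound on the expected Kolmogorov--Smirnov distance, is exactly the paper's route; the paper only cites a classical bound on $\mathbb{E}\sup_x|\hat F_n(x)-F_n(x)|$. Your doubt about the constant is justified, though: this route cannot give $1/\sqrt m$. For continuous $F_n$, $\sqrt m\,D_m$ converges to the Kolmogorov law and is uniformly integrable by DKW. Hence $\sqrt m\,\mathbb{E}[D_m]\to\sqrt{\pi/2}\,\ln 2\approx 0.87$, i.e.\ $2\mathbb{E}[D_m]\approx 1.74/\sqrt m$. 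Already for $m=1$, $D_1=\max(U,1-U)$ with $U$ uniform, whose mean is $3/4>1/2$. So no sharpening of the DKW, Jensen or Massart constants can help, and your one-sided refinement indeed still sums to about $1.25/\sqrt m$.

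The second refinement is where the argument actually breaks. The bound $p(1-p)/m\le 1/(4m)$ is the variance of $\hat P_n(J)$ for a \emph{fixed} interval $J$. It transfers neither to the supremum over intervals nor to $\hat{\mathcal J}_{n,Q}$, which is built from the same replicates. The supremum over intervals dominates $D_m^+=\sup_x(\hat F_n-F_n)(x)$, and $m\,\mathbb{E}[(D_m^+)^2]\to\mathbb{E}[(\sup_t B_t)^2]=1/2$ for a Brownian bridge $B$. For $m=1$, the degenerate interval at the single replicate already gives the value $1$. Your fallback $\alpha+\sqrt{2\pi/m}$ (or $\alpha+2/\sqrt m$ from your own Jensen step) is strictly weaker than the corollary, so the proposal does not prove it.

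The missing idea is to drop the uniform deviation and use the specific form of $\hat{\mathcal J}_{n,Q}$. Put $p_1\coloneqq Q(\{x<c_a\})$ and $p_2\coloneqq Q(\{x\le c_b\})$, so that $p_1+1-p_2\le Q(\C^c)\le\alpha$. Let $\hat\vartheta_{(1)}\le\dots\le\hat\vartheta_{(m)}$ be the ordered replicates. The optimal plan from $Q$ to $\hat P_n$ is the monotone one, and its support lies in the graph of $\partial\hat\varphi_n$. Hence $\hat{\mathcal J}_{n,Q}\supseteq[\hat\vartheta_{(j_1)},\hat\vartheta_{(j_2)}]$ with $j_1=\lfloor mp_1\rfloor+1\le mp_1+1$ and $j_2=\lceil mp_2\rceil\ge mp_2$. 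Write the replicates as $F_n^{-1}(U_j)$ with $U_j$ i.i.d.\ uniform, and use $F_n(F_n^{-1}(u)^-)\le u\le F_n(F_n^{-1}(u))$. This gives $P_n(\hat{\mathcal J}_{n,Q}^c)\le U_{(j_1)}+1-U_{(j_2)}$, even when $P_n$ has atoms. Work under the idealization already used in \eqref{eq:3-none_coverage_01}, namely $\vartheta_n\sim P_n$ independent of the replicates. Taking expectations then gives
\[
\mathbb{E}\big[\hat{\overline{\pi}}_n(Q)\big]\le\frac{j_1+m+1-j_2}{m+1}\le\frac{m\alpha+2}{m+1}\le\alpha+\frac{2}{m+1}\le\alpha+\frac{1}{\sqrt m},
\]
the last step by $m+1\ge 2\sqrt m$. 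So the true excess over $\alpha$ is $O(1/m)$. The constant in the corollary comes from this order-statistics bias computation, not from any bound on $\sup_x|\hat F_n-F_n|$.
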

	This follows from classical bounds on the expectation of $\sup_{x\in\Real} \lvert \hat{F}_n(x)-F_n(x)\rvert$; see~\citep[Theorem 3.3]{devroye2001combinatorial}.
	
	Although Proposition~\ref{pr:3-uniform_upper_bound} and Corollary~\ref{crll:expectation_noncoverage_uniform} provide uniform finite-sample upper bounds on the non-coverage probability, they mainly serve as preliminary guarantees as we are interested in small samples. 
	Because the DKW-M inequality controls deviations uniformly \citep{massart1990tight} , 
	these bounds are not expressed in terms of the specific source measure $Q$ and therefore reflect a worst-case scenario rather than a distribution or design-specific guarantees. 
	Nevertheless, these results remain informative: they certify that the coverage error decays at the parametric rate $m^{-1/2}$, ensuring stable asymptotic behavior even in the worst case. 
	However, our analysis is not strictly minimax; by selecting appropriately a source measure $Q$ in the  OT construction, the bound of coverage error can be improved for specific distributions, although the worst-case constant remains unattainable.
	
	Let us refine the upper bound to make its dependence on $Q$ more explicit.  
	Recalling \eqref{eq:3-none_coverage_01} and using that $Q$ has finite support, we obtain
	\[
	P_n\big(\hat{\mathcal{J}}_{n,Q}^{c}\big) - \hat{P}_n\big(\hat{\mathcal{J}}_{n,Q}^{c}\big) \varleq
	Q\big(\mathcal{S}^c\big) - Q\big(\C^{c}\big) =
	\sum_{i \in \mathcal{S}^c} q_i - \sum_{i \in \C^{c}} q_i ,
	\]
	where the random set $\mathcal{S}^c$ is defined as
	\begin{equation}\label{eq:inverse_set_confidence_interval}
		\mathcal{S}^c \coloneqq
		\Big\{x \in \spt{Q} : \exists y \in \hat{\mathcal{J}}_{n,Q}^{c} \text{ with } y \in \partial \varphi_n(x)\Big\}.
	\end{equation}
	The first term represents undercoverage and the second one overcoverage. Define
	\begin{equation}\label{eq:3delta}
		\mathcal{E}_n(Q) \coloneqq \sum_{i = 1}^{M} Y_i,
		\quad
		Y_i \coloneqq q_i(\mathbbm{1}_{i \in \mathcal{S}^c} - \mathbbm{1}_{i \in \C^c}),
	\end{equation}
	such that the non-coverage probability satisfies the almost-sure bound
	\[
	\hat{\overline{\pi}}_n(Q) \varleq \alpha + \mathcal{E}_n(Q)
	\quad
	\text{a.s.}
	\]
	This gives explicit upper bound for the non-coverage probability in terms of $Q$.
	\begin{theorem}[Upper Bound for Non-Coverage Probability]\label{th:3-concentration_noncoverage}
		Let $\hat{P}_n \in \mathcal{P}_2(\Real)$ be a probability measure for $m$ replicates $\hat\vartheta_n^{(1) \ast}, \ldots, \hat\vartheta_n^{(m) \ast}$ with common distribution $P_n \in \mathcal{P}_2(\Real)$.
		Let $\mathcal{S}^{c}$ be as in \eqref{eq:inverse_set_confidence_interval}, and $\mathcal{E}_n(Q)$ be as in \eqref{eq:3delta}.
		Assume Assumptions~\ref{ass:A1}--\ref{ass:A4} hold, then for any source probability measure $Q \in \mathcal{Q}$, any admissible set $\C$ satisfying $Q(\C) \vargeq 1 - \alpha$, and all $m \vargeq 1$, the following bound holds
		\begin{equation}
			\hat{\overline{\pi}}_n(Q) \varleq 
			\alpha + \expect{\mathcal{E}_n(Q)} + \sqrt{2 \sum_{i = 1}^{M} q_i^2 \ln(1/\delta)},
		\end{equation}
		with probability of at least $1 - \delta$ for any fixed $\delta \in (0,1)$.
	\end{theorem}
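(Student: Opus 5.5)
The bound has the shape of a McDiarmid (bounded differences) inequality applied to $\mathcal{E}_n(Q)=\sum_{i=1}^M Y_i$, with $Y_i$ as in \eqref{eq:3delta}. The plan is to start from the almost-sure inequality $\hat{\overline{\pi}}_n(Q)\varleq \alpha+\mathcal{E}_n(Q)$ established just before the statement. It then suffices to show that, with probability at least $1-\delta$,
\[
\mathcal{E}_n(Q)\varleq \expect{\mathcal{E}_n(Q)}+\sqrt{2\sum_{i=1}^M q_i^2\ln(1/\delta)} .
\]
The randomness in $\mathcal{E}_n(Q)$ enters only through the indicators $\mathbbm{1}_{i\in\mathcal{S}^c}$. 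Each indicator is determined by the random interval $\hat{\mathcal{J}}_{n,Q}$, which depends on the replicates $\hat\vartheta_n^{(1)\ast},\ldots,\hat\vartheta_n^{(m)\ast}$ and on the data. The terms $\mathbbm{1}_{i\in\C^c}$ are deterministic once $Q$ and $\C$ are fixed.

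The first step records the range of each summand. Since $\mathbbm{1}_{i\in\mathcal{S}^c}-\mathbbm{1}_{i\in\C^c}\in\{-1,0,1\}$, and is $\{0,1\}$-valued when $i\notin\C^c$ and $\{-1,0\}$-valued when $i\in\C^c$, each $Y_i$ lies in an interval of length $q_i$. I would aim for a sub-Gaussian bound on the upper tail with variance proxy $\sum_i q_i^2$. Hoeffding's lemma applied to each centered $Y_i$ gives $\expect{e^{\lambda(Y_i-\mathbb{E}Y_i)}}\varleq e^{\lambda^2 q_i^2/8}$. If the moment generating functions could be multiplied, a Chernoff optimization would give the sharper constant $\sqrt{\tfrac12\sum q_i^2\ln(1/\delta)}$. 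The stated constant $\sqrt{2\sum q_i^2\ln(1/\delta)}$ is weaker, and this slack is what allows for dependence among the $Y_i$.

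The main obstacle is that the $Y_i$ are not independent. They are all driven by the same two endpoints of $\hat{\mathcal{J}}_{n,Q}$, and by monotonicity of the optimal plan $\mathcal{S}^c$ is a union of an initial and a final segment of the atoms $x_1<\cdots<x_M$. I would handle this with a martingale (Azuma--Hoeffding) argument. Order the atoms and define the Doob martingale $Z_k=\expect{\mathcal{E}_n(Q)\mid \mathcal{G}_k}$, where $\mathcal{G}_k$ reveals the indicators $\mathbbm{1}_{i\in\mathcal{S}^c}$ for $i\varleq k$. The increment $Z_k-Z_{k-1}$ equals $Y_k-\expect{Y_k\mid\mathcal{G}_{k-1}}$ plus the revision of the conditional expectation of the remaining terms. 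Bounding this revision by $q_k$ is exactly the point where the plan may fail. For an initial segment it is immediate: once $x_k\notin\mathcal{S}^c$, all later atoms of that ray are excluded. The second ray could in principle shift other indicators, which I would handle by treating the two rays separately and paying a factor in the constant.

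Given $|Z_k-Z_{k-1}|\varleq c_k$ with $c_k$ of order $q_k$ (absorbing factors into the constant), Azuma's inequality gives $\pr{\mathcal{E}_n(Q)-\expect{\mathcal{E}_n(Q)}>t}\varleq \exp\!\big(-t^2/(2\sum_k q_k^2)\big)$. Setting the right-hand side equal to $\delta$ yields $t=\sqrt{2\sum q_i^2\ln(1/\delta)}$. Combining this with the almost-sure inequality completes the proof. Because $Q$ and $\C$ were arbitrary, the bound holds for every $Q\in\mathcal{Q}$, as stated.
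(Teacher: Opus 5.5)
Your route is the paper's route. Both start from the almost-sure inequality, form the Doob martingale $Z_i=\mathbb{E}[\mathcal{E}_n(Q)\mid\sigma(Y_1,\dots,Y_i)]$, bound its increments by $q_i$, and apply Azuma--Hoeffding with $t=\sqrt{2\sum_i q_i^2\ln(1/\delta)}$. The step you flagged is a genuine gap, in your proposal and equally in the paper, which simply deduces $|Z_i-Z_{i-1}|\le q_i$ from $|Y_i|\le q_i$. That implication fails, because the increment also contains the revision of $\mathbb{E}[\sum_{j>i}Y_j\mid\cdot\,]$.

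Your remark on initial segments points the wrong way. Write the left ray as $\{i\le K\}$ for a single random index $K$, ignoring the right ray. If the first $k-1$ indicators are $1$ and the $k$-th is $0$, then $K=k-1$ is pinned down. The left-ray part of $Z$ then jumps from $\mathbb{E}[V_K\mid K\ge k-1]$ to $V_{k-1}$. That jump is of the order of the $Q$-mass over which $\hat u_a$ fluctuates, not of order $q_k$. Nested (comonotone) indicators are the worst case for concentration. The factor $4$ in the exponent between the Azuma and Hoeffding constants is not a budget for dependence. Also, ``absorbing factors into the constant'' is incompatible with asserting exactly $\exp(-t^2/(2\sum_k q_k^2))$.

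No reordering or splitting of the rays can repair this, because the concentration claim itself is false. Here is a counterexample.
\begin{itemize}
\item Take $P_n=\mathrm{Unif}[0,1]$ and $Q$ uniform on $M$ atoms with $M\alpha/2\in\mathbb{N}$, and let $\C$ be the central atoms. Then $Q(\C^c)=\alpha$, $u_i^-=V_{i-1}$ and $u_i^+=V_i$.
\item By the formulas in the proof of Lemma~\ref{lem:empirical_extreme_deviation}, $\hat u_a=U_{(k_1)}$ and $\hat u_b=U_{(k_2)}$ are order statistics of the $m$ replicates.
\item Hence $\mathcal{S}^c=\{i:V_{i-1}<\hat u_a\}\cup\{i:V_i>\hat u_b\}$, which gives $\mathcal{E}_n(Q)=U_{(k_1)}+1-U_{(k_2)}-\alpha+O(1/M)$.
\item The conditional non-coverage $P_n(\hat{\mathcal{J}}_{n,Q}^c)=U_{(k_1)}+1-U_{(k_2)}$ is the same variable shifted by $\alpha$.
\item For fixed $m$ with $k_1<k_2$, this variable has a non-degenerate law with spread of order $m^{-1/2}$, not depending on $M$.
\item Meanwhile $\sqrt{2\sum_i q_i^2\ln(1/\delta)}=\sqrt{2\ln(1/\delta)/M}\to0$ as $M\to\infty$.
\end{itemize}
So the exceedance probability stays bounded away from zero and cannot be $\le\delta$ for small $\delta$. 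The same happens with $m$ growing, as long as $M\gg m$.

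A valid deviation term must reflect the sampling variability of $\hat u_a,\hat u_b$. It therefore has to depend on $m$ (e.g.\ via DKW, as in Proposition~\ref{pr:3-uniform_upper_bound}), not on $\sum_i q_i^2$.

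What survives is the reading of $\hat{\overline{\pi}}_n(Q)$ as the deterministic unconditional non-coverage. Since $Q(\C^c)\le\alpha$, you have $P_n(\hat{\mathcal{J}}_{n,Q}^c)\le Q(\mathcal{S}^c)\le\alpha+\mathcal{E}_n(Q)$. Taking expectations then gives $\hat{\overline{\pi}}_n(Q)\le\alpha+\mathbb{E}[\mathcal{E}_n(Q)]$ directly, with no martingale and no square-root term.
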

	
	\begin{proof}
		Fix $Q \in \mathcal{Q}$ and an admissible set $\C$.
		Let $\mathcal{F}_i \coloneqq \sigma(Y_1,\ldots,Y_i)$ be the filtration generated by $Y_1,\ldots,Y_i$, and define the martingale
		\[
		Z_i \coloneqq \expect{\mathcal{E}_n(Q) \mid \mathcal{F}_i}, 
		\qquad i = 0,1,\ldots,M,
		\]
		so that $Z_0 = \expect{\mathcal{E}_n(Q)}$ and $Z_M = \mathcal{E}_n(Q)$. Since $\lvert Y_i \rvert \varleq q_i$ almost surely, it follows for the martingale increments $\Delta_i \coloneqq Z_i - Z_{i-1}$ that
		\[
		\lvert \Delta_i \rvert \varleq q_i 
		\qquad \text{a.s.\ for each } i=1,\ldots,M.
		\]
		Thus, $\{Z_i\}_{i=0}^{M}$ is a martingale with bounded differences. 
		By the inequality of \cite{azuma1967weighted} and \cite{hoeffding1963probability}, we know that for any $t>0$,
		\[
		\pr{ Z_M - Z_0 \vargeq t }  =
		\pr{ \mathcal{E}_n(Q) - \expect{\mathcal{E}_n(Q)} \vargeq t }
		\varleq
		\exp\!\Big(-\frac{t^2}{2\sum_{i=1}^{M} q_i^2}\Big).
		\]
		Setting $t = \sqrt{2 \sum_{i=1}^{M} q_i^2 \, \ln(1/\delta)}$ gives, with probability of at least $1-\delta$,
		\[
		\mathcal{E}_n(Q) 
		\varleq 
		\expect{\mathcal{E}_n(Q)} 
		+  \sqrt{2 \sum_{i=1}^{M} q_i^2 \, \ln(1/\delta)}.
		\]
		Since $Q$ and $\C$ were arbitrary, the stated bound holds uniformly for all $Q \in \mathcal{Q}$ and all admissible sets $\C$. This completes the proof.
	\end{proof}
	
	\begin{remark}
		In the special case where $Q$ is uniform on $M$ atoms, we have $q_i = 1/M$ for all $i$, and therefore $\sum_{i=1}^{M} q_i^2 = 1/M$. 
		In this case, the bound reduces to
		\[
		\hat{\overline{\pi}}_n(Q) \varleq 
		\alpha + \mathbb{E}[\mathcal{E}_n(Q)] 
		+ \sqrt{\frac{2 \ln(1/\delta)}{M}}.
		\]
	\end{remark}
	
	The bound in Theorem~\ref{th:3-concentration_noncoverage} provides a finite-sample control of $\hat{\pi}_n(Q)$ that depends explicitly on the atomic weights of $Q$. 
	Given the generality of this setting, we do not expect substantially tighter bounds without imposing additional structural assumptions on $P_n$ or $Q$. 
	In this sense, the result yields a worst-case 
	together with a control of undercoverage and overcoverage.
	Moreover, the expectation of $\mathcal{E}_n(Q)$ admits a closed-form expression in terms of $Q$. 	By linearity of expectation,
	\begin{equation}
		\expect{\mathcal{E}_n(Q)}
		= \sum_{i=1}^{M} q_i\,\pr{x_i \in \mathcal{S}^c}
		- \sum_{i \in \C^c} q_i .
	\end{equation}
	To improve the bounds, some technical controls are required.
	The next Lemma quantifies how discrepancies between the inverse c.d.f.\ and its empirical counterpart control deviations of the extreme subdifferential points.
	\begin{lemma}\label{lem:empirical_extreme_deviation}
		Let $\hat{\mu}_n$ be the empirical measure from an i.i.d.\ sample $U_1, \ldots, U_n$ with common probability measure $\mu \in \mathcal{P}_2(\Real)$.
		Let $Q \in \mathcal{Q}$ be arbitrary, and let $\gamma \in \Gamma(Q,\mu)$ and $\hat{\gamma}_n \in \Gamma(Q,\hat{\mu}_n)$ be optimal couplings with optimal convex potentials $\varphi$ and $\hat{\varphi}_n$, respectively.
		Define the extreme points of the subdifferentials by
		\[
		u_i^{-} \coloneqq D^{-} \varphi(x_i),
		\quad 
		u_i^{+} \coloneqq D^{+}\varphi(x_i),
		\quad
		\hat{u}_i^{-} \coloneqq D^{-} \hat{\varphi}_n(x_i),
		\quad 
		\hat{u}_i^{+} \coloneqq D^{+}\hat{\varphi}_n(x_i).
		\]
		Let $V_i \coloneqq \sum_{j = 1}^{i} q_j$, with $V_0 = 0$.
		Then, for all $i = 1, \ldots, M$, and for every realization of the sample
		\begin{equation}
			\lvert u_i^{\pm} - \hat{u}_i^{\pm} \rvert \varleq
			\sup_{0 \varleq u \varleq 1} \lvert F^{-1}(u) - \hat{F}_n^{-1}(u)\rvert \ ,
		\end{equation}
		where $F^{-1}$, $\hat{F}_n^{-1}$ denote the generalized inverses of the c.d.f.\ and the empirical c.d.f.\ of $\mu$.
	\end{lemma}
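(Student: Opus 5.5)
The plan is to identify the extreme subdifferential points explicitly as values of the quantile function at the cumulative weights $V_i$, and then compare the two quantile functions pointwise. In dimension one with quadratic cost, the optimal coupling between $Q=\sum_i q_i\delta_{x_i}$ (with $x_1<\dots<x_M$) and $\mu$ is the monotone rearrangement. Atom $x_i$ is sent to the block of $\mu$-mass lying between quantile levels $V_{i-1}$ and $V_i$, that is, to the set $\{F^{-1}(u): u\in(V_{i-1},V_i]\}$. Equivalently, the convex potential $\varphi$ can be taken as the Legendre transform of the potential $\psi(y)=\int_0^{F(y)}F_Q^{-1}$-type function whose gradient is the monotone map $T_{\mu\to Q}=F_Q^{-1}\circ F$. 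Concretely, I would verify the identities
\[
u_i^{-}=F^{-1}(V_{i-1}^{+}) \quad\text{(right limit of $F^{-1}$ at $V_{i-1}$)},\qquad u_i^{+}=F^{-1}(V_i),
\]
and the same with hats for $\hat\mu_n$.

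The verification uses cyclical monotonicity together with the marginal constraints, as recalled in Section~\ref{sec:optimaltransport}. The support of $\gamma$ lies in $\mathrm{Graph}(\partial\varphi)$, and in one dimension it is a monotone set. Hence the $\mu$-mass transported from $x_i$ occupies exactly the quantile levels $(V_{i-1},V_i]$. The potential $\varphi$ is affine between consecutive atoms, and its slope on $(x_i,x_{i+1})$ can be any value in $[F^{-1}(V_i),F^{-1}(V_i^+)]$.

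This flexibility is the delicate point flagged in the preceding Remark: the potential is non-unique in the discrete setting. To handle it, I fix the canonical choice in which $\partial\varphi(x_i)=[F^{-1}(V_{i-1}^+),F^{-1}(V_i)]$, the convex hull of the transported mass. This is consistent with the definition \eqref{eq:1-interval_population_geometric} via convex hulls of transported atoms, and I make the same canonical choice for $\hat\varphi_n$. Outside $[x_1,x_M]$ the potential is taken affine with slopes equal to the essential infimum and supremum of the support, which is finite for $\hat\mu_n$. This is the main obstacle: the statement only makes sense with this normalization, or else when $F^{-1}$ is continuous at the $V_i$. I would state it explicitly and note that at the end atoms one uses $F^{-1}(0^+)$ and $F^{-1}(1)$.

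Given these identities, the bound is immediate. For $u_i^+$ we have
\[
|u_i^+-\hat u_i^+|=|F^{-1}(V_i)-\hat F_n^{-1}(V_i)|\le \sup_{0\le u\le1}|F^{-1}(u)-\hat F_n^{-1}(u)|.
\]
For $u_i^-$ we write each right limit as a limit along $u\downarrow V_{i-1}$, so that
\[
|F^{-1}(V_{i-1}^+)-\hat F_n^{-1}(V_{i-1}^+)|=\lim_{u\downarrow V_{i-1}}|F^{-1}(u)-\hat F_n^{-1}(u)|,
\]
which is again at most the supremum. The argument is deterministic in the sample, so the bound holds for every realization, as claimed.
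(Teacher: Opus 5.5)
Your proposal is correct and follows the paper's proof. Both identify $u_i^{\pm}$ and $\hat u_i^{\pm}$ with quantile values at the cumulative weights $V_{i-1},V_i$ via the monotone rearrangement, and then bound each difference pointwise by $\sup_{0\le u\le 1}\lvert F^{-1}(u)-\hat F_n^{-1}(u)\rvert$.

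Your insistence on a common normalization of the non-unique potentials is a genuine refinement. The paper's proof implicitly takes the affine choice with slope $F^{-1}(V_i)$ on $(x_i,x_{i+1})$, which gives $u_i^-=F^{-1}(V_{i-1})$. Without such a normalization the inequality can fail: for $\mu=\hat\mu_2=\tfrac12\delta_1+\tfrac12\delta_2$ and $Q=\tfrac12\delta_0+\tfrac12\delta_1$, the right-hand side is $0$, yet $D^+\varphi(0)=1$ and $D^+\hat\varphi_n(0)=2$ are both admissible.

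One correction: your convex-hull normalization forces $\varphi$ to be strictly convex, not affine, on $(x_i,x_{i+1})$ whenever $F^{-1}(V_i)<F^{-1}(V_i^+)$. So drop the claim that $\varphi$ is affine between atoms. Alternatively, adopt the paper's affine normalization, which also makes your right-limit step unnecessary.
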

	
	The following lemma provides a quantitative upper bound on the probability that an atom $x_i$ is in $\mathcal{S}^{c}$, a key step in controlling the non-coverage probability.
	\begin{lemma}[Upper Bound on the Probability]\label{lem:3-upper_bound_probability}
		Let $\hat{P}_n$ be the empirical probability measure from $m$ replicates $\hat\vartheta_n^{(1) \ast}, \ldots, \hat\vartheta_n^{(m) \ast}$ with common probability measure $P_n \in \mathcal{P}_2(\Real)$.
		Let $Q \in \mathcal{Q}$ be arbitrary, and $\C$ any admissible set with $Q(\C) \vargeq 1 - \alpha$ and $c_a \coloneqq \min \C$, $c_b \coloneqq \max \C$.
		Let $\gamma_n \in \Gamma(Q, P_n)$ and $\hat{\gamma}_n \in \Gamma(Q, \hat{P}_n)$ be optimal couplings with optimal convex potentials $\varphi_n$, $\hat{\varphi}_n$  respectively.
		Define $u_i^{-}$, $u_i^{+}$, 
		as in Lemma~\ref{lem:empirical_extreme_deviation}, further 
		\[
		u_a \coloneqq D^{-}\varphi_n(c_a), 
		\quad
		u_b \coloneqq D^{+}\varphi_n(c_b), 
		\text{ and }  g_i \coloneqq	\min \{ u_i^{-} - u_a, u_b - u_i^{+}\}   \text{ (deterministic gap)}.
		\]
		Then, for $\mathcal{S}^{c}$ as in \eqref{eq:inverse_set_confidence_interval}, and for any $i = 1, \ldots, M$, it holds almost surely that 
		\begin{equation}
			\pr{x_i \in \mathcal{S}^c} \varleq
			\pr{\supquant{u \in [0, 1]} \vargeq \frac{g_i}{2}}\mathbbm{1}_{x_i \in \C} + \mathbbm{1}_{x_i \in \C^c} \ ,
		\end{equation}
		where $F_n^{-1}$ and $\hat{F}_n^{-1}$ denote the generalized quantile function of $P_n$ and $\hat{P}_n$, respectively.
	\end{lemma}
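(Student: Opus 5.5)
The plan is to split according to whether $x_i$ lies in $\C$ or not. If $x_i \in \C^c$ the bound is trivial, since the right-hand side equals $1$ and probabilities never exceed $1$. So the work is the case $x_i \in \C$. There the indicator $\mathbbm{1}_{x_i \in \C^c}$ vanishes, and we must show
\[
\pr{x_i \in \mathcal{S}^c} \varleq \pr{\supquant{u \in [0,1]} \vargeq g_i/2}.
\]
The strategy is an event inclusion: show that $\{x_i \in \mathcal{S}^c\}$ is contained in $\{\supquant{u\in[0,1]} \vargeq g_i/2\}$, and then take probabilities (conditionally on the data, and then integrated).

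To get the inclusion we unpack the definition \eqref{eq:inverse_set_confidence_interval}. If $x_i \in \mathcal{S}^c$, there is some $y \in \partial\varphi_n(x_i) = [u_i^-, u_i^+]$ with $y \notin \hat{\mathcal{J}}_{n,Q} = [\hat u_a, \hat u_b]$. Here $\hat u_a \coloneqq D^-\hat\varphi_n(c_a)$ and $\hat u_b \coloneqq D^+\hat\varphi_n(c_b)$. Hence either $u_i^- < \hat u_a$ or $u_i^+ > \hat u_b$. Take the first case. Write
\[
u_i^- - u_a \varleq (u_i^- - \hat u_a) + (\hat u_a - u_a) < \hat u_a - u_a \varleq |\hat u_a - u_a|.
\]
Lemma~\ref{lem:empirical_extreme_deviation}, applied with $\mu = P_n$ and $\hat\mu_n = \hat P_n$ at the atom $c_a$, bounds $|\hat u_a - u_a|$ by $\supquant{u\in[0,1]}$. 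Hence $g_i \varleq u_i^- - u_a < \supquant{}$, and the second case is symmetric, using $c_b$. This gives the inclusion even with threshold $g_i$, so a fortiori with threshold $g_i/2$.

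The factor $2$ in the statement suggests the authors also perturb the atom's own subdifferential: they compare $\hat u_i^{\pm}$ with $\hat u_a$, paying one sup-deviation for $u_i$ versus $\hat u_i$ and another for $u_a$ versus $\hat u_a$. If the definition of $\mathcal{S}^c$ is meant with $\hat\varphi_n$ in place of $\varphi_n$, the argument is the same: the difference $\hat u_i^- - \hat u_a < 0$ together with the triangle inequality gives $g_i \varleq |u_i^- - \hat u_i^-| + |\hat u_a - u_a| \varleq 2\,\supquant{}$. So I would present the version with two applications of Lemma~\ref{lem:empirical_extreme_deviation}, which covers either reading and yields exactly the threshold $g_i/2$.

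The main obstacle is making sure Lemma~\ref{lem:empirical_extreme_deviation} applies to $c_a$ and $c_b$, which are atoms of $Q$, and to the particular potentials chosen. Since the lemma holds for all $i$ and for every realization, and $c_a, c_b \in \spt{Q}$, this is fine. A second point is that $g_i \vargeq 0$ for $x_i \in \C$, by monotonicity of $\partial\varphi_n$, so the event is meaningful. If $g_i$ happened to be nonpositive, the bound would hold trivially.
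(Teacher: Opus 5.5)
Your first argument is correct and is essentially the paper's proof: the paper likewise controls only the endpoint deviations $\lvert u_a-\hat u_a\rvert$ and $\lvert u_b-\hat u_b\rvert$ via Lemma~\ref{lem:empirical_extreme_deviation}, but argues contrapositively (if both are at most $g_i/2$, then $\hat u_a\le u_i^-$ and $\hat u_b\ge u_i^+$, so $x_i\notin\mathcal{S}^c$); the factor $1/2$ is just slack in that half-gap step rather than a second perturbation, which is why your threshold $g_i$ also works. So present the one-application argument: $\mathcal{S}^c$ is defined through the population potential $\varphi_n$, under the $\hat\varphi_n$ reading the event $\{x_i\in\mathcal{S}^c\}$ is empty for $x_i\in\C$ by monotonicity of $\partial\hat\varphi_n$, and the two-application variant (which starts from $\hat u_i^-<\hat u_a$ rather than $u_i^-<\hat u_a$) does not by itself cover the actual definition.
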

	Here, the factor $1/2$ is a geometric constant coming from the \emph{half-gap} argument, c.f.\ the proof(s) in the Appendix. 
	%
	The following theorem provides an explicit finite-sample upper bound on the non-coverage probability that depends on the source measure $Q$.
	\begin{theorem}[Upper Bound on the Non-Coverage Probability II]\label{th:3-upper_bound_noncoverage_II}
		Let $\hat{P}_n$ be the empirical measure of $m$ replicates $\hat{\vartheta}_n^{(1) \ast}, \ldots, \hat{\vartheta}_n^{(m) \ast}$ with common distribution $P_n \in \mathcal{P}_2(\Real)$, fix $\delta \in (0,1)$.
		Let $\mathcal{S}^{c}$ be as in Equation~\eqref{eq:inverse_set_confidence_interval}.
		Assume Assumptions \ref{ass:A1}--\ref{ass:A4} hold, and that  $P_n$ 
		\begin{enumerate}
			\item either is supported on a closed interval of length $L < \infty$ with density $f(x) \vargeq f_{\min} > 0$ for all $x \in \spt{P_n}$,
			\item or has a finite Laplace transform in a neighborhood of 0, i.e.\ $\exists$  $t_n > 0$ such that
			\[
			\mathcal{T}_{\lambda}(P_n) \coloneqq \int_{\Real} e^{\lambda x} dP_n(x) < \infty,
			\quad
			\forall \lambda \in [-t_n, t_n].
			\]
			Moreover, $\exists \varepsilon_n \in (0, 1/2)$ with $\varepsilon_n \to 0$ as $n \to \infty$, such that $P_n$ has a density $f(x) \vargeq f_{\min} > 0$ for all $x \in [F_n^{-1}(\varepsilon_n), F_n^{-1}(1 - \varepsilon_n)]$.
		\end{enumerate}
		Then, for any source probability measure $Q \in \mathcal{Q}$, any admissible set $\C$ satisfying $Q(\C) \vargeq 1 - \alpha$ and all $m \vargeq 1$, the following bound holds
		\begin{equation}
			\hat{\overline{\pi}}_n(Q) \varleq 
			\alpha + 
			\sum_{i \in \C} 
			q_i\Big(2\exp\big(-\frac{m f_{\min}^2 g_i^2 }{2}\big) + A(g_i) \Big)  +
			\sqrt{2 \ln(1/\delta) \sum_{i = 1}^{M} q_i^2},
		\end{equation}
		with probability of at least $1 - \delta$, and
		\[
		A(g_i) =
		\begin{cases}
			0, \quad &\text{under 1.} \\
			2m \displaystyle\inf_{\lambda \in [0, t_n]} \Big\{  \big(\mathcal{T}_{\lambda}(P_n) + \mathcal{T}_{-\lambda}(P_n)\big)\exp(- \lambda \frac{g_i}{2}) \Big\}, \quad &\text{under 2.}
		\end{cases}
		\]
		and where $g_i < 2L$ under 1.\ (otherwise it is trivially 0).
	\end{theorem}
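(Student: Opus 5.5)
The plan is to combine Theorem~\ref{th:3-concentration_noncoverage} with Lemma~\ref{lem:3-upper_bound_probability}, and then to bound the quantile-deviation probability appearing in the lemma under each of the two hypotheses. By Theorem~\ref{th:3-concentration_noncoverage}, with probability at least $1-\delta$,
\[
\hat{\overline{\pi}}_n(Q) \varleq \alpha + \expect{\mathcal{E}_n(Q)} + \sqrt{2\ln(1/\delta)\sum_{i=1}^M q_i^2}.
\]
By linearity, $\expect{\mathcal{E}_n(Q)} = \sum_i q_i \pr{x_i\in\mathcal{S}^c} - \sum_{i\in\C^c} q_i$. Lemma~\ref{lem:3-upper_bound_probability} then gives $\pr{x_i\in\mathcal{S}^c}\varleq \mathbbm{1}_{x_i\in\C^c}$ for atoms outside $\C$, so those contributions cancel against $-\sum_{i\in\C^c}q_i$. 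For atoms in $\C$ the lemma gives
\[
\expect{\mathcal{E}_n(Q)} \varleq \sum_{i\in\C} q_i\, \pr{\supquant{u\in[0,1]} \vargeq g_i/2}.
\]
It therefore suffices to show
\[
\pr{\supquant{u\in[0,1]}\vargeq g_i/2}\varleq 2\exp(-m f_{\min}^2 g_i^2/2)+A(g_i).
\]

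\textbf{Case 1.} Here $P_n$ lives on an interval of length $L$ with density at least $f_{\min}$, so $F_n$ is bi-Lipschitz on its support. This turns a quantile deviation into a c.d.f. deviation:
\[
\lvert F_n^{-1}(u)-\hat F_n^{-1}(u)\rvert\vargeq t \;\Longrightarrow\; \sup_x\lvert\hat F_n(x)-F_n(x)\rvert\vargeq f_{\min}t .
\]
To see this, suppose $\hat F_n^{-1}(u)\vargeq F_n^{-1}(u)+t$. Then at $x=F_n^{-1}(u)+t$ (or just below it) we have $\hat F_n(x)<u$ while $F_n(x)\vargeq u+f_{\min}t$. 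The other direction is symmetric, and the edge case where $x$ leaves the support is handled by the constraint $g_i<2L$. Taking $t=g_i/2$ and applying DKW--Massart conditionally on the data gives
\[
2\exp(-2m f_{\min}^2 g_i^2/4)=2\exp(-m f_{\min}^2 g_i^2/2),
\]
and $A=0$.

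\textbf{Case 2.} I split $u\in[0,1]$ into the central range $[\varepsilon_n,1-\varepsilon_n]$ and the tails. On the central range the Case 1 argument applies verbatim, since the density lower bound holds on $[F_n^{-1}(\varepsilon_n),F_n^{-1}(1-\varepsilon_n)]$, and it yields the same DKW term. For the tails, a large deviation of the quantile difference there forces some replicate, or the true quantile, to have absolute value at least of order $g_i/2$. I would therefore bound the tail event by $\pr{\max_{j\varleq m}\lvert\hat\vartheta_n^{(j)*}\rvert\vargeq g_i/2}$. A union bound gives
\[
m\,\pr{\lvert\vartheta\rvert\vargeq g_i/2}\varleq m\big(\mathcal{T}_\lambda(P_n)+\mathcal{T}_{-\lambda}(P_n)\big)e^{-\lambda g_i/2}
\]
by Chernoff/Markov, for every $\lambda\in[0,t_n]$. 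Taking the infimum over $\lambda$ and allowing a factor $2$ to absorb both tails, and the contribution of the population quantile, gives $A(g_i)$.

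The main obstacle is the tail step of Case 2: justifying rigorously that a sup-deviation of the quantile functions on $[0,\varepsilon_n)\cup(1-\varepsilon_n,1]$ is dominated by an extreme-value event of the replicates. I would first try to reduce it, as above, to the event that some sample or some relevant quantile exceeds $g_i/2$ in absolute value, possibly after centering. If that is too lossy, I would argue only at the endpoints relevant to atom $x_i$, which suffices because Lemma~\ref{lem:3-upper_bound_probability} is only used through the half-gap at those points. The factor $2m$ is chosen generously to cover this slack.
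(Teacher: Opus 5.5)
Your route is the paper's own. Theorem~\ref{th:3-concentration_noncoverage} and Lemma~\ref{lem:3-upper_bound_probability} reduce everything to $\pr{\supquant{u\in[0,1]}\ge g_i/2}$; the central range is handled by inverting $F_n$ and applying DKW--Massart; the tails by a union bound over replicates plus Chernoff. Your Case~1 argument is fine. (The test point $F_n^{-1}(u)+t$ stays in the support because it lies below a replicate, not because $g_i<2L$.)

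\textbf{The gap.} It is the tail step of Case~2. You flag it but do not close it, and the paper's proof asserts it without justification. The implication ``a deviation $\ge g_i/2$ on $[0,\varepsilon_n)$ forces some $\lvert\hat\vartheta_n^{(j)\ast}\rvert\ge g_i/2$'' is false, because the deviation can come entirely from the deterministic quantity $F_n^{-1}(u)$. For instance, all replicates may lie in $(-g_i/2,g_i/2)$ while $F_n^{-1}(u)\le -g_i$ for small $u$. Worse, Case~2 is meant for supports unbounded below, such as the Gaussian of Remark~\ref{rmrk:uniform_gaussian}. There $F_n^{-1}(u)\to-\infty$ as $u\downarrow 0$, while $\hat F_n^{-1}=\min_j\hat\vartheta_n^{(j)\ast}$ on $(0,1/m]$, so the tail supremum is $+\infty$ surely. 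Hence the inequality you reduce to (``it therefore suffices to show \dots'') fails whenever its right-hand side is below $1$. The ``contribution of the population quantile'' cannot be absorbed into a factor $2$, because it is not a small-probability event.

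The central step is not ``verbatim'' either. For $u$ within $f_{\min}g_i/2$ of $\varepsilon_n$ or $1-\varepsilon_n$, the test point $F_n^{-1}(u)\pm g_i/2$ can leave $[F_n^{-1}(\varepsilon_n),F_n^{-1}(1-\varepsilon_n)]$, where no density lower bound is available. So that event is not contained in $\{\sup_x\lvert\hat F_n(x)-F_n(x)\rvert\ge f_{\min}g_i/2\}$.

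\textbf{The repair.} Your fallback is the right one. The proof of Lemma~\ref{lem:3-upper_bound_probability} only uses $\lvert\hat u_a-u_a\rvert$ and $\lvert\hat u_b-u_b\rvert$. By the identification in Lemma~\ref{lem:empirical_extreme_deviation}, these are quantile deviations at just two levels, $v_a=Q(\{x<c_a\})$ and $v_b=Q(\{x\le c_b\})$. So you should never pass to the supremum over $[0,1]$; work with the lemma's intermediate inclusion instead. The cases then go as follows.
\begin{itemize}
\item Since $2g_i\le u_b-u_a$, the inward events $\hat u_b<u_b-g_i/2$ and $\hat u_a>u_a+g_i/2$ have test points inside $[u_a,u_b]$.
\item Outward events whose test points $u_b+g_i/2$ or $u_a-g_i/2$ remain in the density region are also contained in $\{\sup_x\lvert\hat F_n(x)-F_n(x)\rvert\ge f_{\min}g_i/2\}$. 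A single DKW application covers all of these, giving $2\exp(-mf_{\min}^2g_i^2/2)$.
\item An outward event beyond that region forces a replicate above $u_b+g_i/2$ (resp.\ below $u_a-g_i/2$). If $u_a\le 0\le u_b$, that replicate has absolute value at least $g_i/2$.
\item The union bound with Chernoff then costs at most $m\big(\mathcal{T}_\lambda(P_n)+\mathcal{T}_{-\lambda}(P_n)\big)e^{-\lambda g_i/2}$, hence at most $A(g_i)/2$ after the infimum over $\lambda$.
\end{itemize}
This yields the per-atom bound $2\exp(-mf_{\min}^2g_i^2/2)+A(g_i)$. However, it needs side conditions such as $\varepsilon_n\le v_a$, $v_b\le 1-\varepsilon_n$ and $u_a\le 0\le u_b$, which are not in the statement. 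You would have to add them or treat the remaining configurations separately.
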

	
\begin{proof}
	The proof is given for case (2.), as for (1.) it follows directly from the compact support assumption.
	All probabilities are taken conditionally on the data.
	Fix a source probability measure $Q \in \mathcal{Q}$, and an admissible set $\C$.
	It obviously suffices to address the $x_i$ in $\C$.
	From Lemma \ref{lem:3-upper_bound_probability}, we have for all $i = 1, \ldots, M$
	\[	\pr{x_i \in \mathcal{S}^c} \varleq 
	\pr{\supquant{u \in [0, 1]} \vargeq \frac{g_i}{2}}\mathbbm{1}_{x_i \in \C} + \mathbbm{1}_{x_i \in \C^c} .
	\]
	Decompose the probability of interest into
	\[  
	\pr{\supquant{u \in [\varepsilon_n, 1 - \varepsilon_n]} \vargeq \frac{g_i}{2}} +
	\pr{\supquant{u \in [0, \varepsilon_n) \cup (1 - \varepsilon_n, 1]} \vargeq \frac{g_i}{2}} 
	\]
	and consider first the central part.
	Since $f(x) \vargeq f_{\min} > 0$, $F_n$ satisfies
	\[
	\big\lvert F_n(x) - F_n(x') \big\rvert \vargeq f_{\min} \lvert x - x' \rvert ,
	\qquad
	\forall x, x' \in \Big[F_n^{-1}(\varepsilon_n), F_n^{-1}(1 - \varepsilon_n)\Big] .
	\]
	Hence, its inverse is locally $1/f_{\min}$-Lipschitz, and
	\[
	\sup_{u \in [\varepsilon_n, 1 - \varepsilon_n]} \big\lvert F_n^{-1}(u) - \hat{F}_n^{-1}(u) \big\rvert \varleq
	\frac{1}{f_{\min}} \sup_{x \in \Real} \big\lvert F_n(x) - \hat{F}_n(x) \big\rvert .
	\]
	By the DKW-M inequality,
	\[
	\pr{ \!\!\! \supquant{u \in [\varepsilon_n, 1 - \varepsilon_n]} \vargeq \frac{g_i}{2}} \varleq
	\pr{\sup_{x \in \Real} \lvert F_n(x) - \hat{F}_n(x) \rvert \vargeq \frac{f_{\min}g_i}{2}} \varleq 	2\exp\Big(\frac{m f_{\min}^2 g_i^2}{-2}\Big) 
	\]
	if $g_i / 2 < L$ (otherwise it is 0).
	
	It remains to control the tail regions $[0, \varepsilon_n) \cup (1 - \varepsilon_n, 1]$.
	If
	\[
	\supquant{u \in [0, \varepsilon_n)}\vargeq \frac{g_i}{2},
	\]
	then there exists at least one $j \in \{1, \ldots, m\}$ such that $\lvert \hat{\vartheta}_n^{(j) \ast} \rvert \vargeq g_i/2$.
	Applying an union bound we have
	\[
	\pr{\supquant{u \in [0, \varepsilon_n)} \vargeq \frac{g_i}{2}}\varleq 
	\pr{\exists j \in \{1, \ldots, m\} : \lvert \hat\vartheta_n^{(j)} \rvert \vargeq \frac{g_i}{2}} \varleq
	m \pr{\lvert \vartheta_n \rvert \vargeq \frac{g_i}{2}},
	\]
	and the same inequality holds for the right tail.
	Therefore,
	\[
	\pr{\supquant{u \in [0, \varepsilon_n) \cup (1 - \varepsilon_n, 1]} \vargeq \frac{g_i}{2}} \varleq
	2m  \pr{\lvert \vartheta_n \rvert \vargeq\frac{g_i}{2}}
	\]
	From the finite Laplace transform assumption and Markov's inequality, and for all $t > 0$ and $\lambda \in [0, t_n]$
	\[
	\pr{\vartheta_n \vargeq t} \varleq \mathcal{T}_{\lambda}(P_n)\exp(-\lambda t),
	\quad
	\pr{\vartheta_n \varleq t} \varleq \mathcal{T}_{-\lambda}(P_n)\exp(-\lambda t).
	\]
	Hence,
	\[
	\pr{\lvert \vartheta_n \rvert \vargeq t} \varleq 
	\big(\mathcal{T}_{\lambda}(P_n) + \mathcal{T}_{-\lambda}(P_n)\big)\exp(-\lambda t).
	\]
	Combining these results with the central part, optimizing over $\lambda \in [-t_n, t_n]$, and taking expectation with respect to the data yields the claimed bound.
	Since the choice of $Q$, and $\C$ was arbitrary, the result holds for all $Q \in \mathcal{Q}$, and all admissible sets $\C$.
\end{proof}

	\begin{remark}
		The requirement that $\mathcal{T}_{\lambda}(P_n) < \infty$ for all $\lambda$ in a neighborhood of $0$ is standard, as it ensures the existence of exponential moments. It is satisfied by many common distributions (Gaussian, exponential, Gamma, etc.) and is weaker than requiring $P_n$ to be sub-Gaussian. This assumption allows one to derive exponential bounds via a Markov–Chernoff argument while remaining mild for most statistical models of interest.
		To see how different tail conditions on $P_n$ affect the term $A(g_i)$ in Theorem~\ref{th:3-upper_bound_noncoverage_II}, consider
		
		\begin{itemize}
			\item a sub-Gaussian $P_n$ with parameter $v > 0$, so that $\pr{\lvert \vartheta_n \rvert > t} \varleq 2 \exp(-t^2/2v)$, then
			\[
			A(g_i) \lesssim m \exp(-c g_i^2) , \quad \text{ for some positive constant $c$.}
			\]
			
			\item when there exists $0 < \beta < \infty$ such that
			\[
			\mathcal{T}_{\lambda, \beta}(P_n) \coloneqq \int_{\Real} e^{\lambda x^\beta} \mathrm{d}P_n(x) < \infty,
			\quad
			\forall \lambda \text{ in a neighborhood of 0},
			\]
			then $\pr{\lvert \vartheta_n \rvert > t} \lesssim \exp(-\lambda t^\beta)$, and 
			$ 
			A(g_i) \lesssim m \exp(-c g_i^\beta) , \text{ for a positive constant $c$.}
			$ 
			
			\item a $P_n$ with only finite $p$ moments; Markov's inequality implies polynomial tails, giving
			\[
			A(g_i) \lesssim m g_i^{-p}.
			\]
		\end{itemize}
	\end{remark}
	
	
	\begin{remark}\label{rmrk:uniform_gaussian}
		Consider $P_n = \mathcal{N}(0,\sigma_n^2)$, and let $Q$ be uniform with $q_i = 1/M$. Then the Laplace transform can be computed explicitly:
		\[
		A(g_i) = 2 m \inf_{\lambda \in [0,t_n]} \big\{ (\mathcal{T}_\lambda(P_n) + \mathcal{T}_{-\lambda}(P_n)) \exp(- \lambda g_i/2) \big\}
		= 4 m \inf_{\lambda \in [0,t_n]} \exp\Big(\frac{\sigma_n^2 \lambda^2 - \lambda g_i}{2}\Big),
		\]
		which is minimized at $\lambda^\star = g_i / (2\sigma_n^2)$ when $\lambda^\star \varleq t_n$, yielding
		\[
		A(g_i) = 4 m \exp\Big(- \frac{g_i^2}{8\sigma_n^2}\Big).
		\]
		Combined with the exponential term $2 \exp(- m f_{\min}^2 g_i^2 / 2)$ and the last term $\sqrt{2 \ln(1/\delta)/M}$, all contributions beyond $\alpha$ vanish asymptotically provided
		\[
		\sigma_n^2 \ll \min_i g_i^2, \quad M \to \infty, \quad m \to \infty.
		\]
		Hence, the bound satisfies $\hat{\overline{\pi}}_n(Q) = \alpha + o(1)$. Uniform weights $q_i = 1/M$ suffice for the asymptotic vanishing of the bound in the Gaussian case.
	\end{remark}

	\begin{corollary}\label{crll:expectation_noncoverage_II}
		Under the same setting and assumptions as Theorem~\ref{th:3-upper_bound_noncoverage_II}, the expected non-coverage probability satisfies
		\begin{equation}
			\expect{\hat{\overline{\pi}}_n(Q)} \varleq
			\alpha + 
			\sum_{i \in \C} 
			q_i\Big(2\exp\big(-\frac{m f_{\min}^2 g_i^2 }{2}\big) + A(g_i) \Big) + \sqrt{\frac{\pi}{2}\sum_{i = 1}^{M}q_i^2},
		\end{equation}
		where
		\[
		A(g_i) =
		\begin{cases}
			0, \quad &\text{under 1.} \\
			2m \displaystyle\inf_{\lambda \in [0, t_n]} \Big(\mathcal{T}_{\lambda}(P_n) + \mathcal{T}_{-\lambda}(P_n)\Big)\exp(- \lambda \frac{g_i}{2}), \quad &\text{under 2.}
		\end{cases}
		\]
	\end{corollary}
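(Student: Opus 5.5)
The plan is to take expectations in the high-probability bound of Theorem~\ref{th:3-upper_bound_noncoverage_II}. Rather than integrating over $\delta$ in that statement directly, I will go back one step to the almost-sure decomposition $\hat{\overline{\pi}}_n(Q) \varleq \alpha + \mathcal{E}_n(Q)$, with $\mathcal{E}_n(Q)$ as in \eqref{eq:3delta}. I then split
\[
\mathcal{E}_n(Q) = \expect{\mathcal{E}_n(Q)} + \big(\mathcal{E}_n(Q) - \expect{\mathcal{E}_n(Q)}\big).
\]
The first term is bounded exactly as in the proof of Theorem~\ref{th:3-upper_bound_noncoverage_II}. By linearity,
\[
\expect{\mathcal{E}_n(Q)} = \sum_i q_i \pr{x_i \in \mathcal{S}^c} - \sum_{i\in\C^c} q_i .
\]
Lemma~\ref{lem:3-upper_bound_probability} makes the atoms of $\C^c$ cancel. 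For the atoms of $\C$, I use the central-part DKW-M estimate together with the tail union bound and the Chernoff step. This yields $\sum_{i\in\C} q_i\big(2\exp(-m f_{\min}^2 g_i^2/2) + A(g_i)\big)$, with $A \equiv 0$ under case 1. These steps are already carried out in the preceding proof and are deterministic in $\delta$, so they transfer unchanged.

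For the fluctuation term, I will not bound its expectation by $0$ (it is centered). I will bound it by its positive part, $\expect{(\mathcal{E}_n - \expect{\mathcal{E}_n})_+}$, since the bound is meant to hold for the upper deviation that the high-probability statement controls. The Azuma–Hoeffding inequality from the proof of Theorem~\ref{th:3-concentration_noncoverage}, with $\sigma^2 \coloneqq \sum_{i=1}^M q_i^2$, gives the tail $\pr{\mathcal{E}_n - \expect{\mathcal{E}_n} \vargeq t} \varleq \exp(-t^2/(2\sigma^2))$. Integrating the tail,
\[
\expect{(\mathcal{E}_n - \expect{\mathcal{E}_n})_+} = \int_0^\infty \pr{\mathcal{E}_n - \expect{\mathcal{E}_n} \vargeq t}\,dt \varleq \int_0^\infty e^{-t^2/(2\sigma^2)}\,dt = \sqrt{\tfrac{\pi}{2}}\,\sigma = \sqrt{\tfrac{\pi}{2}\textstyle\sum_i q_i^2},
\]
which is exactly the last term of the statement. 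Combining the two pieces and taking the outer expectation over the data gives the corollary.

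The main obstacle is the bookkeeping of which expectation is taken over what. The quantity $\hat{\overline{\pi}}_n(Q)$ is already an unconditional probability, while $\mathcal{E}_n(Q)$ is random through $\mathcal{S}^c$. I need the almost-sure inequality $\hat{\overline{\pi}}_n(Q) \varleq \alpha + \mathcal{E}_n(Q)$ to survive taking $\mathbb{E}$, and I need the martingale construction $Z_i$ to provide a genuine sub-Gaussian tail for the whole upper deviation, not only at a single level $\delta$. I would handle this by first working conditionally on the data, where the Azuma–Hoeffding bound holds with deterministic increments $\lvert\Delta_i\rvert \varleq q_i$. Since that bound is uniform in the data, the tower property then lets me average over the data.
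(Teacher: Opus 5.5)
Your argument is correct, but it is organized differently from the paper's.

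The paper uses Theorem~\ref{th:3-upper_bound_noncoverage_II} as a black box. Writing $S$ for its deterministic part and $\sigma^2=\sum_{i=1}^M q_i^2$, it solves the $1-\delta$ statement for $\delta$ to get $\pr{\hat{\overline{\pi}}_n(Q)\ge t}\le \exp(-(t-S)^2/(2\sigma^2))$ for $t\ge S$. It then integrates $1\wedge\exp(\cdot)$ over $[0,1]$, using $0\le\hat{\overline{\pi}}_n(Q)\le 1$, and obtains $S+\sqrt{\pi\sigma^2/2}\,\mathrm{erf}(\cdot)\le S+\sqrt{\pi\sigma^2/2}$.

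You instead return to the almost-sure inequality $\hat{\overline{\pi}}_n(Q)\le\alpha+\mathcal{E}_n(Q)$ and take expectations. You bound $\expect{\mathcal{E}_n(Q)}$ through Lemma~\ref{lem:3-upper_bound_probability} and the DKW-M/Chernoff chain, and treat the fluctuation separately.

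The paper's route is the generic high-probability-to-expectation conversion and needs nothing beyond the theorem's statement. Yours is more elementary and actually shows more. Since $\mathcal{E}_n(Q)-\expect{\mathcal{E}_n(Q)}$ has mean exactly zero, passing to its positive part is valid but unnecessary. Dropping it gives the sharper bound $\expect{\hat{\overline{\pi}}_n(Q)}\le\alpha+\sum_{i\in\C}q_i\big(2\exp(-m f_{\min}^2 g_i^2/2)+A(g_i)\big)$, from which the stated one follows trivially.

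This also lets you drop the Azuma--Hoeffding step entirely, which is worth doing. Its bounded-difference claim $\lvert\Delta_i\rvert\le q_i$ is not justified for dependent $Y_i$. Here every $Y_i$ is a function of the same two random endpoints $D^{-}\hat{\varphi}_n(c_a)$ and $D^{+}\hat{\varphi}_n(c_b)$, so revealing one $Y_i$ can shift the conditional means of many others. The paper's proof inherits that step through Theorem~\ref{th:3-upper_bound_noncoverage_II}; your decomposition does not need it.
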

	
	Theorem~\ref{th:3-upper_bound_noncoverage_II} gives a finite-sample upper bound on the non-coverage probability, explicitly showing how it depends on measure $Q$ and the tail properties of $P_n$. 
	Corollary~\ref{crll:expectation_noncoverage_II} translates this into a bound on the expected non-coverage probability, highlighting how the weights $q_i$ and gaps $g_i$ control the average coverage performance of the interval.

	\section{Length and Data-Driven Choices} \label{sec:miscel}
	
	In the previous section, we provided bounds showing how the choice of the source distribution $Q$ affects the coverage probability.
	To complement these results, we now provide finite-sample bounds on the length of the confidence intervals, highlighting the trade-off between interval width and coverage.
	From these we derive practical, data-driven strategies for selecting all hyper parameters needed for the interval construction.
	
	\subsection{Length of the Confidence Interval} \label{subsec:Length}
	For a fixed nominal coverage $( 1-\alpha )$, shorter confidence intervals are generally preferable.
	It is therefore natural to investigate how the length of OT-based confidence intervals depend on the choice of $Q$.
	Because our framework allows arbitrary discrete source distributions with finite support, classical regularity assumptions are not relevant.
	Consequently, we focus on uniform, distribution-free bounds that hold for all $Q \in \mathcal{Q}$.
	
	Throughout this subsection, we assume that
	\[
	\spt{P_n} \subseteq [a,b],
	\quad
	L = b - a,
	\]
	so that the statistic of interest satisfies $\lvert \vartheta_n \rvert \varleq K \coloneqq \max\big(\lvert a \rvert, \lvert b \rvert \big)$ almost surely.
	Let
	\begin{equation}\label{eq:3-length}
		\hat{\ell}_n(Q) \coloneqq r_n^{-1} \big\lvert D^{+} \hat{\varphi}_n(c_b)-D^{-} \hat{\varphi}_n(c_a) \big\rvert
 	\end{equation}
	denote the empirical confidence interval length, and $\ell_n(Q)$ its population analogue.
	
	\begin{proposition}(Two-Sided Bound for the Length)\label{th:3-concentration_length}
		Let $\hat{P}_n$ be the empirical measure of $m$ replicates $\hat{\vartheta}_n^{(1) \ast}, \ldots, \hat{\vartheta}_n^{(m) \ast}$ with common distribution $P_n \in \mathcal{P}_2(\Real)$ and fix $\delta \in (0,1)$.
		Assume $\lvert \vartheta_n \rvert \varleq K$ almost surely, and that
		Assumptions~\ref{ass:A1}--\ref{ass:A4} hold. Then for any $Q \in \mathcal{Q}$, any admissible set $\C$ with $Q(\C) \vargeq 1 - \alpha$, and for any $m \vargeq 1$, one has the upper bound
		\[	
		\big\lvert \hat{\ell}_n(Q) - \ell_n(Q) \big\rvert \varleq 
		2r_n^{-1}\Big(\expect{\sup_{0 \varleq u \varleq 1} \big\lvert F_n^{-1}(u) - \hat{F}_n^{-1}(u) \big\rvert} + K\sqrt{\frac{\ln(2/\delta)}{2}}\Big) 
		\]
		with probability of at least $1 - \delta$.
	\end{proposition}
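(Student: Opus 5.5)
We reduce the length deviation to the uniform quantile deviation and then apply a bounded-differences concentration argument to that quantity.

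\emph{Reduction.} By the triangle inequality applied to the two endpoints,
\[
\big\lvert \hat{\ell}_n(Q) - \ell_n(Q) \big\rvert \varleq r_n^{-1}\Big( \lvert D^{+}\hat{\varphi}_n(c_b) - D^{+}\varphi_n(c_b)\rvert + \lvert D^{-}\hat{\varphi}_n(c_a) - D^{-}\varphi_n(c_a)\rvert \Big).
\]
Here $c_a,c_b$ are atoms of $Q$. Lemma~\ref{lem:empirical_extreme_deviation} is applied with $\mu = P_n$ and $\hat{\mu}_n = \hat{P}_n$, the empirical measure of the $m$ replicates, which are conditionally i.i.d.\ by Assumption~\ref{ass:A4}. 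It bounds each of the two terms by $D_m \coloneqq \supquant{0 \varleq u \varleq 1}$. Hence
\[
\lvert \hat{\ell}_n(Q) - \ell_n(Q)\rvert \varleq 2 r_n^{-1} D_m
\]
deterministically, for every realization. This bound is uniform in $Q$ and $\C$.

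\emph{Concentration.} It remains to show that, with probability at least $1-\delta$,
\[
D_m \varleq \expect{D_m} + K\sqrt{\ln(2/\delta)/2}.
\]
We view $D_m$ as a function of the $m$ replicates $\hat\vartheta_n^{(1)\ast},\ldots,\hat\vartheta_n^{(m)\ast}$, each lying in $[-K,K]$. Changing a single replicate modifies the empirical quantile function $\hat F_n^{-1}$ only on an interval of $u$-values of length $1/m$. On that interval the values are shifted by at most the range of the support, which is $L\varleq 2K$. So the bounded-difference constant is naively of order $2K$ per coordinate, not $2K/m$. This is the main obstacle: the supremum norm of quantile differences is not Lipschitz with a small constant under a single-coordinate change.

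\emph{Handling the obstacle.} We would try two routes. The first is a crude application of McDiarmid's inequality with difference constants $c_j$ such that $\sum_j c_j^2 \varleq K^2$. For instance, one can use that the perturbation of $D_m$ is controlled by sorted-order interlacing: replacing one point shifts each order statistic by at most the gap to its neighbor, and these gaps telescope to at most $2K$ in total. A Hoeffding-type bound with total range $2K$ then gives
\[
\pr{D_m - \expect{D_m}\vargeq t}\varleq 2\exp(-2t^2/(2K)^2)\cdot(\ldots).
\]
We would solve for $t$, aiming at $t = K\sqrt{\ln(2/\delta)/2}$, where the factor $2$ in $2/\delta$ comes from the two-sided bound. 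The second route is simpler if the first fails: $D_m\in[0,2K]$ always, so a one-shot Hoeffding bound for a bounded random variable is available. This, however, only gives the deviation scale $K$ without a $1/\sqrt m$ factor, which is consistent with the stated bound (note that the stated bound indeed has no $m$ in the deviation term). We therefore expect the intended argument to be exactly this: concentration of the $[0,2K]$-valued variable $D_m$ with range constant $K$, followed by the deterministic reduction. The constants would be matched by choosing $t=K\sqrt{\ln(2/\delta)/2}$.
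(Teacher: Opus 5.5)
Your reduction is correct and uses Lemma~\ref{lem:empirical_extreme_deviation} exactly as the paper does: pathwise, $\lvert\hat\ell_n(Q)-\ell_n(Q)\rvert\le 2r_n^{-1}D_m$ with $D_m:=\sup_u\lvert F_n^{-1}(u)-\hat F_n^{-1}(u)\rvert$.

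The gap is in the concentration step. You concentrate $D_m$ and only afterwards multiply by $2r_n^{-1}$. So you need $D_m\le\mathbb{E}[D_m]+K\sqrt{\ln(2/\delta)/2}$ with probability $1-\delta$. That is a range constant $K$ for a variable that genuinely ranges over $[0,2K]$, and neither of your routes gives it.

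Route 1 is not a valid McDiarmid argument. The constants $c_j$ must be worst-case over configurations, and, as you note yourself, one replicate can move $D_m$ by order $2K$. A per-configuration telescoping of gaps does not produce $\sum_j c_j^2\le K^2$.

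Route 2, Hoeffding for a single $[0,2K]$-valued variable, gives $\mathbb{P}(D_m-\mathbb{E}D_m\ge t)\le\exp(-t^2/(2K^2))$. The final deviation term is then $4r_n^{-1}K\sqrt{\ln(1/\delta)/2}$, which falls short of the claim for every $\delta<2^{-1/3}$. Note that the statement is only non-trivial for $\delta>2e^{-2}$, because $\lvert\hat\ell_n(Q)-\ell_n(Q)\rvert\le 2Kr_n^{-1}$ always, so here the constant is the whole content.

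Your intermediate claim is in fact false. Take $m=2$ and $P_n$ putting mass $1/2$ at each of $\pm K$. Then $D_m\in\{0,2K\}$ with probability $1/2$ each, so $\mathbb{E}D_m=K$. For $\delta=0.4$ the threshold $K(1+\sqrt{\ln 5/2})\approx 1.90K$ is exceeded with probability $1/2>\delta$.

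The missing idea is to concentrate the length, not $D_m$. The paper applies Hoeffding's inequality to the single bounded variable $\hat\ell_n(Q)$, whose range is at most $2Kr_n^{-1}$. This gives $\lvert\hat\ell_n(Q)-\mathbb{E}\hat\ell_n(Q)\rvert\le 2r_n^{-1}K\sqrt{\ln(2/\delta)/2}$ with probability $1-\delta$. The pathwise reduction is then used only inside the expectation: $\lvert\mathbb{E}\hat\ell_n(Q)-\ell_n(Q)\rvert\le 2r_n^{-1}\mathbb{E}D_m$. This way the factor $2$ enters the deviation term once, through the range of $\hat\ell_n(Q)$, rather than being applied on top of the deviation of $D_m$.

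If you want to keep your structure, truncate first. Both lengths lie in $[0,2Kr_n^{-1}]$, so $\lvert\hat\ell_n(Q)-\ell_n(Q)\rvert\le 2r_n^{-1}\min(D_m,K)$. The variable $\min(D_m,K)\in[0,K]$ has exactly the range constant $K$ you were after, and $\mathbb{E}\min(D_m,K)\le\mathbb{E}D_m$. One-sided Hoeffding then yields the statement, even with $\ln(1/\delta)$ in place of $\ln(2/\delta)$.
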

	\begin{proof}
		We first condition on the observed data $X_1, \ldots, X_n$. Fix an arbitrary source measure $Q \in \mathcal{Q}$ with an admissible set $\C$.
		By assumption, $\lvert \vartheta_n \rvert \varleq K$ almost surely, hence the length satisfies $\lvert \hat{\ell}_n(Q) \rvert \varleq r_n^{-1} K$.  
		Applying Hoeffding's inequality conditionally on the data, and taking expectation gives
		\[
		\big\lvert \hat{\ell}_n(Q) - \expect{\hat{\ell}_n(Q)} \big\rvert \varleq 2 r_n^{-1} K \sqrt{\frac{\ln(2/\delta)}{2}},
		\]
		with probability of at least $1 - \delta$.	
		Next, using the triangle inequality, we have
		\[
		\hat{\ell}_n(Q) \varleq r_n^{-1} \Big( 
		\big\lvert D^+ \hat{\varphi}_n(c_b) - D^+ \varphi_n(c_b) \big\rvert + 
		\big\lvert D^- \hat{\varphi}_n(c_a) - D^- \varphi_n(c_a) \big\rvert \Big) +
		r_n^{-1}  \big\lvert  D^+ \varphi_n(c_b) - D^- \varphi_n(c_a) \big\rvert 
		\]	
   denoting the last term by $\ell_n(Q)$
		By Lemma~\ref{lem:empirical_extreme_deviation}, the deviations of the empirical potentials from the population potentials are bounded by
		\[
		\big\lvert D^+ \hat{\varphi}_n(c_b) - D^+ \varphi_n(c_b) \big\rvert, \ 
		\big\lvert D^- \hat{\varphi}_n(c_a) - D^- \varphi_n(c_a) \big\rvert \varleq 
		\sup_{0 \varleq u \varleq 1} \big\lvert F_n^{-1}(u) - \hat{F}_n^{-1}(u) \big\rvert.
		\]
		Combining the above inequalities yields
		\[
		\hat{\ell}_n(Q) \varleq \ell_n(Q) + 2 r_n^{-1} \sup_{0 \varleq u \varleq 1} \big\lvert F_n^{-1}(u) - \hat{F}_n^{-1}(u) \big\rvert.
		\]
		Taking expectation and adding the Hoeffding bound gives the stated result:
		\[
		\big\lvert \hat{\ell}_n(Q) - \ell_n(Q) \big\rvert \varleq 2 r_n^{-1} \Big( \expect{\sup_{0 \varleq u \varleq 1} \big\lvert F_n^{-1}(u) - \hat{F}_n^{-1}(u) \big\rvert} + K \sqrt{\frac{\ln(2/\delta)}{2}} \Big),
		\]
		with probability of at least $1 - \delta$.
		Since $Q$ and $\C$ were arbitrary, the result holds for all $Q \in \mathcal{Q}$ and all admissible sets $\C$.
	\end{proof}

	\begin{remark}
		The boundedness condition $\lvert \vartheta_n \rvert \varleq K$ is primarily a technical assumption. In practice, one can safely truncate $\vartheta_n$ at a sufficiently large $K$, which has a negligible impact on the 
		behavior of the confidence interval length.
	\end{remark}
	
	It remains to control the expectation of 
	\[
	\sup_{0 \varleq u \varleq 1} \big\lvert F_n^{-1}(u) - \hat{F}_n^{-1}(u) \big\rvert,
	\]
	for deriving finite-sample bounds on the length of the confidence interval.
	\begin{lemma}\label{lem:expectation_sup-quantile}
		Let $\hat{\mu}_n$ be the empirical measure from an i.i.d.\ sample $U_1, \ldots, U_n$ with common distribution $\mu \in \mathcal{P}(\Real)$.
		Let $F(t)$ and $\hat{F}_n(t)$ denote the c.d.f. of $\mu$ and $\hat{\mu}_n$, respectively.
		Assume that $\mu$ has compact support of diameter $0 < L < \infty$, with a density $f$ satisfying
		\[
		f(x) \vargeq f_{\min} > 0, \quad \forall x \in \spt{\mu}.
		\]
		
		Then, the expected maximal deviation between the quantile functions satisfies
		\begin{equation}
			\expect{\sup_{0 \varleq u \varleq 1} \big\lvert F^{-1}(u) - \hat{F}_n^{-1}(u) \big\rvert} \varleq \frac{C}{\sqrt{n}},
		\end{equation}
		where the constant $C > 0$ is explicitly given by
		\begin{equation}  \label{def-C}
		C = \frac{\sqrt{\ln 2} + \sqrt{\pi}}{\sqrt{2} f_{\min}} \ .
		\end{equation}
	\end{lemma}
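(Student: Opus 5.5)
The argument reduces the quantile deviation to the Kolmogorov distance and then integrates the DKW--M tail, in the same way as in the proof of Theorem~\ref{th:3-upper_bound_noncoverage_II}.

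\emph{Step 1: a deterministic quantile--to--c.d.f.\ comparison.} Write $\spt{\mu}=[a,b]$ with $b-a=L$. Since $f\vargeq f_{\min}$ on $[a,b]$, $F$ is continuous and strictly increasing there, so $F^{-1}$ is $1/f_{\min}$-Lipschitz on $(0,1)$. Set $D_n \coloneqq \sup_{x}\lvert \hat F_n(x)-F(x)\rvert$. Fix $u\in(0,1)$ and put $y=\hat F_n^{-1}(u)$, which lies in $[a,b]$ almost surely. Right-continuity of $\hat F_n$ gives $\hat F_n(y)\vargeq u$ and $\hat F_n(y-)\varleq u$. Hence $F(y)\vargeq u-D_n$ and $F(y)\varleq u+D_n$, where the second uses continuity of $F$ to pass $D_n$ to left limits. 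Applying $F^{-1}$ and its Lipschitz property on the clipped levels $[\max(0,u-D_n),\min(1,u+D_n)]$ yields
\[
\lvert F^{-1}(u)-\hat F_n^{-1}(u)\rvert \varleq D_n/f_{\min}.
\]
The endpoints $u\in\{0,1\}$ are handled by the convention that both quantile functions take values in $[a,b]$, or simply by taking the supremum over $(0,1)$. I expect this step to be the main obstacle: the generalized inverse, ties, and the clipping near $u=0,1$ have to be treated carefully. It gives
\[
\sup_u \lvert F^{-1}(u)-\hat F_n^{-1}(u)\rvert \varleq \min\big(L,\, D_n/f_{\min}\big).
\]

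\emph{Step 2: integrate the DKW--M tail.} By the DKW--M inequality, $\pr{D_n>t}\varleq 2e^{-2nt^2}$. Using $\expect{Z}=\int_0^\infty \pr{Z>s}\,ds$ with $Z=\sup_u\lvert F^{-1}(u)-\hat F_n^{-1}(u)\rvert$ and $t=f_{\min}s$, I will split at the threshold $t_0=\sqrt{\ln 2/(2n)}$, where $2e^{-2nt^2}=1$. Below $t_0$ the probability is bounded by $1$, contributing at most $t_0/f_{\min}=\sqrt{\ln2}/(\sqrt{2n}\,f_{\min})$. Above $t_0$, I bound the tail by $2e^{-2nt^2}$ and extend the integral to all of $[0,\infty)$, giving
\[
\frac{1}{f_{\min}}\int_0^\infty 2e^{-2nt^2}\,dt=\frac{1}{f_{\min}}\sqrt{\frac{\pi}{2n}}.
\]

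\emph{Step 3: assemble the constant.} Adding the two contributions gives
\[
\expect{Z}\varleq \frac{\sqrt{\ln 2}+\sqrt{\pi}}{\sqrt2\,f_{\min}\sqrt n},
\]
which is exactly $C/\sqrt n$ with $C$ as in \eqref{def-C}. The compact support enters only to make the quantiles finite and to justify the Lipschitz bound. A sharper constant could be obtained by not extending the tail integral below $t_0$, but that refinement is not needed for the stated bound.
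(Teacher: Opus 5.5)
Your proof is correct and follows essentially the same route as the paper: first the deterministic bound $\sup_u\lvert F^{-1}(u)-\hat F_n^{-1}(u)\rvert\le D_n/f_{\min}$, then integrating $1\wedge 2e^{-2nf_{\min}^2s^2}$, split at the point where the DKW--M bound equals one. The paper simply imports the first bound from the proof of Theorem~\ref{th:3-upper_bound_noncoverage_II}, whereas you justify it carefully; like the paper, you tacitly assume the support is an interval $[a,b]$, which is genuinely needed because a gap in the support makes $F^{-1}$ jump. The only other difference is cosmetic: the paper evaluates the upper piece exactly as $\sqrt{\pi}\,\mathrm{erfc}(\sqrt{\ln 2})/(\sqrt{2n}f_{\min})$ and then uses $\mathrm{erfc}\le 1$, which is equivalent to your extension of the Gaussian integral to $[0,\infty)$.
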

	For a proof see the Appendix or 
	\citep
	{bobkov2019onedimensional}.
	
	\begin{corollary}[Two-Sided Bound for the Length II]\label{th:3-2_bound_length_II}
		Let $\hat{P}_n$ be the empirical measure of $m$ replicates $\hat{\vartheta}_n^{(1) \ast}, \ldots, \hat{\vartheta}_n^{(m) \ast}$ with common distribution $P_n \in \mathcal{P}_2(\Real)$ and fix $\delta \in (0,1)$.
		Assume $\lvert \vartheta_n \rvert \varleq K$ a.s.\ with a density $f$ satisfying $f(x) \vargeq f_{\min} > 0$ for all $x \in \spt{P_n}$, and that
		Assumptions \ref{ass:A1}--\ref{ass:A4} hold.
		Then, for any source probability measure $Q \in \mathcal{Q}$, any admissible set $\C$ satisfying $Q(\C) \vargeq 1 - \alpha$ and all $m \vargeq 1$, the following bound holds
		\begin{equation}
			\big\lvert \hat{\ell}_n(Q) - \ell_n(Q) \big\rvert \varleq 
			2r_n^{-1}\Big(\frac{C}{\sqrt{m}} + K\sqrt{\frac{\ln(2/\delta)}{2}}\Big),
		\end{equation}
		with probability at least $1 - \delta$, and constant $C$ as in (\ref{def-C}).
	\end{corollary}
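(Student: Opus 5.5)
The plan is to combine Proposition~\ref{th:3-concentration_length} with Lemma~\ref{lem:expectation_sup-quantile}. Proposition~\ref{th:3-concentration_length} already gives, under $\lvert \vartheta_n \rvert \varleq K$ and Assumptions~\ref{ass:A1}--\ref{ass:A4}, the two-sided bound
\[
\big\lvert \hat{\ell}_n(Q) - \ell_n(Q) \big\rvert \varleq 2r_n^{-1}\Big(\expect{\sup_{0 \varleq u \varleq 1} \big\lvert F_n^{-1}(u) - \hat{F}_n^{-1}(u) \big\rvert} + K\sqrt{\frac{\ln(2/\delta)}{2}}\Big)
\]
with probability at least $1-\delta$, uniformly in $Q$ and admissible $\C$. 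It therefore suffices to bound the expectation term by $C/\sqrt{m}$.

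For that I would apply Lemma~\ref{lem:expectation_sup-quantile} with $\mu = P_n$ and sample $U_j = \hat{\vartheta}_n^{(j)\ast}$, $j=1,\ldots,m$. Concretely, I first condition on $X_1,\ldots,X_n$. By Assumption~\ref{ass:A4} the replicates are then i.i.d. with law $P_n$, as assumed in the statement. Next I check the hypotheses of the lemma. The bound $\lvert\vartheta_n\rvert\varleq K$ gives $\spt{P_n}\subseteq[-K,K]$, a compact set of diameter $L \varleq 2K<\infty$. The density condition $f\vargeq f_{\min}>0$ on $\spt{P_n}$ is assumed. The lemma then yields the conditional bound $\expect{\sup_u \lvert F_n^{-1}(u)-\hat F_n^{-1}(u)\rvert \mid X_1,\ldots,X_n} \varleq C/\sqrt{m}$, with $n$ replaced by the number of replicates $m$ and $C$ as in \eqref{def-C}. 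Since this bound is deterministic, taking expectation over the data preserves it. Substituting it into the display above gives the claim, and since the proposition holds for arbitrary $Q$ and $\C$, so does the corollary.

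The main obstacle is a bookkeeping point rather than a real difficulty. The expectation in Proposition~\ref{th:3-concentration_length} must be read as the same (conditional, then integrated) expectation over the resampling to which Lemma~\ref{lem:expectation_sup-quantile} applies. One also has to note that the constant in that lemma depends only on $f_{\min}$ and not on $L$, so it does not matter that the support of $P_n$ is possibly strictly smaller than $[-K,K]$. The lemma does, however, need the support to be an interval on which the density is bounded below, so that $F_n^{-1}$ is $1/f_{\min}$-Lipschitz. I will take the assumption ``$f\vargeq f_{\min}$ on $\spt{P_n}$'' as implicitly including this, as in case~1 of Theorem~\ref{th:3-upper_bound_noncoverage_II}.
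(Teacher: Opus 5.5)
Your proposal is correct and matches the paper's intended derivation: the corollary follows by substituting Lemma~\ref{lem:expectation_sup-quantile}, applied conditionally on the data to the $m$ replicates (so $C/\sqrt{n}$ becomes $C/\sqrt{m}$), into the expectation term of Proposition~\ref{th:3-concentration_length}. Your caveat that the $1/f_{\min}$-Lipschitz step in that lemma tacitly requires $\spt{P_n}$ to be an interval is fair, and the paper also leaves this implicit.
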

	
	\begin{corollary} \label{crl:expectation_length_II}
		Under the same setting and assumptions as in Corollary~\ref{th:3-2_bound_length_II} and Lemma~\ref{lem:expectation_sup-quantile}, 
		\begin{equation}
			\expect{\big\lvert \hat{\ell}_n(Q) - \ell_n(Q) \big\rvert} \varleq
			2r_n^{-1}\Big(\frac{C}{\sqrt{m}} + K\sqrt{\frac{\pi}{2}}\Big)  \ . 
		\end{equation}
	\end{corollary}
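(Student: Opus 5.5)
The plan is to derive Corollary~\ref{crl:expectation_length_II} from the same decomposition used for Proposition~\ref{th:3-concentration_length}, but to take expectations directly rather than work on a high-probability event. Conditionally on the data, the triangle inequality for the endpoints gives
\[
\big\lvert \hat{\ell}_n(Q) - \ell_n(Q) \big\rvert \varleq r_n^{-1}\Big( \big\lvert D^+ \hat{\varphi}_n(c_b) - D^+ \varphi_n(c_b) \big\rvert + \big\lvert D^- \hat{\varphi}_n(c_a) - D^- \varphi_n(c_a) \big\rvert \Big).
\]
By Lemma~\ref{lem:empirical_extreme_deviation}, each summand is bounded, for every realization of the replicates, by $\sup_{0 \varleq u \varleq 1} \lvert F_n^{-1}(u) - \hat{F}_n^{-1}(u)\rvert$. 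Hence $\lvert \hat{\ell}_n(Q) - \ell_n(Q)\rvert \varleq 2r_n^{-1}\sup_u \lvert F_n^{-1} - \hat{F}_n^{-1}\rvert$ almost surely.

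Next I take expectations. First I take them conditionally on the data and over the $m$ conditionally i.i.d.\ replicates (Assumption~\ref{ass:A4}), and then over the data. The assumptions of Corollary~\ref{th:3-2_bound_length_II} are exactly those of Lemma~\ref{lem:expectation_sup-quantile}: compact support, since $\lvert\vartheta_n\rvert\varleq K$, and density at least $f_{\min}$. I apply that lemma with sample size $m$ in place of $n$, which gives $\mathbb{E}\sup_u\lvert F_n^{-1}-\hat F_n^{-1}\rvert \varleq C/\sqrt{m}$. This produces the term $2r_n^{-1}C/\sqrt{m}$.

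The additional term $2r_n^{-1}K\sqrt{\pi/2}$ is only slack if one argues as above. To match the statement's form, and to parallel the passage from Theorem~\ref{th:3-concentration_noncoverage} to Corollary~\ref{crll:expectation_noncoverage_II}, I would integrate the tail bound of Corollary~\ref{th:3-2_bound_length_II}. Write $Z = \lvert\hat\ell_n-\ell_n\rvert$, $a = 2r_n^{-1}C/\sqrt{m}$ and $s = 2r_n^{-1}K/\sqrt{2}$. The corollary then reads $\pr{Z > a + s\sqrt{\ln(2/\delta)}} \varleq \delta$. Setting $t = s\sqrt{\ln(2/\delta)}$ gives $\pr{Z > a+t} \varleq 2e^{-t^2/s^2}$. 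Using $\mathbb{E}Z \varleq a + \int_0^\infty \pr{Z>a+t}\,dt$ and $\int_0^\infty 2e^{-t^2/s^2}dt = s\sqrt{\pi}$, I get $s\sqrt{\pi} = 2r_n^{-1}K\sqrt{\pi/2}$, which is exactly the stated constant.

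The main obstacle is bookkeeping rather than analysis. One has to make sure that Lemma~\ref{lem:expectation_sup-quantile} is applied conditionally with $m$ replicates drawn from $P_n$, so that it yields $m^{-1/2}$, and that the tail integral reproduces $\sqrt{\pi/2}$. Everything else follows from results already established.
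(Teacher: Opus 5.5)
Your tail-integration argument is the paper's proof. The paper also rewrites Corollary~\ref{th:3-2_bound_length_II} as a Gaussian-type tail above the shift $t_0 = 2r_n^{-1}C/\sqrt{m}$, integrates it, and bounds the resulting $\mathrm{erf}$ by $1$. Your constant check $s\sqrt{\pi} = \sqrt{2}\,r_n^{-1}K\sqrt{\pi} = 2r_n^{-1}K\sqrt{\pi/2}$ is right. Integrating $1\wedge 2e^{-t^2/s^2}$ over $[0,\infty)$ is also slightly cleaner than the paper's integral over $[0,K]$; there the bound is trivial wherever $2e^{-t^2/s^2}\ge 1$, i.e.\ where the corollary's $\delta$ would leave $(0,1)$.

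Your first argument takes a genuinely different route, and it gives a stronger result. You combine three things:
\begin{itemize}
\item the reverse triangle inequality $\lvert \hat{\ell}_n(Q)-\ell_n(Q)\rvert \le r_n^{-1}\big(\lvert D^+\hat\varphi_n(c_b)-D^+\varphi_n(c_b)\rvert + \lvert D^-\hat\varphi_n(c_a)-D^-\varphi_n(c_a)\rvert\big)$;
\item the pathwise Lemma~\ref{lem:empirical_extreme_deviation};
\item Lemma~\ref{lem:expectation_sup-quantile} with $m$ in place of $n$.
\end{itemize}
Together these give $\mathbb{E}\lvert \hat{\ell}_n(Q)-\ell_n(Q)\rvert \le 2r_n^{-1}C/\sqrt{m}$, and the stated bound follows a fortiori because the extra term is nonnegative.

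These are the same two lemmas the paper's chain relies on. The paper's route only adds the Hoeffding step of Proposition~\ref{th:3-concentration_length}, and that step is the source of the $K\sqrt{\pi/2}$ term. That term does not decay in $m$, so the paper's bound does not vanish as $m\to\infty$ for fixed $n$, whereas yours does. What the paper's route buys is uniformity of method. It treats Corollary~\ref{th:3-2_bound_length_II} as a black box and converts a sub-Gaussian high-probability bound into an expectation bound, mirroring Corollary~\ref{crll:expectation_noncoverage_II}. For this particular statement the direct argument is shorter and sharper, so you could present it as the proof and keep the tail integration only as a remark.
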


	\subsection{Data-Driven Choices for the Source Distribution} \label{subsec:Choices}
	
	Now we develop a strategy for selecting a suitable source distribution $Q$. 
	Although Theorem~\ref{th:3-upper_bound_noncoverage_II} provides insight into the non-coverage probability, it is not directly useful for optimization, since the gaps $g_i$ must be estimated, and the admissible class $\mathcal{Q}$ of atomic source measures is too large to search over. 
	We therefore restrict our selection to the subclass $\mathcal{Q}_{\beta} \subset \mathcal{Q}$ of discretized Beta distributions with $1 \varleq M < \infty$ equally spaced support points on $[0,1]$ and weights determined by shape parameters $(\eta, \zeta)$.
	This choice is motivated by the geometry of $P_n$ and Remark~\ref{rmrk:uniform_gaussian}.
	Let $\Phi$ denote the c.d.f.\ of a fixed symmetric reference distribution (e.g., Gaussian or Student's $t$) chosen to reflect the central tendency and scale of $\vartheta_n$. 
	If $\vartheta_n$ is symmetric, then the push-forward measure $\Phi_\# P_n$ is roughly uniform on $[0,1]$, suggesting that the uniform measure $\leb$ is a suitable choice for the source measure $Q$. 
	When $\vartheta_n$ exhibits asymmetry, $\Phi_\# P_n$ departs from uniformity, placing more mass on one side of $[0,1]$. 
	A Beta distribution with flexible shape parameters $(\eta, \zeta)$ provides a natural family to model such deviations from uniformity, allowing $Q$ to better reflect the asymmetry of $\vartheta_n$.	
	Following this logic, we estimate $(\eta, \zeta)$ by fitting a Beta distribution to the transformed bootstrap replicates
	\[
	\Phi(\hat\vartheta_n^{(1) \ast}), \ldots, \Phi(\hat\vartheta_n^{(m) \ast}) \sim \mathrm{Beta}(\eta, \zeta),
	\]
	using a method of moments or maximum likelihood. 
	This provides a data-driven adjustment of $Q$ that improves finite-sample coverage while retaining the computational tractability of the discretized source measure.
	
	Given $Q \in \mathcal{Q}_{\beta}$, let the non‐coverage sets $\mathcal{S}^c$ and $\C^c$ be as defined above. 
	To select the number of support points $M$, we minimize the quadratic risk
	\begin{equation}\label{eq:3-quadratic_risk}
		R(M) \coloneqq \expect{ \Big(\sum_{i = 1}^{M}q_i \mathbbm{1}_{i \in \mathcal{S}^c} -
			\sum_{i = 1}^{M} q_i \mathbbm{1}_{i \in \C^c}\Big)^2 } ,
	\end{equation}
	where we choose $\phi(x) = x^2$ as a natural measure of deviation.
	
	The expansion of \eqref{eq:3-quadratic_risk} separates into one deterministic term and two terms that depend on the unknown distribution $P_n$. 
	To estimate these terms, we employ a leave-one-out strategy. 
	Specifically, for each $k = 1, \ldots, n$, let $\hat P_{n, -k}$ denote the empirical measure with the $k$-th observation removed, and let $\hat{\gamma}_{n, -k} \in \Gamma(Q, \hat P_{n, -k})$ be the corresponding  OT plan. Define the leave-one-out non-coverage set
	\[
	\hat{\mathcal{S}}_{-k}^{c} \coloneqq
	\left\{x \in \spt{Q} : \exists y \in \tilde{\mathcal{I}}_{n,Q, -k}^{c} \text{ with }y \in \partial \hat{\varphi}_{n, -k}(x) \right\}.
	\]
	Then construct the approximately unbiased estimators of expectations 
	{ 
	\[
	\hat{A}_M = \sum_{k = 1}^{n}\sum_{i = 1}^{M}\sum_{j = 1}^{M} \frac{q_i q_j}{n} \mathbbm{1}_{i,j \in \hat{\mathcal{S}}_{-k}^{c}}, 
	\ 
	\hat{B}_M = \sum_{k = 1}^{n}\sum_{i = 1}^{M}\sum_{j = 1}^{M} \frac{q_i q_j}{n} \mathbbm{1}_{i \in \hat{\mathcal{S}}_{-k}^{c}} \mathbbm{1}_{j \in \C^c}, 
	\ 
	C_M = \sum_{i = 1}^{M}\sum_{j = 1}^{M} q_i q_j \mathbbm{1}_{i,j \in \C^c} 
	\]
	}
	and find $M$ that minimizes the approximately unbiased risk estimator, i.e.
	\[
	\hat{M} \in \arg\min_{M \vargeq 1} \hat{R}(M) 
	\ \text{  with  }\	\hat{R}(M) = \hat{A}_M - 2 \hat{B}_M + C_M \  .
	\]
	
	We conclude with the remark that
	the OT plan from $Q$ to $\hat{P}_n$ is in general not unique, and the mapping $Q \mapsto \expect{\mathcal{E}_n(Q)^2}$ highly nonlinear, such that the optimization problem
	\[
	Q^\star \in \arg\min_{Q \in \mathcal{Q}} 
	\expect{\mathcal{E}_n(Q)^2}
	\]
	may be ill-posed.
	A natural regularization strategy is to add a penalty to the interval length, with $\lambda > 0$ as penalty parameter, i.e.
	\[
	Q^\star \in \arg\min_{Q \in \mathcal{Q}} 
	\Big\{\expect{\mathcal{E}_n(Q)^2} + \lambda \, \hat{\ell}_n(Q) \Big\},
	\]
similarly to classical regularization methods for ill-posed problems \citep{engl1996regularization} 
	
	
	\section{Finite Sample Validation} \label{sec:montecarlo}
	
	We evaluate the finite-sample performance of the proposed confidence intervals using Monte Carlo simulations and an illustration application. 
	
	\subsection{Simulations} 
	
	To evaluate both coverage and stability, we employ a two-layer design with $100$ replication blocks, each containing $100$ independent samples. We report the empirical coverage, left and right miss-coverage probabilities, and the median interval length, which is robust to extremely wide intervals.
	Specifically, we consider two simple inference problems: (i) estimating the mean when observations are generated from a $\chi^2$ distribution, and (ii) estimating the correlation coefficient between two Gaussian variables. 
	For a sample $X_1,\dots,X_n$ and statistic $\vartheta_n$, let
	$
	\hat\vartheta_n^{(1) \ast}, \dots, \hat\vartheta_n^{(m) \ast},
	$
	denote $m$ bootstrap replicates with empirical distribution $\hat{P}_n$.
	
	For $Q$ a discretized Beta distribution, its parameters $(\eta,\zeta)$ and the number $M$ of support points are selected as described in Section~\ref{subsec:Choices}.
	Here, $M$ is chosen from a grid from $5$ to $100$ in steps of $5$, with optional local refinement.
%
	Then we define the admissible set  
	\[
	\mathcal{C} \coloneqq \{x \in \spt{Q}: F_Q^{-1}(\alpha/2) \varleq x \varleq F_Q^{-1}(1-\alpha/2)\}.
	\]
	For each such $Q$, we compute an  OT plan $\hat{\gamma}=(\hat{\gamma}_{ij})$, where $\hat{\gamma}_{ij}$ denotes the mass transported from $x_i \in \spt{Q}$ to $y_j \in \spt{\hat P_n}$; see \citet{peyre2020computational} for computational background.  
	This plan is then used to construct the confidence interval $\hat{\mathcal{I}}_{n,Q}$. 
	
	For comparison, consider the standard bootstrap procedures for a scalar parameter $\theta_0$:
	\[
	\text{percentile: } \vartheta_n = \hat{\theta}_n, \qquad
	\text{basic: } \vartheta_n = \hat{\theta}_n - \theta_0, \qquad
	\text{studentized: } \vartheta_n = { ( \hat{\theta}_n - \theta_0 ) }{\sigma_{\hat{\theta}_n}^{-1}}.
	\]
	In the bootstrap, $\theta_0$ is replaced by $\hat{\theta}_n$, and $\sigma_{\hat{\theta}_n}$ by a consistent estimate.

	Simulation parameters are set as follows: {nominal level} $\alpha = 0.1$, {number of bootstrap replicates} $m = 1000$, {cost function} $c(x,y) = |x-y|^2/2$, and {sample sizes} $n \in \{5,10,25,50,100\}$.  
	
	To assess the effectiveness of our $Q$-selection procedure, we additionally consider three fixed representative Beta distributions with parameters $(\eta, \zeta) = (2,5), (5,5), (5,2)$, corresponding to left-skewed, symmetric, and right-skewed shapes.

	All simulations are conducted in \texttt{R}, using the \texttt{transport} package for OT computations \citep{R2024transport}, \texttt{boot} for bootstrap resampling \citep{canty2025boot}, and \texttt{fitdistrplus} to estimate the Beta parameters \citep{delignette2015fitdistrplus}. The source measure $Q$ is discretized on an equally spaced grid of $M$ atoms. 
	
	\begin{algorithm}[htb]
		\caption{ OT-Based Confidence Interval}\label{alg:otci}
		\begin{algorithmic}
			\Require $\texttt{data} = X_1, \ldots, X_n$, $\texttt{level} = \alpha$, $\texttt{m} = m$, $\texttt{M} = M$, $\texttt{boot\_func}$: bootstrap function for the statistic, $\texttt{method} \in \{\textbf{perc, basic, stud}\}$
			\State \textbf{1. Bootstrap}
			\State $\{\hat\vartheta_n^{\ast(b)}\}_{b=1}^m \gets \texttt{boot::boot}(\texttt{data}, \texttt{boot\_func}, \texttt{m})$ \Comment{Compute bootstrap replicates}
			\State Transform $\{\vartheta_n^{\ast(b)}\}$ according to $\texttt{method}$
			\State $\hat{P}_n \gets$ empirical probabilities of transformed replicates \Comment{Bootstrap distribution}
			\State \textbf{2. Source Distribution}
			\State $Z \gets \Phi(\hat\vartheta_n^{\ast})$ \Comment{CDF transform of replicates}
			\State $(\eta, \zeta) \gets \texttt{fitdistrplus::fitdist}(Z)$ \Comment{Fit Beta parameters}
			\State $\{x_i, q_i\}_{i=1}^M = Q \gets$ discretized Beta$(\eta, \zeta)$ \Comment{Construct discrete source measure}
			\State \textbf{3. Optimal Transport}
			\State $C_{ij} \gets |x_i - \hat\vartheta_n^{\ast(j)}|^2 / 2$ \Comment{Cost matrix}
			\State $\hat{\bm{\gamma}}_n \gets \texttt{transport::transport}(Q, \hat{P}_n, C)$ \Comment{Compute optimal coupling}
			\State $[i_L, i_U] \gets$ indices in $Q$ corresponding to $\alpha/2$ and $1-\alpha/2$
			\State $\hat{\mathcal{I}}_{n,Q} \gets$ map $[i_L, i_U]$ via $\hat{\bm{\gamma}}_n$ \Comment{Construct OT-based confidence interval}
			\State \Return $\hat{\mathcal{I}}_{n,Q}$
		\end{algorithmic}
	\end{algorithm}
	
	
	From sample $X_1, \ldots, X_n$ of a non-central $\chi^2_1$ distribution with non-centrality parameter $10$ estimate the mean, i.e.\ 
	$ 
	\theta_0 = \mathbb{E}[X] 
	= 11,
	\ 
	\hat{\theta}_n = \frac{1}{n} \sum_{i=1}^n X_i  .
	$ 
	%
 %
	Then compute confidence intervals with the percentile, basic, and studentized bootstrap replicates. 	Performance is assessed via empirical coverage, left and right miss-coverage probabilities, and the median interval length.
	Simulations are conducted for sample sizes $n \in \{5, 10, 25, 50, 100\}$, with $m = 1000$ bootstrap replicates per sample.
%

	\begin{table}[htb]
			\setlength{\tabcolsep}{2pt} 
		\centering
		\caption{Comparison of OT and Bootstrap confidence intervals across methods}
		\label{tab:all_methods}
		\vspace{-0.25cm}
			\centering
			\begin{tabular}{l S S S S S S}
				& \multicolumn{3}{c}{\textbf{ OT}} &
\multicolumn{3}{c}{\textbf{Bootstrap}} \\
\cmidrule(lr){2-4} \cmidrule(lr){5-7}
$n$ & \multicolumn{1}{c}{Coverage} &\multicolumn{1}{c}{Var} & \multicolumn{1}{c}{MSE} & \multicolumn{1}{c}{Coverage} &\multicolumn{1}{c}{Var} & \multicolumn{1}{c}{MSE} \\
				\bottomrule
				\multicolumn{7}{c}{\textbf{Percentile Confidence Intervals for the Mean}}\\
				\toprule

				5  & 0.8803 & 0.0009 & 0.0013 & 0.7727 & 0.0015 & 0.0177 \\
				10 & 0.9067 & 0.0011 & 0.0011 & 0.8391 & 0.0015 & 0.0052 \\
				25 & 0.9025 & 0.0010 & 0.0010 & 0.8721 & 0.0012 & 0.0020 \\
				50 & 0.9129 & 0.0007 & 0.0009 & 0.8897 & 0.0008 & 0.0010 \\
				100 & 0.8993 & 0.0008 & 0.0008 & 0.8987 & 0.0008 & 0.0008 \\
				\bottomrule
				\multicolumn{7}{c}{\textbf{Basic Confidence Intervals for the Mean}}\\
				\toprule
				5  & 0.8192 & 0.0015 & 0.0080 & 0.7648 & 0.0016 & 0.0199 \\
				10 & 0.8530 & 0.0015 & 0.0037 & 0.8340 & 0.0015 & 0.0059 \\
				25 & 0.8869 & 0.0011 & 0.0013 & 0.8702 & 0.0012 & 0.0021 \\
				50 & 0.8951 & 0.0008 & 0.0008 & 0.8847 & 0.0008 & 0.0010 \\
				100 & 0.8973 & 0.0008 & 0.0008 & 0.8967 & 0.0008 & 0.0008 \\
				\bottomrule
				\multicolumn{7}{c}{\textbf{Studentized Confidence Intervals for the Mean}}\\
				\toprule
				5  & 0.9216 & 0.0007 & 0.0012 & 0.8576 & 0.0011 & 0.0029 \\
				10 & 0.9158 & 0.0010 & 0.0012 & 0.8721 & 0.0014 & 0.0021 \\
				25 & 0.9166 & 0.0009 & 0.0012 & 0.8843 & 0.0012 & 0.0014 \\
				50 & 0.9173 & 0.0008 & 0.0011 & 0.8946 & 0.0009 & 0.0009 \\
				100 & 0.9049 & 0.0008 & 0.0009 & 0.9016 & 0.0008 & 0.0008 \\
				\bottomrule
			\end{tabular}
				\vspace{-0.25cm}
	\end{table}
	
	The simulation results, summarized in Tables~\ref{tab:all_methods} and~\ref{tab:06_length_miss_mean}, and Figure~\ref{fig:all_methods_coverage} for $n = \{5,10\}$, compare  OT and bootstrap confidence intervals across the percentile, basic, and studentized methods.
	Across all interval types, our OT procedures clearly outperform the standard bootstrap intervals, particularly for small sample sizes. For $n=5$, the MSE of  OT percentile intervals is roughly 13 times smaller than the bootstrap MSE, reflecting both improved coverage and reduced variability. Similar patterns hold for basic and studentized intervals, indicating that  OT systematically stabilizes interval estimation.
	Surprisingly, while it is known that for bootstrap inference it is worth to use the studentized version, OT-based intervals perform best for the simplest version.  
	Recall that the studentized one requires (and its performance strongly depends on) the estimation of $\sigma_{\hat{\theta}_n}$ which is challenging for small samples. As expected, for increasing $n$ the performance converge, but even for $n=50$ we see a noticeable advantage of the OT-based procedure.  
%
	

Next we draw observations $(X_{i,1}, X_{i,2})$ from a bivariate normal distribution with mean zero, variance one and correlation  $\theta_{0} = \rho = 0.1$. Estimation is performed using the standard Pearson correlation estimator.
As for the mean, the percentile method exhibits the most reliable small-sample performance among the three approaches considered. Its figures are given in Table \ref{Tab-corr}.
We observe that with the  OT approach, we hit the target coverage $1 - \alpha$ for all tested sample sizes. In contrast, the percentile bootstrap approach the nominal level more slowly. Moreover, our method is more stable, see Var and MSE.

\begin{table}[htb]
	\centering  \vspace{-0.25cm}
	\caption{Comparison of coverage probabilities for the percentile method. \label{Tab-corr}}
	\setlength{\tabcolsep}{2pt} 
		\vspace{-0.25cm}
	\begin{tabular}{l S S S S S S}
		& \multicolumn{3}{c}{\textbf{ OT}} &
		\multicolumn{3}{c}{\textbf{Bootstrap}} \\
		\cmidrule(lr){2-4} \cmidrule(lr){5-7}
		$n$ & \multicolumn{1}{c}{Coverage} &\multicolumn{1}{c}{Var} & \multicolumn{1}{c}{MSE} & \multicolumn{1}{c}{Coverage} &\multicolumn{1}{c}{Var} & \multicolumn{1}{c}{MSE} \\
		\midrule
		5  & 0.9038 & 0.0009 & 0.0009 & 0.8693 & 0.0012 & 0.0021 \\
		10 & 0.9032 & 0.0011 & 0.0011 & 0.8440 & 0.0015 & 0.0046 \\
		25 & 0.9259 & 0.0006 & 0.0013 & 0.8736 & 0.0010 & 0.0017 \\
		50 & 0.9017 & 0.0010 & 0.0010 & 0.8800 & 0.0011 & 0.0015 \\
		100 & 0.9022 & 0.0009 & 0.0009 & 0.8864 & 0.0011 & 0.0012 \\
		\bottomrule
	\end{tabular}
\end{table}

	These results show that our finite-sample bounds reflect well the behavior of our intervals. In particular, the  OT stabilizes coverage in small samples while preserving asymptotic validity, exhibiting performance comparable to standard bootstrap methods as $n$ grows.
	
	\subsection{The velocity of Galaxies} 
	
	To further assess the practical performance of our method, we conduct a study based on the \texttt{galaxies} dataset from the \texttt{MASS} package, which contains the velocities (in km/s) of 82 galaxies. It provides a challenging benchmark for statistical inference due to its pronounced multimodality, see Figure \ref{fig:galaxies_density}.
	
	\begin{figure}[htb]
		\centering \vspace{-0.25cm}
		\includegraphics[width=0.75\textwidth,height=5.75cm]{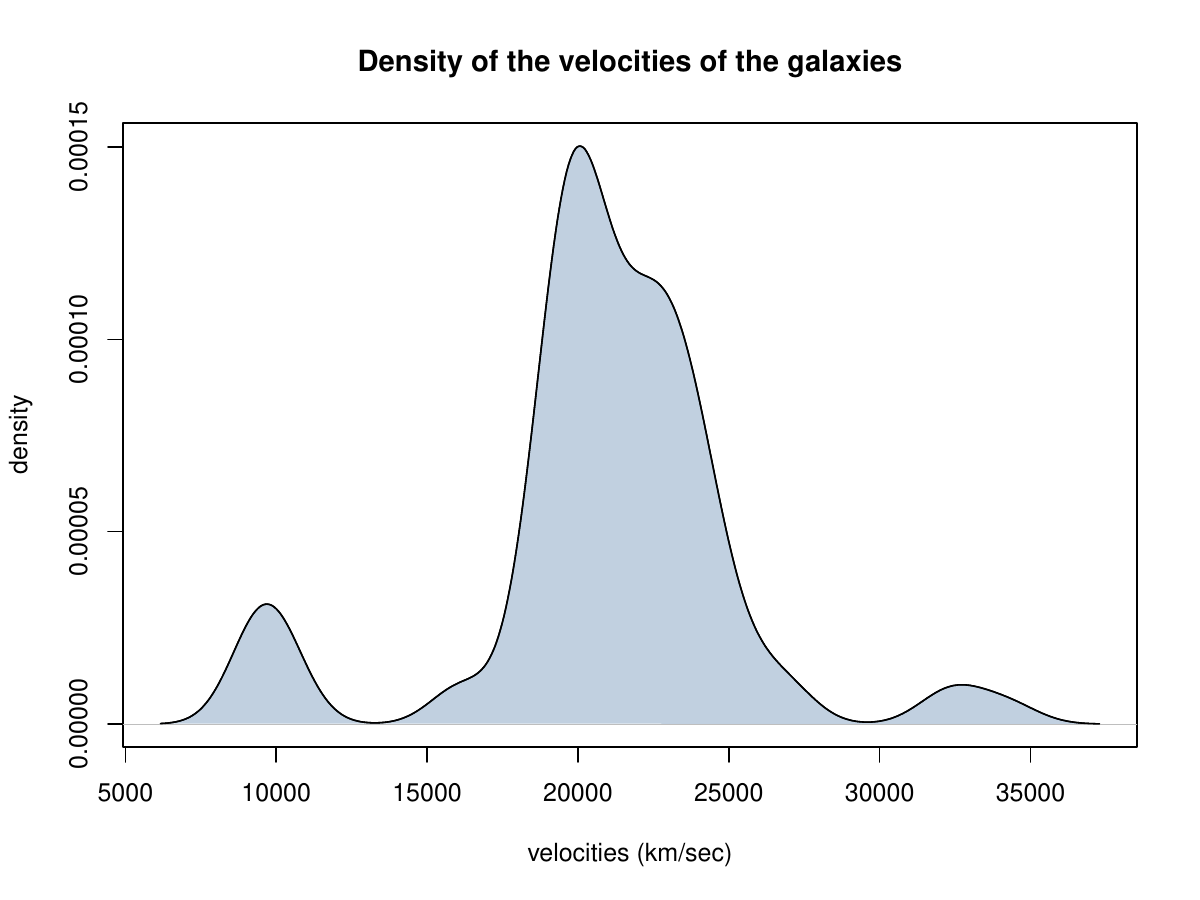}
		\vspace{-0.5cm}
		\caption{Estimated density of the galaxies velocity dataset.}
		\label{fig:galaxies_density}
	\end{figure}
	
	We model the distribution of the velocity $X$ by a Gaussian mixture with three components:
	\[
	X \sim \sum_{i=1}^{3} w_i \mathcal{N}(\mu_i, \sigma_i^2),
	\quad \sum_{i=1}^{3} w_i = 1,
	\]
	and consider as parameter of interest the overall mean of the mixture,
	\[
	\theta_0 = \sum_{i=1}^{3} w_i \mu_i,
	\qquad
	\hat\theta_n = \frac{1}{n} \sum_{i=1}^{n} X_i
	\]
	but could equally well consider the problem of constructing confidence intervals for each of the three groups. 
	More specifically, in this illustration, we generate data from a fixed Gaussian mixture model fitted to the full galaxies dataset. As in previous experiments, we adopt a two-layer simulation design consisting of 100 Monte Carlo blocks, each containing 100 independent samples of size $n$ drawn from the fitted mixture model.
	Similarly to above, we consider a nominal confidence level of $\alpha = 0.1$,
	$m=1000$ bootstrap replicates, sample sizes $n \in \{5, 10\}$, and the
	admissible set
		\[
		\mathcal{C} = \big\{ 
		x \in \mathrm{supp}(Q) \ :\ F_Q^{-1}(\alpha/2) \le x \le F_Q^{-1}(1 - \alpha/2)
		\big\} .
		\]
	We compare the performance of our approach with standard bootstrap procedures, namely the percentile, basic, and studentized methods.
	
	\begin{table}[htb]
		\centering   	\setlength{\tabcolsep}{2pt} 
		\vspace{-0.25cm}
	\caption{Coverage and Length of Confidence for the Average Velocity of Galaxies} \label{tab-appl-galaxies}			
		\begin{tabular}{l S S S S S S}
			& \multicolumn{3}{c}{\textbf{ OT}} &
\multicolumn{3}{c}{\textbf{Bootstrap}} \\
\cmidrule(lr){2-4} \cmidrule(lr){5-7}
$n$ & {Coverage} & {Var} & {MSE} & {Coverage} & {Var} & {MSE} \\
\midrule
			\multicolumn{7}{c}{\textbf{Percentile Confidence Intervals}}\\
			\toprule
			5  & 0.8680 & 0.0013 & 0.0023 & 0.7514 & 0.0017 & 0.0238 \\
			10 & 0.9061 & 0.0009 & 0.0009 & 0.7912 & 0.0011 & 0.1296 \\
			\bottomrule
%
			\multicolumn{7}{c}{\textbf{Basic Confidence Intervals}}\\			\toprule
			5  & 0.8522 & 0.0013 & 0.0036 & 0.7972 & 0.0016 & 0.0122 \\
			10 & 0.8643 & 0.0009 & 0.0022 & 0.8412 & 0.0012 & 0.0047 \\
			\bottomrule
%
			\multicolumn{7}{c}{\textbf{Studentized Confidence Intervals}}\\
			\toprule
			5  & 0.9222 & 0.0006 & 0.0011 & 0.8560 & 0.0012 & 0.0031 \\
			10 & 0.8343 & 0.0012 & 0.0055 & 0.8281 & 0.0012 & 0.0064 \\
			\bottomrule
%
			\multicolumn{7}{c}{\textbf{Interval Length}}\\
			\toprule
			& {Percentile} & {Basic} & {Studentized} & {Percentile} & {Basic} & {Studentized} \\			\midrule
			5  & 5900.338 & 5133.451 & 15262.498 & 4409.446 & 4405.495 & 6279.923 \\
			10 & 6054.043 & 4451.817 & 6553.114 & 4199.520 & 4204.053 & 4945.214 \\
			\bottomrule
		\end{tabular}
	\end{table}
	
	The results show that the proposed  OT-based procedure generally achieves improved coverage relative to classical bootstrap methods, while exhibiting substantially lower variance and mean squared error in most configurations. In particular, gains are more pronounced for small sample sizes ($n=5$), where standard bootstrap methods tend to under-cover the true parameter. The studentized bootstrap improves coverage in some cases but displays increased instability, reflected in larger interval lengths and higher variability, especially in the smallest sample regime. Overall, the proposed approach provides a more stable trade-off between coverage accuracy and interval efficiency in this multimodal setting.

	\section{Conclusion and Discussion} \label{sec:concl}
	
	We have shown that confidence intervals constructed via  OT can exhibit substantially improved coverage properties. Our theoretical validation relied on three components:  
	(1) {Large-sample consistency},
	(2) {Finite-sample guarantees}, and
	(3) {Length analysis}.
	Although fully data-driven, our method is easy to implement, very fast and has proven to be amazingly reliable, outperforming 'direct' bootstrap intervals by far.   
	
	A natural extension 
	is the construction of confidence sets in $\mathbb{R}^d$ for $d \vargeq 2$. Our analysis leveraged the special structure of one-dimensional  OT, where monotonicity yields substantial simplifications. In higher dimensions, theoretical guarantees require different tools, and the performance may depend more sensitively on the choice of source distribution. In particular, if $Q$ is absolutely continuous, the primary object is the  OT map, whereas if $Q$ is discrete, one must work instead with an optimal coupling. A comprehensive multivariate theory must therefore accommodate both regimes.
	
%
%
	
	Another extension could be to non-Euclidean settings, notably to finite-dimensional Riemann manifolds. 	The work of \citet{mccann2001polar} establishes existence and regularity of optimal maps for costs induced by the squared Riemann distance.
	This provides a foundation for  OT–based inference in curved state spaces, including confidence regions for parameters that lie on manifolds (e.g., directions, covariance structures, or spectral data).
	
	In this work, we introduced a data-driven mechanism for selecting source distributions, implemented through discretized Beta laws. This choice yields a tractable parametrization but is not canonical, and questions of optimality remain open.

	\bibliographystyle{apalike}
	\bibliography{bibliography}

	\appendix
	
\section{Proofs} 	\label{sec-append-proofs}

 \subsection{Proof of Lemma~\ref{lem:empirical_extreme_deviation}}	

	\begin{proof}
		Fix an index $i \in \{1, \ldots, M \}$ and the source measure $Q \in \mathcal{Q}$.
		From the monotone rearrangement of optimal coupling on $\Real$, the mass $q_i = Q(\{x_i\})$ is transported to the set of target points that lies in the interval $[ F^{-1}(V_{i - 1}), F^{-1}(V_i)]$, hence
		\begin{eqnarray*}
			& u_i^{-} = F^{-1}(V_{i - 1}),		\quad
			\hat{u}_i^{-} = \hat{F}_n^{-1}(V_{i - 1})  , \quad
			u_i^{+} =  F^{-1}(V_i),  \quad
			\hat{u}_i^{+} = \hat{F}_n^{-1}(V_i)  . &
		\end{eqnarray*}
		Taking the differences and the absolute values yields
		\[
		\lvert u_i^{-} - \hat{u}_i^{-} \rvert =	\lvert 
		F^{-1}(V_{i - 1}) - \hat{F}_n^{-1}(V_{i - 1})		\rvert \varleq
		\sup_{0 \varleq u \varleq 1} \lvert F^{-1}(u) - \hat{F}_n^{-1}(u)\rvert,
		\]
		and similarly for $\lvert u_i^{+} - \hat{u}_i^{+} \rvert$.
		As $i$ and $Q$ were arbitrary, the claim follows immediately.
	\end{proof}

	\subsection{Proof of Lemma~\ref{lem:3-upper_bound_probability}}
	
	\begin{proof}
		Conditional on data $X_1, \ldots, X_n$,
		fix an index $i$, a source probability measure $Q \in \mathcal{Q}$, and an admissible set $\C$.
		By construction, the event $\{x_i \in \mathcal{S}^c \, \vert \ X_1,, \ldots, X_n\}$ implies that at least one of the following inequalities fails:
		$
		\hat{u}_a \varleq u_i^{-} \
		\text{ or } \
		\hat{u}_b \vargeq u_i^{+}.
		$
		Equivalently, for $\hat{u}_i^{-}$, $\hat{u}_i^{+}$ defined as ${u}_i^{-}$, ${u}_i^{+}$ but with $\hat \varphi_i$, we have 
		$\{x_i \in \mathcal{S}^c \, \vert \ X_1,, \ldots, X_n\} \implies \{\max \big( \hat{u}_a - u_i^{-}, u_i^{+} - \hat{u}_b\big) > 0 \, \vert \ X_1,, \ldots, X_n\}$.
		We proceed now by contradiction. Suppose that
		\[
		\Big\{\max \big(
		\lvert u_a - \hat{u}_a \rvert, \lvert u_b - \hat{u}_b \rvert 
		\big) \varleq \frac{g_i}{2} \, \vert \ X_1,, \ldots, X_n\Big\},
		\]
		using the definition of $g_i$, we have $u_i^{-} - u_a \vargeq g_i$ and $u_b - u_i^{+} \vargeq g_i$.
		Hence,
		\[
		\hat{u}_a \varleq u_a + \frac{g_i}{2} \varleq u_i^{-} - \frac{g_i}{2} \varleq u_i^{-},
		\qquad
		\hat{u}_b \vargeq u_b - \frac{g_i}{2} \vargeq u_i^{+} + \frac{g_i}{2} \vargeq u_i^{+},
		\]
		and neither exclusion conditions can hold, therefore $\{x_i \not\in \mathcal{S}^c \, \vert \ X_1,, \ldots, X_n\}$.
		Taking the complement yields
		\[
		\Big\lbrace x_i \in \mathcal{S}^c \, \vert \ X_1,, \ldots, X_n \Big\rbrace \subseteq
		\Big\lbrace 
		\max \lbrace 
		\lvert u_a - \hat{u}_a \rvert, 
		\lvert u_b - \hat{u}_b \rvert 
		\rbrace > \frac{g_i}{2}
		\, \vert \ X_1,, \ldots, X_n
		\Big\rbrace   .
		\]
		By Lemma~\ref{lem:empirical_extreme_deviation}, the deviations are bounded from above by $\sup_{0 \varleq u \varleq 1} \lvert F_n^{-1}(u) - \hat{F}_n^{-1}(u) \rvert $.
		Taking probabilities and expectation with respect to the data gives the claimed bound unconditionally.
		From the monotonicity of the subdifferential, if $x_i < c_a$ or $x_i > c_b$ then $g_i < 0$, and the probability equals 1.
	\end{proof}
	


	\subsection{Proof of Corollary~\ref{crll:expectation_noncoverage_II}}
	
	\begin{proof}
		Define $S \coloneqq \alpha + \sum_{i \in \C} q_i\big(2\exp(-\frac{m f_{\min}^2 g_i^2 }{2}) + A(g_i) \big)$ and $\sigma^2 \coloneqq \sum_{i = 1}^{M}q_i^2$.
		From Theorem~\ref{th:3-upper_bound_noncoverage_II}, we have
		\[
		\pr{\overline{\pi}_n(Q) \vargeq S + \sqrt{2\ln(1/\delta) \sigma^2}} \varleq \delta.
		\]
		Let $t = S + \sqrt{2\ln(1/\delta) \sigma^2}$, solving for $\delta$ gives $\delta = \exp(-\frac{(t - S)^2}{2\sigma^2})$.
		Hence, for all $t \vargeq S$,
		\[
	\pr{\overline{\pi}_n(Q) \vargeq t} \varleq \exp\Big(-\frac{(t - S)^2}{2\sigma^2}\Big) .
		\]
		Since $0 \varleq \overline{\pi}_n(Q) \varleq 1$ a.s., the expectation can be written using the tail integral formula
		\[
		\expect{\overline{\pi}_n(Q)} = 
		\int_{0}^{1} \pr{\overline{\pi}_n(Q) \vargeq t} dt \varleq
		\int_{0}^{1} 1 \wedge \exp\Big(-\frac{(t - S)^2}{2\sigma^2}\Big) dt \ .
		\]
		Setting $t_0 = S$ where the exponential term equals 1, we see
		\begin{align*}
			\int_{0}^{1} 1 \wedge \exp\Big(-\frac{(t - S)^2}{2\sigma^2}\Big) dt &=
			\int_0^S dt + \int_S^1 \exp\Big(-\frac{(t - S)^2}{2\sigma^2}\Big) dt \\
			&= S + \sqrt{\frac{\pi\sigma^2}{2}} \text{erf}\Big(\frac{1 - S}{\sqrt{2 \sigma^2}}\Big),
		\end{align*}
		with error function $\text{erf}(x)$.
		As $\text{erf}(x) \varleq 1$ for all $x \in \Real$, we obtain the claimed bound. 
	\end{proof}

	\subsection{Proof of Lemma~\ref{lem:expectation_sup-quantile}}
	
	\begin{proof}
		From the assumption $f(x) \vargeq f_{\min} > 0$ and the proof of Theorem~\ref{th:3-upper_bound_noncoverage_II} we have
		\[
		\sup_{0 \varleq u \varleq 1} \lvert F_n^{-1}(u) - \hat{F}_n^{-1}(u) \rvert \varleq
		\frac{1}{f_{\min}} \sup_{x \in \Real} \lvert F_n(x) - \hat{F}_n(x) \rvert.
		\]
		Since this quantity is positive, the tail integral representation of the expectation and the DKW-M inequality gives:
		\begin{align*}
		&	\expect{\sup_{0 \varleq u \varleq 1} \lvert F_n^{-1}(u) - \hat{F}_n^{-1}(u) \rvert} =
			\int_{0}^{\infty} \pr{\supquant{u \in [0, 1]} \vargeq t} dt \\
		&	\varleq \int_{0}^{\infty} \pr{\supquant{x \in \Real} \vargeq f_{\min} t} dt 
			\varleq \int_{0}^{\infty} 1 \wedge 2\exp\Big(-2 n f_{\min}^2 t^2\Big) dt
		\end{align*}
		Setting $t_0 = \sqrt{\frac{\ln{2}}{2 n f_{\min}^2}}$ where the exponential term is 1, evaluating the integral gives
		\begin{eqnarray*}
		\int_{0}^{\infty} 1 \wedge 2\exp\Big(-2 n f_{\min}^2 t^2\Big) dt &=&
		\int_{0}^{t_0} dt + \int_{t_0}^{\infty} 2\exp\Big(-2 n f_{\min}^2 t^2\Big) dt 
		\\ &=&
		\frac{1}{\sqrt{2n}f_{\min}}\big(\sqrt{\ln{2}} + \sqrt{\pi}\ \text{erfc}(\sqrt{\ln{2}})\big) \ ,
		\end{eqnarray*}
		since $\text{erfc}(\sqrt{\ln{2}}) < 1$, this  yields the claimed bound.
	\end{proof}

	\subsection{Proof of Corollary~\ref{crl:expectation_length_II}}

	\begin{proof}
		Let $S \coloneqq C / \sqrt{m} + K \sqrt{\ln(2/\delta) / 2}$.
		From Corollary~\ref{th:3-2_bound_length_II} we had
		\[
		\pr{\lvert \ell_n(Q) - \ell(Q) \rvert > 2 r_n^{-1}S} \varleq
		\delta.
		\]
		Setting $t = 2 r_n^{-1}S$ gives $\delta = 2 \exp(-2\frac{(t \ r_n / 2 - C/\sqrt{m})^2}{K^2} )$.
		Hence, for all $t > 0$
		\[
		\pr{\lvert \ell_n(Q) - \ell(Q) \rvert > t} \varleq 2 \exp\Big(-2\frac{(t \ r_n / 2 - C/\sqrt{m})^2}{K^2}\Big).     
		\]
		From the tail integral representation of the expectation we have
		\begin{align*}
			\expect{\lvert \ell_n(Q) - \ell(Q) \rvert} &= \int_{0}^{K} \pr{\lvert \ell_n(Q) - \ell(Q) \rvert > t} dt \\
			&\varleq \int_0^K 1 \wedge 2 \exp\Big(-2\frac{(t \ r_n / 2 - C/\sqrt{m})^2}{K^2}\Big) dt \ .
		\end{align*}
		Setting $t_0 = 2r_n^{-1}\frac{C}{\sqrt{m}}$ where the exponential term equals 1,  solving this integral gives
		\begin{align*}
			\int_0^K 1 \wedge 2 \exp\Big(-2\frac{(t \ r_n / 2 - C/\sqrt{m})^2}{K^2}\Big) dt &=
			\int_0^{t_0} dt + \int_{t_0}^K 2 \exp\Big(-2\frac{(t \ r_n / 2 - C/\sqrt{m})^2}{K^2}\Big) dt \\
			&= t_0+ 2r_n^{-1}K\sqrt{\frac{\pi}{2}}\Big(\text{erf}\big(\frac{\sqrt{2}}{K}(K r_n / 2 - C/\sqrt{m})\big)\Big),
		\end{align*}
		Since $\text{erf}$ is bounded above by 1, the expression simplifies giving the claimed expectation.
	\end{proof}

\section{Confidence Interval for $c(x,y) = h(x-y)$}\label{app:convex_cost_confidence_interval}
	
	In this section, we analyze  OT-based confidence interval for more general cost functions.
	Let $c(x,y) = h(x - y)$, where $h(z)$ is a non-negative and strictly convex function.
	Analogous to the quadratic case, an optimal coupling for $h(x - y)$ is concentrated on a $c$-cyclically monotone set.
	A set $S \subset \Real \times \Real$ is said to be $c$-cyclically monotone if for any $k \in \mathbb{N}$ and any finite family $(x_1, y_1), \ldots, (x_k, y_k)$ of points in $S$:
	\[
	\sum_{i = 1}^{k} c(x_i, y_i) \varleq \sum_{i = 1}^{k} c(x_i, y_{i + 1}),
	\]
	with $y_{k + 1} = y_1$.
	By a generalized version of Rockafellar's theorem, a $c$-cyclically monotone set is contained on the graph of the $c$-superdifferential of a $c$-concave function $\varphi$.
	Let $\varphi$ be a candidate for the dual problem, then from the constraint $\varphi(x) + \psi(y) \varleq h(x - y)$, we can set the $c$-concave function
	\[
	\psi(y) = \varphi^c(y) \coloneqq 
	\inf_{x \in \Real} \Big(h(x-y) - \varphi(x) \Big),
	\]
	as the $c$-transform of $\varphi$.
	By symmetry $\varphi(x) = \psi^c(y) = \varphi^{cc}(x)$ is a $c$-concave function.
	The function $\varphi$ and $\varphi^c$ are said to be $c$-conjugate, and the $c$-superdifferential of the $c$-conjugate function $\psi$ at $x \in \Real$ is defined as
	\[
	\partial^{c} \psi(x) \coloneqq
	\Big\{y \in \Real : \varphi(x) + \psi(y) = h(x - y)\Big\}.
	\]
	We denote an optimal pair by $(\varphi, \psi)$.
	Let $c(x,y) = h(x-y)$, then $y \in \partial^c \psi(x)$ implies $\partial \psi(x) \subset \partial h(x - y)$; see~\citep[Lemma~3.1]{gangbo1996geometry}.
	Equivalently -- and if $h$ is differentiable -- we have
	\[
	\partial^{c} \psi(x) =
	\Big\lbrack
	x - (h')^{-1}\!\big(D^{+}\psi(x)\big), \
	x - (h')^{-1}\!\big(D^{-}\psi(x)\big)
	\Big\rbrack.
	\]
	If $(h')^{-1}$ is undefined, we set $(h')^{-1} \coloneqq (h^{\ast})'$ through the Legendre-Fenchel transform instead (cf.\ \citep[p.121]{gangbo1996geometry}).
	
	Consider, as before, the  OT from $Q$ to $P_n$ but for cost $c(x,y) = h(x-y)$, and $c$-confidence intervals being defined as
	\begin{equation}
		\mathcal{I}_{n,Q,c} \coloneqq
		\Big\lbrack
		\hat{\theta}_n - r_n^{-1}\big\{c_a - (h')^{-1}\!\big(D^{-}\psi_n(c_a)\big)\big\}, \
		\hat{\theta}_n - r_n^{-1}\big\{c_b - (h')^{-1}\!\big(D^{+}\psi_n(c_b)\big)\big\}
		\Big\rbrack ,
	\end{equation}
	where $\psi_n$ is a $c$-concave potential for the dual problem.
	
	Similarly, empirical $c$-confidence intervals can be defined from the  OT from $Q$ to $\hat{P}_n$ by the plugin estimator
	\[
	\hat{\mathcal{I}}_{n,Q,c} \coloneqq
	\Big\lbrack
	\hat{\theta}_n - r_n^{-1}\big\{c_a - (h')^{-1}\!\big(D^{-}\hat{\psi}_n(c_a)\big)\big\}, \
	\hat{\theta}_n - r_n^{-1}\big\{c_b - (h')^{-1}\!\big(D^{+}\hat{\psi}_n(c_b)\big)\big\}
	\Big\rbrack  ,
	\]
	where $\hat{\psi}_n$ is a $c$-concave potential for the dual problem.
	
	Most of the results from the article remain valid under some modifications like changing the subdifferential of the convex potentials to $c$-superdifferential of the $c$-concave potential.
	Hence, our results can easily be adapted to the present setting.
	In particular, Lemma~\ref{lem:empirical_extreme_deviation} and Lemma~\ref{lem:3-upper_bound_probability} are still valid under appropriate modifications.
	
		By similar arguments, we can extend the above construction to cost $c(x,y) = l(\lvert x - y \rvert)$, where $l: \lbrack 0, \infty) \rightarrow \lbrack 0, \infty]$ is a strictly concave function.
		From an economic point of view, concave functions form an interesting class of cost: transporting a large amount of mass incurs less total cost per unit than transporting the same mass in smaller, separated shipments.
		However, the  OT is no longer monotone but rather antitone, and if $Q \ll \text{leb}$ the unique  OT map pushing $Q$ to $P$ is defined by
		\begin{equation}
			T_Q(x) \coloneqq F_P^{-1}\big(1 - F_Q(x) \big),
			\quad
			\text{for $Q$-almost all } x \in \spt{Q}.
		\end{equation}
		Hence, special care must be taken when using $c(x,y) = l(\lvert x - y \rvert)$.

	\section{Additional Simulation Results}\label{app:simulation}

	In this section, we present additional results from the Monte Carlo simulations. In particular, we report the distributions of coverage, and the median interval length for all considered methods, along with the left- and right-tail miss-coverage probabilities, providing a comprehensive view of both accuracy and stability.

	\subsection{Simulation Results for estimating the Mean}
	
First we show Figure \ref{fig:all_methods_coverage_n25-50} which completes Table \ref{tab:all_methods} by showing the distribution of the empirical coverages we found in our simulations. It underlines very well what was already indicated in the table: the OT-based method works very well and clearly outperforms the Bootstrap methods not just for tiny samples but also when $n=50$.  
	
\begin{figure}[htb]
	\centering  \vspace{-0.6cm}
	\includegraphics[width=\linewidth,height=6.15cm]{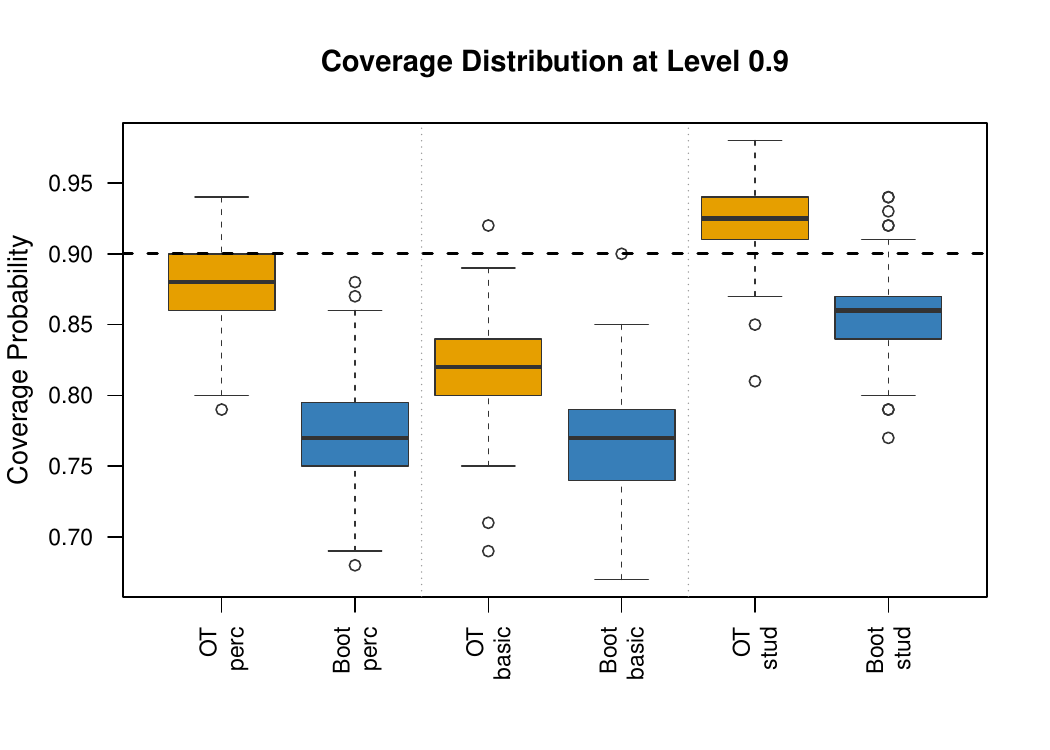}
	\\[-0.6cm]
	\includegraphics[width=\linewidth,height=6.15cm]{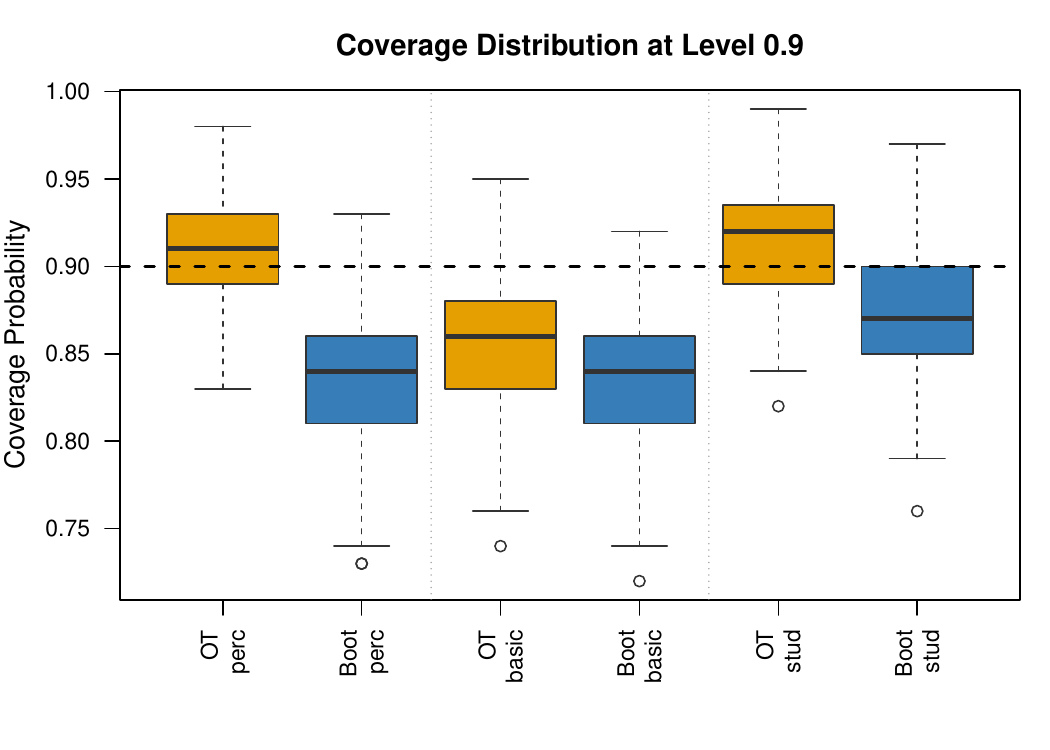}
	\\[-0.6cm]
	\includegraphics[width=\linewidth,height=6.15cm]{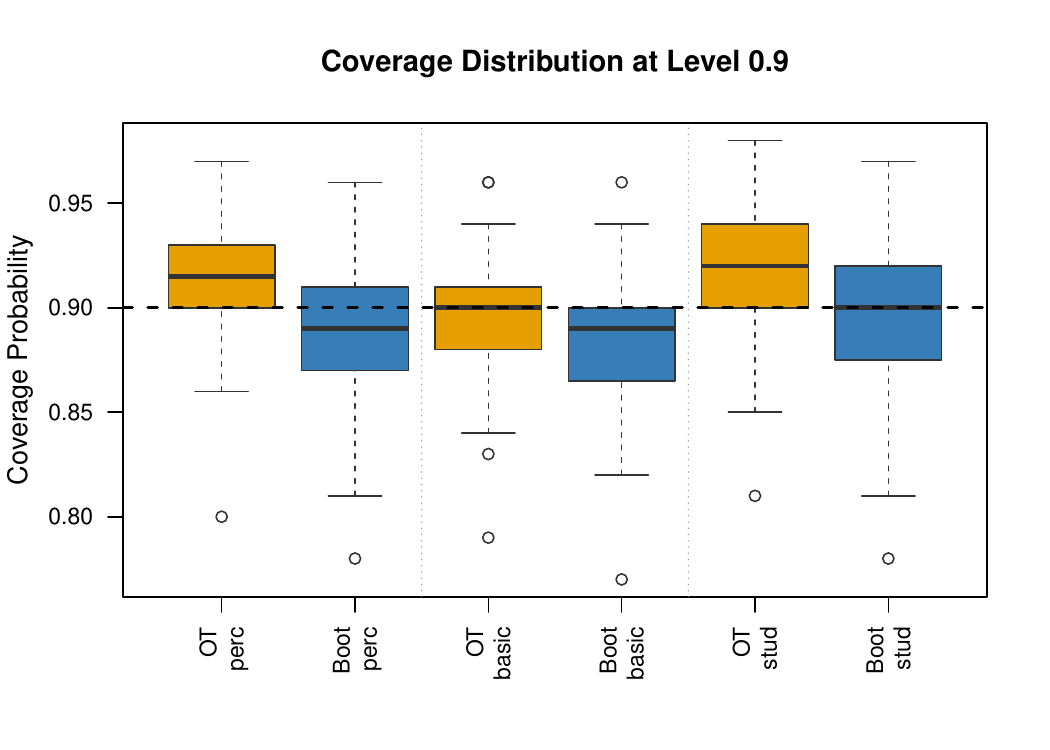}
	\vspace{-1.1cm}
	\caption{Comparison of the coverage probabilities across different methods and sample size ($5$ upper, $10$ center, $50$ lower panel) for confidence intervals of the mean.}
	\label{fig:all_methods_coverage_n25-50}
	\vspace{-0.25cm}
\end{figure}	
	
Table~\ref{tab:06_length_miss_mean} reports the median length and the empirical errors at the left- and right- side of the intervals,
which for symmetric distributions and intervals would equal $\alpha/2 = 0.05$ for our 90\% nominal level.  
These figures reveal where coverage errors occur: a method may achieve correct overall coverage while remaining systematically biased to one side.
	
	\begin{table}[htb]
		\centering
		\caption{Comparison of confidence intervals across different methods}
		\label{tab:06_length_miss_mean}  \vspace{-0.25cm}
			\begin{tabular}{l S S S S S S}
				& \multicolumn{3}{c}{\textbf{ OT}} &
\multicolumn{3}{c}{\textbf{Bootstrap}} \\
\cmidrule(lr){2-4} \cmidrule(lr){5-7}
$n$ & \multicolumn{1}{c}{Percentile} & \multicolumn{1}{c}{Basic} & \multicolumn{1}{c}{Studentized} & \multicolumn{1}{c}{Percentile} & \multicolumn{1}{c}{Basic} & \multicolumn{1}{c}{Studentized} \\
\midrule
				\multicolumn{7}{c}{\textbf{Interval Length}}\\
				\toprule
				5 & 11.3173 & 8.5891 & 17.6637 & 7.4529 & 7.5582 & 10.2471 \\
				10 & 8.5293 & 6.3996 & 8.0181 & 6.0260 & 6.0260 & 6.6384 \\
				25 & 4.4981 & 4.3349 & 4.6915 & 4.0754 & 4.0726 & 4.2175 \\
				50 & 3.2078 & 3.0207 & 3.2483 & 2.9556 & 2.9556 & 3.0041 \\
				100 & 2.1172 & 2.1168 & 2.1610 & 2.1129 & 2.1122 & 2.1308 \\
				\bottomrule
			\end{tabular}
		
		\vspace{0.5em}
		
			\setlength{\tabcolsep}{2pt} 
			\resizebox{\textwidth}{!}{
				\begin{tabular}{l c c c c c c}
					\multicolumn{7}{c}{\textbf{Missing Probabilities}} \\
					\midrule
					5 &
					(0.0446, 0.0751) &
					(0.0490, 0.1318) &
					(0.0228, 0.0556) &
					(0.0720, 0.1553) &
					(0.0692, 0.1660) &
					(0.0426, 0.0998) \\
					10 & 
					(0.0339, 0.0594) &
					(0.0388, 0.1082) &
					(0.0294, 0.0548) &
					(0.0524, 0.1085) &
					(0.0473, 0.1187) &
					(0.0414, 0.0865) \\
					25 & 
					(0.0350, 0.0625) &
					(0.0317, 0.0814) &
					(0.0369, 0.0465) &
					(0.0467, 0.0812) &
					(0.0413, 0.0885) &
					(0.0434, 0.0723) \\
					50 & 
					(0.0344, 0.0527) &
					(0.0345, 0.0704) &
					(0.0409, 0.0418) &
					(0.0438, 0.0665) &
					(0.0399, 0.0754) &
					(0.0436, 0.0618) \\
					100 &
					(0.0434, 0.053) &
					(0.0397, 0.0630) &
					(0.0466, 0.0485) &
					(0.0435, 0.0578) &
					(0.0402, 0.0631) &
					(0.0427, 0.0557) \\				
					\bottomrule
				\end{tabular}   }
\vspace{-0.25cm}
	\end{table}
	
	Across all sample sizes,  OT intervals exhibit markedly more balanced errors than the bootstrap intervals. For $n=5$, e.g.,  OT percentile intervals give $(0.0446,0.0751)$, whereas the bootstrap percentile produces the much more asymmetric $(0.0720,0.1553)$, reflecting undercoverage and instability in the bootstrap tail. 
%
	As $n$ increases, all methods seem to converge towards  $(0.05,0.05)$, but  OT intervals converge more rapidly and remain better balanced. This confirms that the advantage of  OT procedures lies not only in improved total coverage but also in substantially more symmetric tail behavior, reducing the influence of skewness and extreme bootstrap replicates in small samples.
	
	We also report coverage results for three fixed (i.e., not data-adaptively chosen) pairs of Beta parameters, namely $(\eta,\zeta)\in\{(2,5),\,(5,2),\,(5,5)\}$, but only  
for the percentile-based intervals.
Table~\ref{tab:06_beta_mean} presents coverage results for  OT percentile intervals using three fixed Beta parameters $(\eta, \zeta) \in \{(2,5), (5,2), (5,5)\}$, compared with the data-driven approach.  
	
	\begin{table}[htb]
		\centering  
		\caption{Comparison of Confidence Interval for fixed Beta parameters}
		\label{tab:06_beta_mean}  \vspace{-0.25cm}
				\setlength{\tabcolsep}{2pt} 
	\resizebox{\textwidth}{!}{
			\begin{tabular}{l S S S S S S S S S}
& \multicolumn{3}{c}{$(\eta, \zeta) = (2,5)$} 
& \multicolumn{3}{c}{$(\eta, \zeta) = (5,2)$} 
& \multicolumn{3}{c}{$(\eta, \zeta) = (5,5)$} \\ 
				\toprule
				n & \multicolumn{1}{c}{Coverage} & \multicolumn{1}{c}{Var} & \multicolumn{1}{c}{MSE} & \multicolumn{1}{c}{Coverage} & \multicolumn{1}{c}{Var} & \multicolumn{1}{c}{MSE} & \multicolumn{1}{c}{Coverage} & \multicolumn{1}{c}{Var} & \multicolumn{1}{c}{MSE} \\
				\midrule
5   & .8491 & .0012 & .0038 & .8805 & .0008 & .0012 & .8798 & .0009 & .0013 \\
10  & .9050 & .0009 & .0009 & .9405 & .0006 & .0022 & .9125 & .0009 & .0010 \\
25  & .9175 & .0008 & .0011 & .9338 & .0007 & .0018 & .9093 & .0010 & .0011 \\
50  & .9233 & .0006 & .0012 & .9255 & .0008 & .0014 & .9028 & .0009 & .0009 \\
100 & .9003 & .0009 & .0009 & .8987 & .0008 & .0008 & .9013 & .0008 & .0008 \\
				\bottomrule
			\end{tabular} }
		\vspace{-0.25cm}
	\end{table}
	
The data-driven method consistently achieves coverage near the nominal 0.90 target across all sample sizes, providing stable and reliable intervals. While fixed-Beta intervals may occasionally lie closer to 0.90 than the data-adaptively one, none of the fixed-Beta configuration maintains coverage near 0.90 across all sample sizes.  
This highlight the advantage of the data-driven  OT percentile method: it consistently targets the nominal level, providing balanced coverage without the risk of under or over-coverage.

	\begin{figure}[htb]
		\centering  \vspace{-0.35cm}
		\begin{subfigure}[t]{0.85\textwidth}
			\centering
			\caption{Distribution of $\widehat{M}$ for $n = 5$}
			\includegraphics[width=0.95\linewidth,height=10.0cm]{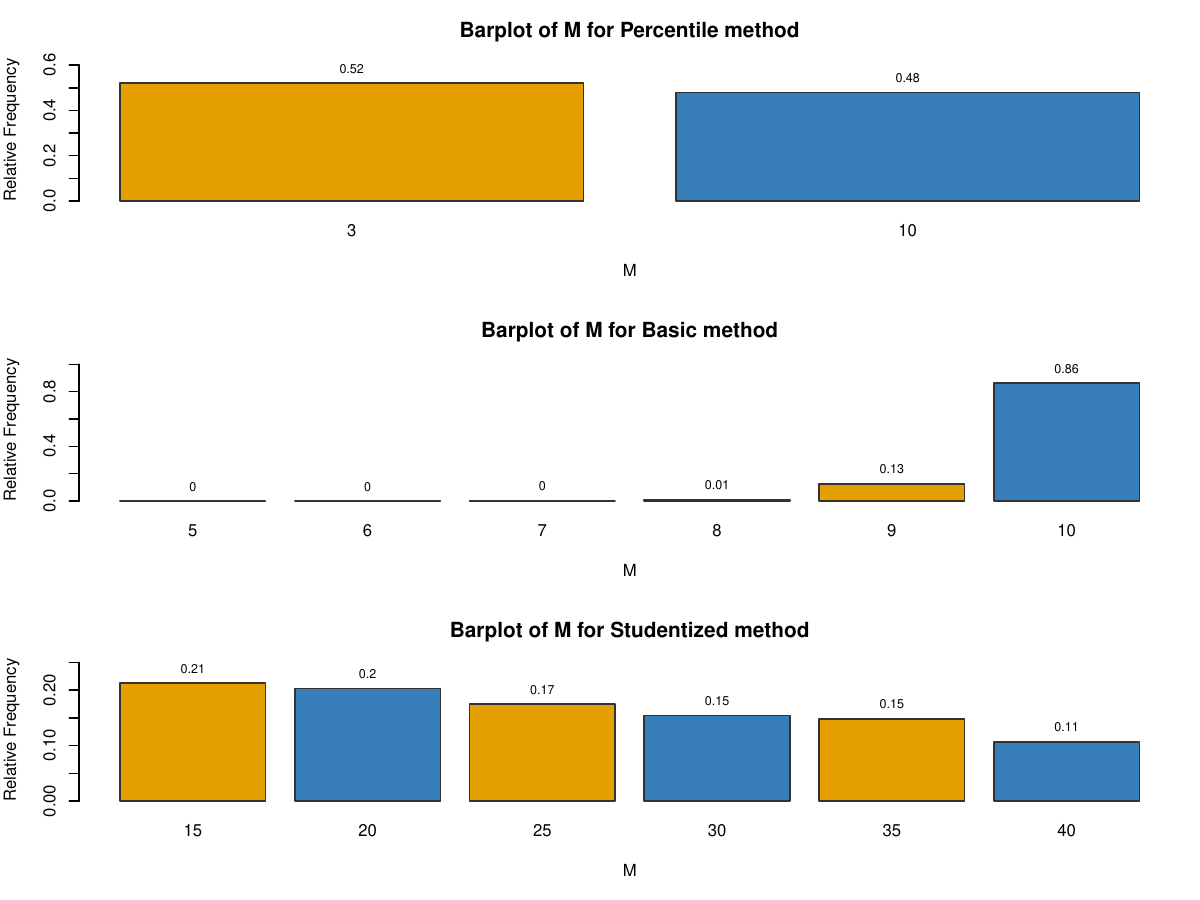}
			\vspace{-0.25cm}
		\end{subfigure}
		\vspace{0.35cm}
		\begin{subfigure}[t]{0.85\textwidth}
			\centering
			\caption{Distribution of $\widehat{M}$ for $n = 10$}
			\includegraphics[width=0.95\linewidth,height=10.0cm]{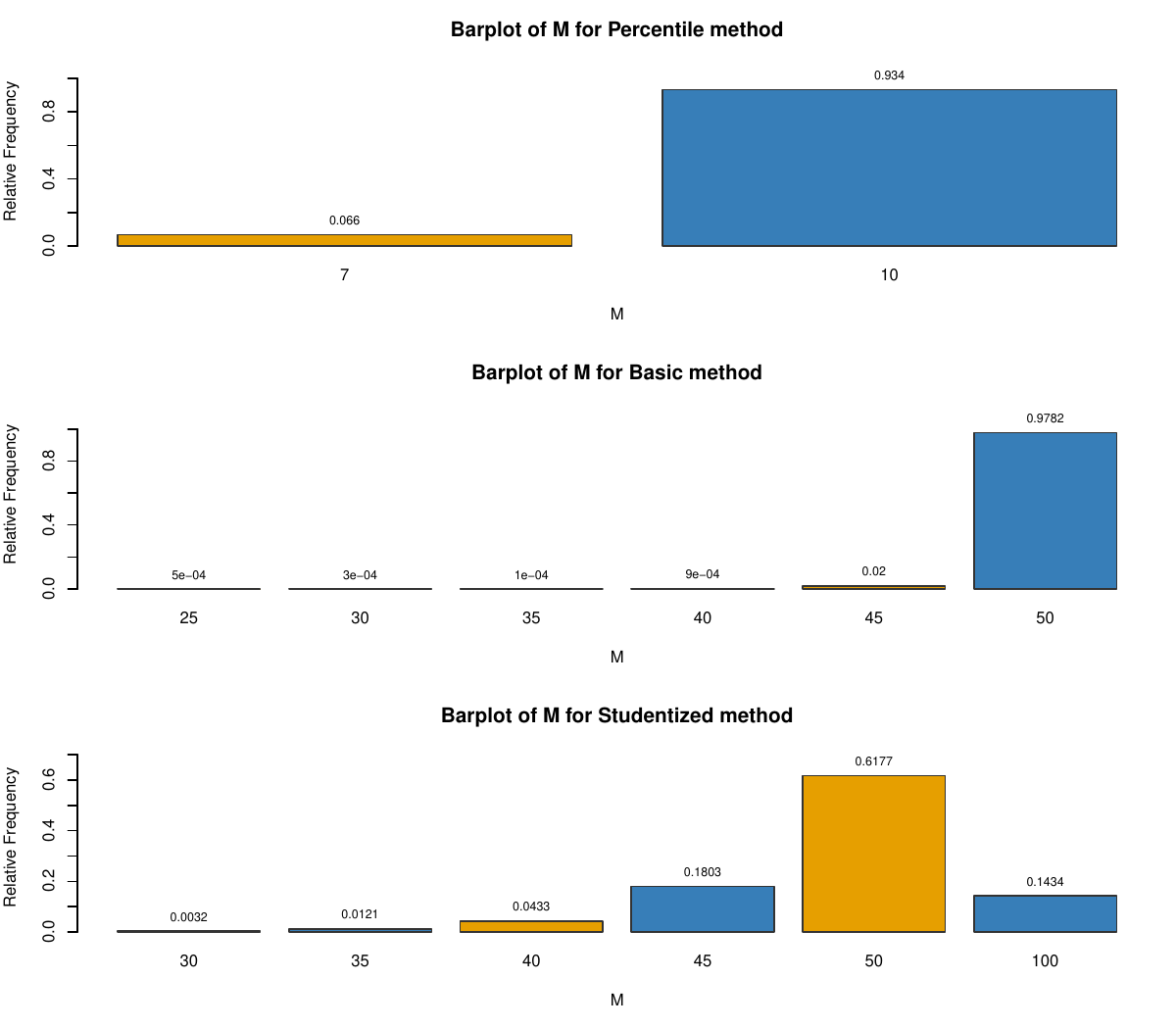}
			\vspace{-0.25cm}
		\end{subfigure}
		\vspace{-0.65cm}
		\caption{Comparison of the chosen number of atoms $\widehat{M}$ for $Q$ in different methods when constructing confidence intervals for the Mean.}
		\label{fig:all_methods_M}
		\vspace{-0.25cm}
	\end{figure}
	
Finally, in Figure \ref{fig:all_methods_M} we have a look at the distribution of the data-adaptively chosen number of support points $\widehat{M}$ for the source distribution $Q$. Here we see very nicely why the OT-based intervals with percentile bootstrap outperforms the others; the regularization effect is much stronger as you only chose between two values and both are small.

	\subsection{Additional Simulation Results for the Correlation Coefficient}\label{subsec:correlation}

Again, like for the simulations when inferring on the mean, we computed the median length and the left and right error probabilities of the confidence intervals, summarized in Table \ref{tab:cor-length-miss}. Here is no clear symmetry problem, and consequently less insight gained from these figures. As expected, the better coverage of OT-intervals is paid by longer intervals. 

\begin{table}[htb]
	\centering  
	\caption{Comparison of length and tail errors of the confidence intervals}
	\label{tab:cor-length-miss}  \vspace{-0.25cm}
	\begin{tabular}{lcccc}
		\toprule
		& \multicolumn{1}{c}{\textbf{OT}} & \multicolumn{1}{c}{\textbf{Bootstrap}}
		& \multicolumn{1}{c}{\textbf{OT}} & \multicolumn{1}{c}{\textbf{Bootstrap}} \\
		\cmidrule(lr){2-2} \cmidrule(lr){3-3} 	\cmidrule(lr){4-4} \cmidrule(lr){5-5}
		$n$ & \multicolumn{2}{c}{Median Length} & \multicolumn{2}{c}{Tail Errors} \\
		\midrule
		5 & 1.7209 & 1.6234 & (0.0533,0.0429) & (0.0719,0.0588) \\
		10 & 1.0803 & 0.9931 & (0.0483, 0.0485) & (0.0862, 0.0698) \\
		25 & 0.7475 & 0.6270 & (0.0426, 0.0315) & (0.0674, 0.0590) \\
		50 & 0.4756 & 0.4499 & (0.0498, 0.0485) & (0.0607, 0.0593) \\
		100 & 0.3364 & 0.3215 & (0.0466, 0.0512) & (0.0551, 0.0585) \\
		\bottomrule
	\end{tabular}
	\vspace{-0.3cm}
\end{table}

Finally, similar to above we can compare the performance of our data-driven method with OT-based intervals using a source distribution $Q$ with fixed Beta-parameters, see Table \ref{tab:beta_cor}. As expected, the intervals work well but worse than the data-adaptive ones.
		\begin{table}[htb]
		\centering  
		\caption{Comparison of Confidence Interval for fixed Beta parameters}
		\label{tab:beta_cor}
		\setlength{\tabcolsep}{2pt} 
		\resizebox{\textwidth}{!}{
			\begin{tabular}{l S S S S S S S S S}
				& \multicolumn{3}{c}{$(\eta, \zeta) = (2,5)$} 
				& \multicolumn{3}{c}{$(\eta, \zeta) = (5,2)$} 
				& \multicolumn{3}{c}{$(\eta, \zeta) = (5,5)$} \\ 
				\toprule
				n & \multicolumn{1}{c}{Coverage} & \multicolumn{1}{c}{Var} & \multicolumn{1}{c}{MSE} & \multicolumn{1}{c}{Coverage} & \multicolumn{1}{c}{Var} & \multicolumn{1}{c}{MSE} & \multicolumn{1}{c}{Coverage} & \multicolumn{1}{c}{Var} & \multicolumn{1}{c}{MSE}   \\    	\midrule
5   & .8843 & .0012 & .0014 & .8825 & .0012 & .0015 & .8708 & .0014 & .0023 \\
10  & .8951 & .0013 & .0013 & .8882 & .0013 & .0013 & .8765 & .0013 & .0018 \\
25  & .9089 & .0008 & .0008 & .9098 & .0008 & .0009 & .8957 & .0009 & .0009 \\
50  & .8846 & .0011 & .0013 & .8836 & .0012 & .0014 & .8831 & .0011 & .0014 \\
100 & .8904 & .0011 & .0012 & .8873 & .0011 & .0013 & .8903 & .0012 & .0012 \\
				\bottomrule
			\end{tabular} }
		\end{table}

\end{document}